\newcommand{\bc}{\begin{cases}}
\newcommand{\ec}{\end{cases}}
\newcommand{\Ab}[1]{\|#1\|}
\newcommand{\ab}[1]{|#1|}
\newcommand{\La}{\Lambda}
\numberwithin{equation}{section}
\newtheorem{lemma}{Lemma}[section]
\newtheorem{theorem}[lemma]{Theorem}
\newtheorem{rem}[lemma]{Remark}
\begin{document}
\title[Static solns to the E-V system with non-vanishing cosmological constant]{Static solutions to the Einstein-Vlasov system with non-vanishing cosmological constant }
\author[H.~Andr\'easson, D.~Fajman, M.~Thaller]{H\r{a}kan Andr\'easson, David Fajman, Maximilian Thaller}
\date{\today}

\subjclass{83C05,83C20,83C57}
\keywords{Einstein equations, Einstein-Vlasov system, static solutions, Schwarzschild-deSitter, Schwarzschild-Anti-deSitter, Black holes}
\maketitle
\vspace{-1cm}
\begin{abstract}
We construct spherically symmetric, static solutions to the Einstein-Vlasov system with non-vanishing cosmological constant $\Lambda$. The results are divided as follows. For small $\Lambda>0$ we show existence of globally regular solutions which coincide with the Schwarzschild-deSitter solution in the exterior of the matter sources. For $\Lambda<0$ we show via an energy estimate the existence of globally regular solutions which coincide with the Schwarzschild-Anti-deSitter solution in the exterior vacuum region. We also construct solutions with a Schwarzschild singularity at the center regardless of the sign of $\Lambda$. For all solutions considered, the energy density and the pressure components have bounded support.
Finally, we point out a straightforward method to obtain a large class of globally non-vacuum spacetimes with topologies $\mathbb R\times S^3$ and $\mathbb R\times S^2\times \mathbb R$ which arise from our solutions using the periodicity of the Schwarzschild-deSitter solution. A subclass of these solutions contains black holes of different masses.

\end{abstract}

\section{Introduction}
Schwarzschild's construction of a static explicit solution in 1915 was the first example of a solution to Einstein's field equations in general relativity~\cite{SC16}.
It has been found later that the class of static, spherically symmetric, asymptotically flat solutions to Einstein's equations in vacuum consists only of that element \cite{BUMA87} making it necessary to consider the non-vacuum field equations to construct further classes of spherically symmetric static spacetimes.
%
\subsection{Static solutions with Vlasov matter}
In this work we consider matter described as a collisionless gas. In astrophysics this model is used to study galaxies and globular clusters where the stars, or the galaxies, are the particles of the gas and where collisions between these are sufficiently rare to be neglected. The particles interact by the gravitational field which the particle ensemble creates collectively. Within the framework of general relativity the particle system is described by the Einstein-Vlasov system. The mathematical investigation of this system was initiated by Rein and Rendall in 1992 \cite{rr92} in the context of the Cauchy problem and shortly thereafter the same authors provided the first study of static, spherically symmetric solutions to this system \cite{static}. Since then, the Einstein-Vlasov system has been successfully studied in several contexts and many global results have been obtained during the last two decades. We refer to~\cite{abriss} for a review of these results but let us in particular mention the recent monumental work on this system concerning the stability of the universe~\cite{R13}.

 The purpose of the present work is to extend the class of static solutions to the Einstein-Vlasov system to the case with a non-vanishing cosmological constant $\Lambda$. Several results on static and stationary solutions to this system have been obtained in the case when $\Lambda=0$. The first result of this kind was provided in~\cite{static}, where the authors construct spherically symmetric isotropic static solutions with compactly supported energy density and pressure. The solutions are asymptotically flat and thus serve as models for isolated, self-gravitating systems. Several generalizations of this result have since then been obtained, in particular solutions with non-isotropic pressure, and solutions with a Schwarzschild singularity at the center, have been established, cf.~\cite{rein94,rr98}. An approac by variational methods was developed by Wolansky \cite{w01}. The most difficult part in these proofs is to show that the matter has compact support. A neat and quite general method to treat this problem has recently been obtained by Ramming and Rein in~\cite{rr12}. However, this method does not straightforwardly apply to the situation we consider in this work. The cosmological constant changes the structure of the equations and this implies that inequality (1.23) in~\cite{rr12}, on which this method is based, does not hold when $\Lambda\neq 0$. Hence, we rely on a different method in this work. The results discussed above all concern the spherically symmetric case. Let us point out that results beyond spherical symmetry have been established. The existence of stationary axially symmetric solutions to the Einstein-Vlasov system has recently been shown, cf.~\cite{ANKURE11} and~\cite{ANKURE14} for the non-rotating and the rotating case respectively. In this context we also mention a result on static solutions for elastic matter which has been obtained without any symmetry assumption~\cite{ANBESC08}.

\subsection{Static solutions with non-vanishing cosmological constant}
A specific class of solutions has so far not been discussed which concerns the Einstein equations with a non-vanishing cosmological constant $\La$. The model solutions for the vacuum equations are the Schwarzschild-deSitter and Schwarzschild-Anti-deSitter (Schwarzschild-AdS) solution for $\La>0$ and $\La<0$, respectively. Einstein's equations with non-vanishing $\Lambda$ are of significant physical interest, where the case $\La>0$ applies to a universe with accelerated expansion \cite{R13}, while the case $\La<0$ is relevant in the context of AdS-CFT correspondence \cite{hr00}.
Concerning the Einstein-Vlasov system no existence results for the static Einstein equations with non-vanishing cosmological constant are known. The aim of the present paper is to prove existence of spherically symmetric static solutions to the Einstein-Vlasov system with small positive or arbitrary negative cosmological constant. The solutions we construct are in general anisotropic. The results provided in this work are as follows.
%
\subsubsection{Globally regular solutions for $0<\La\ll1$} We construct globally regular static solutions for small $\La>0$.
The fundamental difference to the case of vanishing cosmological constant is that for large radii the metric tends towards a cosmological horizon and it is thus necessary to show that the support of the matter quantities vanishes before the cosmological horizon is reached. We show that for small $\La>0$ the solutions we construct are close to the solutions corresponding to the $\La=0$ case for which the matter quantities have compact support, and in addition, the latter solutions obey a Buchdahl type inequality. These facts imply that the support of the matter quantities can be controlled also in the case when $\Lambda>0$. It is then possible to continue the solution from the vacuum region by a Schwarzschild-deSitter solution. This method yields a large class of globally regular solutions which coincide with a Schwarzschild-deSitter solution outside a compact set. The result is given in Theorem \ref{main-thm}.
\subsubsection{Globally regular solutions for $\La<0$}
The case of negative cosmological constant is a priori simpler since the cosmological term has a good sign which yields a monotonically decreasing behavior of the lapse function. An energy argument following the general idea of \cite{rein94} is used to establish global in $r$ existence yielding globally regular solutions for general $\La<0$.
The result is given in Theorem \ref{theo_glo_adS}.

\subsubsection{Solutions with a Schwarzschild singularity for $0<\La\ll1$}
To construct solutions with singularities in the center, we start with the vacuum equations which can be solved explicitly by the Schwarzschild-deSitter solution. This solution is considered until a radius which allows to continue the vacuum solution by one which at the same point satisfies the ansatz for the distribution function and eventually merges into a non-vacuum region. It is shown that the support of the matter quantities is compact and outside the matter region the solution can again be extended by a vacuum solution with mass parameter corresponding to the interior mass of black hole and matter. As in the non-singular case these constructions only work out for sufficiently small $\La>0$. The result is given in Theorem \ref{theo_la_pos_bh}.

\subsubsection{Solutions with a Schwarzschild singularity for $\La<0$}
This point is similar to the case $\La>0$ with Schwarzschild singularities. In particular
a smallness condition for $\ab{\La}$ is needed as well. The result is given in Theorem
\ref{theo_la_neg_bh}.
\subsubsection{Solutions with topologies $\mathbb R\times S^3$ and $\mathbb R\times S^2\times\mathbb R$}
A significant generalization of the results with $\La>0$ is presented in the final section. The periodic structure of the Schwarzschild-deSitter space \cite{GiHa77} allows us to consider solutions with regular massive center, and solutions with central black holes, and glue them to a periodic Schwarzschild-deSitter solution with a black hole region followed by another matter region - forming a static space-time with two non-vacuum ends and a black hole (or several) in between. The result is given in Theorem \ref{thm-top}.

\subsection{Outline of the paper}
This paper is organized as follows. In Section \ref{sec : prel} we introduce the notation and give a short review on the static Einstein-Vlasov system in spherical symmetry. We discuss the anisotropic ansatz for the distribution function, variations of which are used in this work. A Buchdahl type inequality, which applies to solutions of the Einstein-Vlasov system, is then reviewed shortly as it is used later in the existence proof for $\La>0$. The Einstein-Vlasov system in spherical symmetry with a specific ansatz for the distribution function reduces to an integro-differential equation given in (\ref{eqsimy}). This equation lies at the heart of the analysis in the paper. In Section \ref{sec : sagr} we prove existence of globally regular solutions for small $\La>0$. The proof is divided into several steps beginning with local in $r$ existence in \ref{ssec : sagr-1}, a continuation criterion in \ref{ssec : sagr-2}, existence for sufficiently large radii to reach a vacuum region in \ref{secsupport} and finally the proof of the existence Theorem in \ref{ssec : sagr-4}. In Section \ref{sec : sage neg} the existence of globally regular solutions for arbitrary $\La<0$ is proven along with a result (cf.~Theorem \ref{theo_glo_adS}) which states existence for such solutions outside a ball, which eventually is used to prove existence of solutions with Schwarzschild singularities in the center. Section \ref{sec : swss} begins with a generalization of the Buchdahl type inequality, mentioned above, for solutions with Schwarzschild singularities. This result is useful for the construction of solutions of this kind when $\La>0$. These solutions are obtained in Theorem \ref{theo_la_pos_bh}. Analogous solutions for the case of negative $\La$ are given in Theorem \ref{theo_la_neg_bh}. Finally, Section \ref{sec : glob sol} discusses the globally non-trivial generalizations of the constructed solutions for $\La>0$.

\begin{center}
\textbf{Acknowledgements}
\end{center}
D.F.~and M.T.~are grateful to Walter Simon and Bobby Beig for several discussions on static solutions. We thank Piotr Chrus\'ciel for the suggestion to study the case of negative cosmological constants. We are indebted to Greg Galloway for sharing his ideas and suggestions concerning the global solutions which we present in the last chapter of this work. We thank Christa \"Olz for helpful discussions. D.F.~thanks Amos Ori and Gershon Wolansky for interesting discussions.


\section{Preliminaries}\label{sec : prel}
\subsection{Setup and notations}
We consider the Einstein-Vlasov system with cosmological constant $\Lambda\in\mathbb R$. For background on this system we refer to~\cite{abriss}. For the spherically symmetric, static Lorentzian metric $g$ we use the standard ansatz
\begin{equation}
\mathrm ds^2 = -e^{2\mu(r)} \mathrm dt^2 + e^{2\lambda(r)}\mathrm dr^2 + r^2\mathrm d\vartheta^2 + r^2\sin^2(\vartheta)\mathrm d\varphi^2.
\end{equation}
Assuming in addition the matter distribution function $f$ to be spherically symmetric and static one obtains the reduced system of equations
\begin{eqnarray} \label{eqvlasov}
\frac{v^a}{\sqrt{1+|v|^2}}\frac{\partial f}{\partial {x^a}}-\sqrt{1+|v|^2}\mu'\frac{x^a}{r} \frac{\partial f}{\partial {v^a}} &=& 0,\\
e^{-2\lambda}(2r\lambda'-1)+1-r^2\Lambda &=& \kappa r^2\varrho, \label{eeq1}\\
e^{-2\lambda}(2r\mu'+1)-1+r^2\Lambda &=& \kappa r^2 p, \label{eeq2}
\end{eqnarray}
where $\kappa = 8\pi$, $|v|=\sqrt{\delta_{ij}v^iv^j}$, $v_r=\frac{\delta_{ij}v^ix^j}{r}$ and the matter quantities read
\begin{eqnarray}
\varrho &=&\int_{\mathbb R^3} f(x,v)\sqrt{1+|v|^2}\; \mathrm dv^1\mathrm dv^2\mathrm dv^3,\label{rho} \\
p &=& \int_{\mathbb R^3}\frac{f(x,v)}{\sqrt{1+|v|^2}}v_r^2\;\mathrm dv^1\mathrm dv^2\mathrm dv^3.\label{p}
\end{eqnarray}
There is an additional Einstein equation
\begin{equation}
e^{-2\lambda}\left(\mu''\left(\mu+\frac 1 r\right)\left(\mu'-\lambda'\right)\right)=\kappa p_T, \label{eeq3}
\end{equation}
where
\begin{equation}\label{pt}
p_T = \frac 1 2 \int_{\mathbb R^3}\left|\frac{x \times v}{r}\right|^2 f(x,v)\frac{\mathrm dv}{\sqrt{1+|v|^2}}.
\end{equation}
The quantity $\varrho$ can be understood as energy density, $p$ as radial pressure and $p_T$ as tangential pressure.
To ensure a regular center the following boundary condition is imposed
\begin{equation}\label{boundary1}
\lambda(0)=0.
\end{equation}
This condition will be used in the first part of this work but when we consider solutions with a Schwarzschild singularity at the center it will be dropped.
A detailed derivation of the system (\ref{eqvlasov})-(\ref{pt}) in the $\Lambda=0$ case can be found in \cite{rr92}.
It will be seen below that a solution of the reduced system (\ref{eqvlasov})-(\ref{p}) also solves the full system.
Considering the characteristic curves of the Vlasov equation \eqref{eqvlasov} one can simplify the system of equations. Along these characteristic curves the quantities $E$ and $L$, given by
\begin{equation}
E = e^{\mu(r)}\sqrt{1+|v|^2} =: e^{\mu(r)}\varepsilon \quad \mathrm{and} \quad L=| x \times v |^2,
\end{equation}
are conserved (cf.~\cite{static}). Therefore any ansatz for the matter distribution $f$ of the form
\begin{equation}
f(x,v) = \Phi(E,L)
\end{equation}
solves the Vlasov equation \eqref{eqvlasov}, and this equation drops out from the system of equations.
\subsection{Relevant results}
In the following we discuss the known results for the Einstein-Vlasov system with vanishing cosmological constant, $\La=0$, which are relevant for the work presented in this paper. The existence of a unique solution $\mu(r)$, $\lambda(r)$ to given initial values $\mu(0)=\mu_0$ and $\lambda(0) = 0$ has been proved using the ansatz
\begin{equation} \label{ansfg}
f(x,v) = \Phi(E)[L-L_0]_+^\ell,
\end{equation}
where $E>0$, $L>0$, $L_0\geq 0$, $\ell>-\frac 12$, $\Phi\in L^\infty((0,\infty))$ for the matter distribution $f$, cf.~\cite{rein94}.
Furthermore, it can be shown that the support of the matter quantities is contained in an interval $[0,R_0]$, $0<R_0<\infty$, if one takes a so called {\em polytropic} ansatz for $f$. This ansatz has the form
\begin{equation} \label{ansfvs}
f(x,v) = \phi\left(1-\frac{E}{E_0}\right)L^\ell,
\end{equation}
where $\phi:\mathbb R\to[0,\infty)$ is measurable, $\phi(\eta)=0$ for $\eta < 0$, and $\phi > 0$ a.e.~on some interval $[0,\eta_1]$ with $\eta_1 > 0$ and $E_0$ is some prescribed cut-off energy, cf.~\cite{rr12}. Moreover, it is required that there exists $\gamma > -1$ such that for every compact set $K \subset \mathbb R$ there exists a constant $C>0$ such that
\begin{equation}
\phi(\eta) \leq C\eta^\gamma,\quad\eta\in K.
\end{equation} In \cite{rein98} this result is generalized to anisotropic matter distributions of the form
\begin{equation}
f(x,v)=c_0[E_0-E]_+^k[L-L_0]_+^\ell,
\end{equation}
where $k\geq0$, $\ell>-1/2$ fulfill the inequality $k < 3\ell+7/2$ and $c_0,E_0 > 0$, $L_0\geq 0$. It is shown in \cite{rein98} that for sufficiently small $L_0$ the support of $f$ is contained in an interval $[R_i,R_0]$ where $0\leq R_i < R_0 < \infty$ and $R_i > 0$ provided $L_0>0$.
\par
By direct calculation one shows that the matter quantities fulfill the generalized Tolman-Oppenheimer-Volkov equation (TOV equation)
\begin{equation} \label{toveq}
p'(r) = -\mu'(r) (p(r)+\varrho(r)) - \frac 2 r (p(r) - p_T(r)).
\end{equation}
Another result relevant for the proof presented here is a generalized Buchdahl inequality \cite{and08}, which is the content of the following lemma.
\begin{lemma}[Theorem 1 in \cite{and08}]\label{theobuchdahl}
Let $\lambda,\mu\in C^1([0,\infty))$ and let $\varrho, p, p_T\in C^0([0,\infty))$ be functions that satisfy the system of equations (\ref{eeq1})-(\ref{eeq3}), the condition (\ref{boundary1}) and such that $p+2p_T \leq \varrho$.
Then
\begin{equation}\label{AB}
\sup_{r>0}\frac{2m(r)}{r} \leq \frac89,
\end{equation}
where
\begin{equation}
m(r) = 4\pi\int_0^rs^2\varrho(s)\mathrm ds.
\end{equation}
\end{lemma}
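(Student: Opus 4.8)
This is Theorem~1 of \cite{and08}, so I would follow the structure of that argument; I sketch it. Set $\Lambda=0$ (the value relevant here, and the one in \cite{and08}). Integrating (\ref{eeq1}) together with the regular-center condition (\ref{boundary1}) gives $e^{-2\lambda(r)}=1-2m(r)/r$, so $e^{-2\lambda}>0$ forces $2m(r)/r<1$ for every $r$, and regularity at the center forces $\mu'(0)=0$; from (\ref{eeq2}) one then gets $\mu'(r)=(m(r)+4\pi r^3 p(r))/(r(r-2m(r)))$. The plan is to eliminate $\mu''$ and $\lambda'$ from (\ref{eeq3}) using (\ref{eeq1})--(\ref{eeq2}), obtaining a second-order differential relation for $y:=e^{\mu}$, and then to run a Buchdahl-type comparison argument on it.

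The key step is to exhibit an auxiliary function $Y(r)$, built algebraically from $e^{\mu}$, $e^{-\lambda}$ and $r$, that is monotone along the solution precisely because $p+2p_T\le\varrho$. In Buchdahl's classical isotropic, density-decreasing case the right choice is $Y=e^{-\lambda}(e^{\mu})'/r$, for which substituting the field equations yields $Y'=\tfrac{e^{-\lambda}e^{\mu}}{r}\cdot\tfrac{4\pi r^3(\varrho-\bar\varrho)}{r-2m}$ with $\bar\varrho$ the mean density, hence $Y'\le0$. In the anisotropic setting of the lemma this $Y$ no longer has a sign-definite derivative, and identifying the correct replacement is the technical heart of \cite{and08}: one wants an $Y$ for which, after using (\ref{eeq1})--(\ref{eeq3}), the matter enters $Y'$ only through the sign-definite combination $\varrho-p-2p_T\ge0$ (equivalently, through the sign of the scalar curvature) together with manifestly nonnegative geometric terms, giving monotonicity of $Y$.

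Granting the monotone $Y$, I would finish as follows. At the center $Y(0)>0$ is a consequence of the regularity ($e^{-\lambda(0)}=1$, $\mu'(0)=0$), so monotonicity of $Y$ on $(0,r]$, combined with $\varrho\ge0$ (hence $m\ge0$) and the no-horizon bound $2m/r<1$, produces an inequality bounding $e^{\mu(0)}$ from above in terms of $e^{-\lambda(r)}$ and $2m(r)/r$; since $e^{\mu(0)}>0$ this can hold only if $2m(r)/r$ stays below a definite threshold, and a sharp optimization over the admissible density/pressure profiles --- the extremum being a thin shell saturating $p+2p_T=\varrho$ --- pins that threshold at $8/9$. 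As $r$ was arbitrary, $\sup_{r>0}2m(r)/r\le 8/9$. The main obstacle is the middle step: producing the auxiliary function whose monotonicity is governed purely by $\varrho-p-2p_T\ge0$ once anisotropy is allowed (the classical choice fails), and then extracting the sharp constant $8/9$ rather than the trivial bound $2m/r<1$.
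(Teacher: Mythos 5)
First, a remark on what you are being compared against: the paper does not prove Lemma \ref{theobuchdahl} at all --- it quotes it from \cite{and08} --- but it does spell out a proof of the generalization Lemma \ref{buchdahlbh} (the case with a central Schwarzschild singularity), following \cite{ks08} and \cite{a09}, and that is the version of the argument one should measure your sketch against. Your setup is fine: $e^{-2\lambda}=1-2m/r$ from \eqref{eeq1} and \eqref{boundary1}, the formula $\mu'=(m+4\pi r^3p)/(r(r-2m))$ from \eqref{eeq2}, and the correct observation that everything must hinge on a quantity whose monotonicity is driven by the sign of $\varrho-p-2p_T$. The problem is that you then declare the construction of that quantity to be ``the technical heart'' and leave it open. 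That construction \emph{is} the theorem: without it nothing distinguishes the sharp constant $8/9$ from the trivial no-horizon bound $2m/r<1$, and your closing paragraph (evaluate at the center, demand $e^{\mu(0)}>0$, optimize over profiles) is a description of the classical isotropic Buchdahl finish rather than a derivation. So as written the proposal has a genuine gap at its central step.

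For the record, the route the paper actually writes down (for Lemma \ref{buchdahlbh}, and which specializes to the regular case) closes this gap as follows. Introduce $x=2m/r$, $y=8\pi r^2p$, parameterized by $\beta=2\ln r$. The field equations \eqref{eeq1}--\eqref{eeq3} together with the TOV equation \eqref{toveq} give $8\pi r^2\varrho=2\dot x+x$, $8\pi r^2p=y$ and $8\pi r^2p_T=\frac{x+y}{2(1-x)}\dot x+\dot y+\frac{(x+y)^2}{4(1-x)}$, so the hypothesis $p+2p_T\le\varrho$ becomes the differential inequality $(3x-2+y)\dot x+2(1-x)\dot y\le-\frac{1}{2}\left(3x^2-2x+y^2+2y\right)$. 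The explicit function $w(x,y)=\left(3(1-x)+1+y\right)^2/(1-x)$ equals $16$ at the regular center ($x=y=0$), and the differential inequality forces $w\le16$ along the solution curve; since $y\ge0$, the bound $w\le16$ reduces to $(4-3x)^2\le16(1-x)$, i.e.\ $x\le8/9$. Note in particular that the constant $8/9$ drops out of this elementary algebra once $w\le16$ is known --- the thin-shell configurations saturating $p+2p_T=\varrho$ are relevant only to the sharpness discussion in the remark following the lemma, not to obtaining the bound itself.
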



\begin{rem}
The inequality (\ref{AB}) holds for a more general class of functions, cf.~\cite{and08}. Moreover, the inequality is sharp, and the solutions which saturate the inequality are infinitely thin shell solutions, cf.~\cite{and08}. In~\cite{and07} it is shown that there exist regular, arbitrarily thin, shell solutions to the Einstein-Vlasov system such that the quantity $2m/r$ can be arbitrarily close to $8/9$. It should also be mentioned that Buchdahl type inequalities have been obtained in the case of non-vanishing cosmological constant, cf.~\cite{ab,abm}. These results {\em assume} the existence of static solutions to the Einstein-matter equations with a cosmological constant.
\end{rem}
To prove existence of solutions of the static Einstein-Vlasov system with non-vanishing $\Lambda $ we make use of the results discussed above. To simplify calculations we define $y := \ln(E_0)-\mu$ as in \cite{rr12} so that $e^\mu=E_0/e^y$. For the distribution function $f$ we choose the ansatz\footnote{To be precise any $\phi$ that is of the kind of the $\phi$ in \eqref{ansfvs} would meet the assumptions of the following lemmas and theorems.}
\begin{equation}\label{ouransf}
\begin{aligned}
f(x,v) &= \Phi(E,L) = c_0\phi\left(1-\frac{E}{E_0}\right)[L-L_0]_+^\ell\\
& = c_0\phi\left(1-\varepsilon e^{-y}\right)[L-L_0]_+^\ell, \\
 \phi(\eta) &=[\eta]_+^k,
\end{aligned}
\end{equation}
where $k\geq0$, $\ell>-1/2$ fulfill the inequality $k < 3\ell+7/2$ and $c_0,E_0 > 0$, $L_0\geq 0$.
For the construction of globally regular solutions $L_0$ has to be sufficiently small to ensure finite support of the matter quantities \cite{rein98}. When considering solutions with a black hole at the center, there are positive lower bounds on $L_0$. The expressions for the matter quantities $\varrho$ and $p$ take the form
\begin{equation}
\varrho(r) = G_\phi(r,y(r)), \qquad p(r) = H_\phi(r,y(r)),
\end{equation}
where
\begin{align}
G_\phi(r,y) &= c_\ell c_0 r^{2\ell} \int_{\sqrt{1+L_0/r^2}}^\infty \phi\left(1-\varepsilon e^{-y}\right) \varepsilon^2 \left(\varepsilon^2-\left(1+\frac{L_0}{r^2}\right)\right)^{\ell+\frac{1}{2}}\mathrm d\varepsilon, \label{defgphi}\\
H_\phi(r,y) &= \frac{c_\ell c_0}{2\ell+3} r^{2\ell} \int_{\sqrt{1+L_0/r^2}}^\infty \phi\left(1-\varepsilon e^{-y}\right) \left(\varepsilon^2-\left(1+\frac{L_0}{r^2}\right)\right)^{\ell+\frac{3}{2}}\mathrm d\varepsilon, \label{defhphi}
\end{align}
given in \cite{rein94}.
The constant $c_\ell$ is given by
\begin{equation}
c_\ell=2\pi\int_0^1 \frac{s^\ell}{\sqrt{1-s}} \mathrm ds.
\end{equation}
\begin{lemma} \label{lemmagh}
The functions $G_\phi(r,y)$ and $H_\phi(r,y)$ defined in \eqref{defgphi} and \eqref{defhphi}, respectively, have the following properties.
\global\long\def\theenumi{\roman{enumi}}
\begin{enumerate}
\item $G_\phi(r,y)$ and $H_\phi(r,y)$ are continuously differentiable in $r$ and $y$. \label{ghdiff}
\item The functions $G_\phi(r,y)$ and $H_\phi(r,y)$ and the partial derivatives $\partial_y G_\phi(r,y)$ and $\partial_y H_\phi(r,y)$ are increasing both in $r$ and $y$. \label{ghinc} 
\item There is vacuum, i.e.~$f(r,\cdot)=p(r)=\varrho(r)=0$ if $e^{-y(r)}\sqrt{1+L_0/r^2} \geq 1$, in particular if $y(r) \leq 0$. \label{ghvac}
\end{enumerate}
\end{lemma}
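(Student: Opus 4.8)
The plan is to verify each of the three claims directly from the integral representations \eqref{defgphi} and \eqref{defhphi}, exploiting the explicit form $\phi(\eta)=[\eta]_+^k$ with $k\geq 0$. Throughout I write $w(r) := 1 + L_0/r^2$, so that $G_\phi$ and $H_\phi$ are integrals over $\varepsilon \in [\sqrt{w(r)},\infty)$ of $\phi(1-\varepsilon e^{-y})$ times a nonnegative power of $\varepsilon^2 - w(r)$, possibly multiplied by $\varepsilon^2$ and by the overall prefactor $r^{2\ell}$. The integrand is supported where $1 - \varepsilon e^{-y} \geq 0$, i.e.\ where $\varepsilon \leq e^{y}$; hence the effective domain of integration is the compact interval $[\sqrt{w(r)}, e^{y}]$, which is nonempty precisely when $e^{-y}\sqrt{w(r)} < 1$.

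For \eqref{ghdiff}, I would note that the integrand $F(r,y,\varepsilon)$ is jointly continuous in $(r,y,\varepsilon)$ and, since $k\geq 0$ and $\ell > -1/2$, is dominated locally uniformly by an integrable function on the (compact) $\varepsilon$-range; differentiation under the integral sign is therefore justified. The only subtlety is the lower endpoint $\sqrt{w(r)}$, which depends on $r$: applying the Leibniz rule produces a boundary term proportional to $(\varepsilon^2 - w(r))^{\ell+1/2}$ evaluated at $\varepsilon = \sqrt{w(r)}$, which vanishes because $\ell + 1/2 > 0$. A parallel remark handles the factor $(\varepsilon^2-w(r))^{\ell+3/2}$ in $H_\phi$. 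So both functions are $C^1$ in $(r,y)$, and the same argument applied once more shows $\partial_y G_\phi$, $\partial_y H_\phi$ are continuous.

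For \eqref{ghinc}, monotonicity in $y$ is immediate: $\phi(1-\varepsilon e^{-y}) = [1-\varepsilon e^{-y}]_+^k$ is nondecreasing in $y$ for each fixed $\varepsilon$ (as $-\varepsilon e^{-y}$ increases with $y$), and everything else in the integrand is $y$-independent and nonnegative; hence $\partial_y G_\phi \geq 0$, and likewise for $H_\phi$. Differentiating once more in $y$, $\partial_y^2$ hits $[1-\varepsilon e^{-y}]_+^k$, whose $y$-derivative is $k[1-\varepsilon e^{-y}]_+^{k-1}\varepsilon e^{-y} \geq 0$ and is itself nondecreasing in $y$ for fixed $\varepsilon$; this gives monotonicity of $\partial_y G_\phi$ and $\partial_y H_\phi$ in $y$. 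Monotonicity in $r$ is the genuinely delicate point and is where I expect the main work: the prefactor $r^{2\ell}$ is increasing only when $\ell \geq 0$, the factor $(\varepsilon^2 - w(r))^{\ell+1/2}$ increases in $r$ (since $w(r)$ decreases), but the lower limit $\sqrt{w(r)}$ also moves. The clean way around this is the standard change of variables used in \cite{rein94}: substitute so that the $r$-dependence is pulled entirely into the integrand over a fixed domain. Concretely, rescaling $\varepsilon^2 = w(r)\,(1+s)$ or writing the integral in terms of the variable $u$ with $\varepsilon^2 - w(r) = u$ converts \eqref{defgphi} into $c_\ell c_0 r^{2\ell}\int_0^\infty \phi\big(1 - e^{-y}\sqrt{u+w(r)}\big)(u+w(r))^{1/2} u^{\ell+1/2}\,\mathrm du$ (and analogously for $H_\phi$, dropping the $(u+w(r))^{1/2}$ factor); now differentiate in $r$. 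Since $w(r) = 1 + L_0 r^{-2}$ satisfies $w'(r) = -2L_0 r^{-3} \leq 0$, each factor $(u+w(r))^{1/2}$, $\phi(1-e^{-y}\sqrt{u+w(r)})$ is nondecreasing in $r$, and the prefactor $r^{2\ell}$ is nondecreasing because $\ell \geq 0$; being a product of nonnegative nondecreasing functions integrated against the fixed nonnegative measure $u^{\ell+1/2}\mathrm du$, the result is nondecreasing in $r$. The same computation applied to $\partial_y G_\phi$, whose integrand is $k[1-e^{-y}\sqrt{u+w(r)}]_+^{k-1} e^{-y}\sqrt{u+w(r)}\cdot(u+w(r))^{1/2} u^{\ell+1/2}$ — again a product of nonnegative nondecreasing (in $r$) factors — yields monotonicity of $\partial_y G_\phi$ in $r$, and likewise for $\partial_y H_\phi$.

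Finally, \eqref{ghvac} is read off directly: if $e^{-y(r)}\sqrt{1+L_0/r^2} \geq 1$ then for every $\varepsilon \geq \sqrt{1+L_0/r^2}$ in the range of integration we have $\varepsilon e^{-y} \geq 1$, so $1 - \varepsilon e^{-y} \leq 0$ and $\phi(1-\varepsilon e^{-y}) = 0$; hence the integrands in \eqref{defgphi}, \eqref{defhphi} vanish identically and $\varrho(r) = p(r) = 0$, while $f(r,\cdot) = 0$ follows the same way from \eqref{ouransf}. In particular $y(r) \leq 0$ forces $e^{-y(r)} \geq 1 \geq (1+L_0/r^2)^{-1/2}$, giving the stated special case. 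This completes the proof.
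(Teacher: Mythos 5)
Your overall strategy --- reading everything off the integral representations using $\phi(\eta)=[\eta]_+^k$ --- is the same as the paper's (whose proof is a three-sentence sketch citing Lemma 3.1 of \cite{rein94} for differentiability), and your treatments of (i) and (iii) are essentially fine, modulo the remark that for $k=0$ the $y$-derivative must come from the moving effective upper limit $\varepsilon=e^{y}$ rather than from differentiating the (then discontinuous) integrand. However, two steps in your argument for (ii) are false as written. First, to show that $\partial_y G_\phi$ and $\partial_y H_\phi$ are increasing in $y$ you claim that $k[1-\varepsilon e^{-y}]_+^{k-1}\varepsilon e^{-y}$ is nondecreasing in $y$ for fixed $\varepsilon$. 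Setting $u=\varepsilon e^{-y}$ this is $k(1-u)^{k-1}u$, whose $u$-derivative is $k(1-u)^{k-2}(1-ku)$; wherever $ku<1$ it is increasing in $u$ and hence \emph{decreasing} in $y$ (e.g.\ for $k=2$ the value $2(1-u)u$ drops from $1/2$ to $3/8$ as $u$ decreases from $1/2$ to $1/4$). The conclusion is still true, but no pointwise argument in the variable $\varepsilon$ can give it; the clean route is the substitution $\varepsilon=e^{y}s$, which turns $\partial_yG_\phi$ into
$c_\ell c_0\,k\,r^{2\ell}e^{3y}\int_{e^{-y}\sqrt{1+L_0/r^2}}^{1}(1-s)^{k-1}s^{3}\bigl(e^{2y}s^{2}-1-L_0/r^2\bigr)^{\ell+1/2}\,\mathrm ds$,
where now the prefactor, every factor of the integrand, and the domain of integration are manifestly nondecreasing in $y$ (and, apart from $r^{2\ell}$, in $r$).

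Second, in your change of variables $u=\varepsilon^{2}-w(r)$ you assert that the factor $(u+w(r))^{1/2}$ is nondecreasing in $r$; since $w(r)=1+L_0/r^{2}$ is decreasing in $r$, this factor is \emph{non}increasing, and the ``product of nondecreasing factors'' argument collapses. The substitution is in fact counterproductive here: in the original variable the only $r$-dependence of the integrand is through $(\varepsilon^{2}-w(r))^{\ell+1/2}$, which is nondecreasing in $r$ because $\ell+1/2>0$, and the domain $[\sqrt{w(r)},\infty)$ only grows with $r$, so the integral is monotone in $r$ with no substitution at all. Finally, you correctly flag that the prefactor $r^{2\ell}$ requires $\ell\geq0$, but then you simply assume it; the standing hypothesis is only $\ell>-1/2$, and for $-1/2<\ell<0$ with $L_0=0$ one has $G_\phi(r,y)=r^{2\ell}\cdot g(y)$, which is strictly decreasing in $r$, so the $r$-monotonicity genuinely fails there. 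That last point is a gap in the lemma as stated (the paper's ``seen directly from the structure'' glosses over it as well), but your write-up should state the restriction $\ell\geq 0$ explicitly rather than absorb it silently.
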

\begin{proof}
By performing a change of variables in the integrals in \eqref{defgphi} and \eqref{defhphi} the differentiability follows, cf.~\cite{rein94}, Lemma 3.1. The monotonicity can be seen directly from the structure of $G_\phi$ and $H_\phi$. The last statement is obvious since $\phi(\eta)=0$ if $\eta\leq 0$.
\end{proof}
\subsection{Main equation}
From the Einstein equations (\ref{eeq1}) and (\ref{eeq2}) one obtains the differential equation for $y$
\begin{equation} \label{eqsimy}
\begin{aligned}
y'(r) &= -\frac{\kappa/2}{1-\frac{\Lambda r^2}{3}-\frac{\kappa}{r}\int_0^r s^2 G_\phi(s,y(s))\mathrm ds}\\
&\qquad\qquad\qquad\qquad\qquad \times\left(r H_\phi(r,y(r))- \frac{2r\Lambda}{3\kappa}+\frac{1}{r^2}\int_0^r s^2 G_\phi(s,y(s))\mathrm ds\right).
\end{aligned}
\end{equation}
A solution to (\ref{eqsimy}) yields a solution to the system (\ref{eqvlasov}-\ref{p}). It should however be pointed out that in order to obtain an asymptotically flat solution one needs to redefine $E_0$ and $\mu$ as follows. Given an initial value $y_0$, a solution $y$ of equation (\ref{eqsimy}) is obtained having a limit $y(\infty)$. By letting $E_0:=1/y(\infty)$ and $e^{\mu}:=E_0y(r)$ we get a solution with the proper boundary condition at infinity. Furthermore it should be mentioned that a solution to the system (\ref{eqvlasov}-\ref{p}) provides a solution to all the Einstein equations. This is shown in Theorem 2.1 in \cite{rr92} in the case when $\La=0$. The proof is analogous in the case with non-vanishing $\La$. The equation (\ref{eqsimy}) is analyzed and solved in the remainder of this work.

\section{Static, anisotropic globally regular solutions for $\La>0$} \label{sec : sagr}
In this section we prove existence of globally regular static solutions with small $\La>0$.
\subsection{Local existence} \label{ssec : sagr-1}
The following local existence lemma corresponds to the first part of the proof of Theorem 2.2 in \cite{static} for the case $\La=0$.
\begin{lemma} \label{lemloc}
Let $\Phi:\mathbb R^2\to[0,\infty)$ be of the form \eqref{ouransf} and let $G_\phi$, $H_\phi$ be defined by equations (\ref{defgphi}) and (\ref{defhphi}), respectively. Then for every $y_0 \in \mathbb R$ and every $\Lambda >0$ there is a $\delta>0$ such that there exists a unique solution $y_\Lambda \in C^2([0,\delta])$ of equation \eqref{eqsimy} with initial value $y_\Lambda(0)=y_0$.
\end{lemma}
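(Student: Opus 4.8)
The right-hand side of \eqref{eqsimy} is not of the form to which the Picard--Lindel\"of theorem applies directly: it carries a factor $1/r$ in the denominator and a factor $1/r^2$ in the numerator, so $r=0$ is a singular point of the equation, and this is the main obstacle. The plan is to recast the initial value problem as a fixed point equation for an integral operator, to check that these apparent singularities are removable, and to apply the Banach fixed point theorem on a sufficiently short interval $[0,\delta]$. For $y\in C^0([0,\delta])$ write the right-hand side of \eqref{eqsimy} as $F[y]:=-\tfrac{\kappa}{2}N[y]/D[y]$ with
\[
D[y](r)=1-\tfrac{\Lambda r^2}{3}-\tfrac{\kappa}{r}\int_0^r s^2 G_\phi(s,y(s))\,\mathrm ds,
\]
\[
N[y](r)=rH_\phi(r,y(r))-\tfrac{2r\Lambda}{3\kappa}+\tfrac{1}{r^2}\int_0^r s^2 G_\phi(s,y(s))\,\mathrm ds,
\]
and consider $(\mathcal T y)(r):=y_0+\int_0^r F[y](s)\,\mathrm ds$ on the closed ball $X:=\{y\in C^0([0,\delta]):\|y-y_0\|_\infty\le 1\}$ of $C^0([0,\delta])$. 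By Lemma \ref{lemmagh}(i) the functions $G_\phi$ and $H_\phi$ are continuous with finite values at $r=0$, so $s\mapsto s^2 G_\phi(s,y(s))$ vanishes to second order at $s=0$; hence $\tfrac1r\int_0^r s^2 G_\phi(s,y(s))\,\mathrm ds=O(r^2)$ and $\tfrac1{r^2}\int_0^r s^2 G_\phi(s,y(s))\,\mathrm ds=O(r)$, so that $D[y]$, $N[y]$ and $F[y]$ extend to continuous functions on $[0,\delta]$ with $D[y](0)=1$, $N[y](0)=0$ and $F[y](0)=0$.

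Next I would choose $\delta=\delta(y_0,\Lambda)$ small enough that the scheme closes. By the monotonicity in Lemma \ref{lemmagh}(ii), for all $y\in X$ one has $0\le G_\phi(s,y(s))\le G_\phi(\delta,y_0+1)$ and $0\le H_\phi(r,y(r))\le H_\phi(\delta,y_0+1)$ on $[0,\delta]$, whence $\tfrac{\kappa}{r}\int_0^r s^2 G_\phi\,\mathrm ds\le\tfrac{\kappa\delta^2}{3}G_\phi(\delta,y_0+1)$; together with $\tfrac{\Lambda r^2}{3}\le\tfrac{\Lambda\delta^2}{3}$ this gives $D[y]\ge\tfrac12$ on $[0,\delta]$ for $\delta$ small, uniformly in $y\in X$, and then $\|F[y]\|_\infty\le C_0(y_0,\Lambda)$ on $X$; shrinking $\delta$ so that $\delta C_0\le1$ yields $\mathcal T(X)\subseteq X$. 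For the contraction estimate, Lemma \ref{lemmagh}(i) furnishes a Lipschitz constant $L_0$ for $G_\phi$ and $H_\phi$ in $y$ on the compact set $[0,\delta]\times[y_0-1,y_0+1]$; since $D[y]\ge\tfrac12$ and each term of $N[y]$ and of $D[y]-1$ carries a factor $O(r)$, one gets $\|F[y_1]-F[y_2]\|_\infty\le C_1\delta\,\|y_1-y_2\|_\infty$ and hence $\|\mathcal Ty_1-\mathcal Ty_2\|_\infty\le C_1\delta^2\|y_1-y_2\|_\infty$, which is a contraction for $\delta$ small. The Banach fixed point theorem then produces a unique $y_\Lambda\in X$ satisfying $y_\Lambda(r)=y_0+\int_0^r F[y_\Lambda](s)\,\mathrm ds$ on $[0,\delta]$.

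It remains to upgrade the regularity and to obtain uniqueness in $C^2$. Since $y_\Lambda\in C^0$ and $F[y_\Lambda]\in C^0([0,\delta])$, the integral identity gives $y_\Lambda\in C^1([0,\delta])$ with $y_\Lambda'=F[y_\Lambda]$; inserting $y_\Lambda\in C^1$, the maps $r\mapsto G_\phi(r,y_\Lambda(r))$ and $r\mapsto H_\phi(r,y_\Lambda(r))$ are $C^1$ by Lemma \ref{lemmagh}(i) and the chain rule, and a direct computation (again using that $s^2 G_\phi(s,y_\Lambda(s))$ vanishes to second order at $0$) shows that $r\mapsto\tfrac1r\int_0^r s^2 G_\phi(s,y_\Lambda(s))\,\mathrm ds$ and $r\mapsto\tfrac1{r^2}\int_0^r s^2 G_\phi(s,y_\Lambda(s))\,\mathrm ds$ are $C^1$ on $[0,\delta]$; hence $F[y_\Lambda]\in C^1$ and $y_\Lambda\in C^2([0,\delta])$. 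For uniqueness, any $C^2$ solution of \eqref{eqsimy} with $y(0)=y_0$ satisfies the same integral identity, lies in $X$ after restriction to some $[0,\delta']$ with $\delta'\le\delta$, hence coincides there with $y_\Lambda$ by the uniqueness part of the fixed point theorem, and then agrees with $y_\Lambda$ on all of $[0,\delta]$ by the classical uniqueness theorem for the regular ODE \eqref{eqsimy} on $(0,\delta]$.

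In summary, the only genuinely delicate point is the analysis at $r=0$: one must exploit the second-order vanishing of $s^2 G_\phi(s,y(s))$ to see that the $1/r$ and $1/r^2$ factors yield continuous (indeed $C^1$) terms, and use that the denominator equals $1$ at the origin to keep it bounded away from zero on $[0,\delta]$; granting this, the statement follows from a standard contraction-mapping argument together with a bootstrap for the regularity.
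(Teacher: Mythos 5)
Your proposal is correct and follows essentially the same route as the paper: recasting \eqref{eqsimy} as a fixed point problem for the integrated operator on a small interval, controlling the apparent singularities at $r=0$ and the denominator via the monotonicity and regularity of $G_\phi$, $H_\phi$ from Lemma \ref{lemmagh}, and applying the Banach fixed point theorem followed by a regularity bootstrap (the paper delegates the contraction estimates to Appendix \ref{apptcontr}, but the structure is identical). The only cosmetic difference is that the paper builds the denominator bound into the definition of the set $M$, whereas you derive it from the choice of $\delta$ on the full ball $X$.
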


\begin{proof}
We consider the equation (\ref{eqsimy}) and integrate it using the initial condition $y_\Lambda(0)=y_0$. The following fixed point problem is obtained,
\begin{equation}
y_\Lambda(r) = (Ty_\Lambda)(r), \quad r\geq 0
\end{equation}
where the operator $T$ is given by
\begin{multline}
(Tu)(r) :=
y_0 - \int_0^r \frac{\kappa/2}{1-\frac{s^2\Lambda}{3}-\frac{\kappa}{s}\int_0^s\sigma^2 G_\phi(\sigma,u(\sigma))\mathrm d\sigma} \\ \times\left(s H_\phi(s,u(s))-\frac{2s\Lambda}{3\kappa}+\frac{1}{s^2}\int_0^s\sigma^2 G_\phi(\sigma,u(\sigma))\mathrm d\sigma\right)\mathrm ds. \label{eqdeft}
\end{multline}
This operator is considered on the set
\begin{multline}
M:=\Big\{u:[0,\delta]\to\mathbb R\;|\; u(0)=y_0,y_0-1\leq u(r)\leq y_0+1, \\
 \frac{r^2\Lambda}{3}+\frac{\kappa}{r}\int_0^rs^2 G_\phi(s,u(s))\mathrm ds \leq c < 1, r\in[0,\delta]\Big\}. \label{eqdefm}
\end{multline}
We note that $M$ is non-empty if $\delta>0$ is chosen sufficiently small.
As carried out in detail in the appendix, Section \ref{apptcontr}, it is shown that $T$ acts as a contraction on $M$. 
This implies (by the Banach fixed-point theorem) that there exists $y_\Lambda\in M$ such that $T y_\Lambda = y_\Lambda$. Differentiability of $y_\Lambda$ follows from the structure of $T$. The differentiation with respect to $r$ yields that $y_\Lambda$ solves equation (\ref{eqsimy}) on the interval $[0,\delta]$. Away from the singularity $r=0$, standard existence and uniqueness results are applied to extend $y_\Lambda$ to a maximal solution on an interval $[0,R_c)$. Obviously, the boundary condition at $r=0$ is satisfied. The regularity of the functions $G_\phi$ and $H_\phi$ implies that $y_\Lambda\in C^2((0,R_c))$, (cf.~\cite{rein94}) and it can be shown that the second derivative continuously extends to $r=0$ and $y_\Lambda'(0)=0$.
\end{proof}

\subsection{Continuation criterion}\label{ssec : sagr-2}
The solution $y_\Lambda$ exists at least as long as the denominator of the right hand side of equation (\ref{eqsimy}) is strictly larger than zero. The following lemma formulates this assertion.

\begin{lemma} \label{lemden}
Let $y_0\in\mathbb R$ and let $R_c > 0$ be the largest radius such that the unique local $C^2$-solution $y_\Lambda$ of equation \eqref{eqsimy} with $y_\Lambda(0)=y_0$ exists on the interval $[0,R_c)$. Then there exists $R_D \leq R_c$ such that
\begin{equation} \label{lemden1}
\liminf_{r\to R_D}\left(1- \frac{r^2\Lambda}{3}-\frac{\kappa}{r}\int_0^{r} s^2 G_\phi(s,y_\Lambda(s))\mathrm ds\right) = 0.
\end{equation}
\end{lemma}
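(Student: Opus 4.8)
The plan is to read the lemma as a formalisation of the statement ``the solution persists as long as the denominator is positive'', and to argue by contradiction. Throughout, write
$D(r):=1-\frac{r^{2}\Lambda}{3}-\frac{\kappa}{r}\int_{0}^{r}s^{2}G_{\phi}(s,y_{\Lambda}(s))\,\mathrm ds$
for the denominator in \eqref{eqsimy} and $I(r):=\int_{0}^{r}s^{2}G_{\phi}(s,y_{\Lambda}(s))\,\mathrm ds$ for its integral term. Since $y_{\Lambda}\in C^{2}([0,R_{c}))$ and $G_{\phi}$ is continuous (Lemma~\ref{lemmagh}), $D$ is continuous on $[0,R_{c})$, and because $G_{\phi}(s,\cdot)$ behaves like $s^{2\ell}$ near $s=0$ the term $\frac{\kappa}{r}I(r)\to 0$ as $r\to 0$, so $D(0)=1>0$. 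If $D$ attains a value $\le 0$ somewhere on $[0,R_{c})$, then by continuity and $D(0)=1$ it has a first zero $R_{D}\in(0,R_{c})$; since $D(R_{D})=0$ this already yields \eqref{lemden1}. Hence the only case requiring work is $D>0$ on all of $[0,R_{c})$, and it remains to show $\liminf_{r\to R_{c}}D(r)=0$, so that $R_{D}:=R_{c}$ does the job.

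Assume henceforth $D>0$ on $[0,R_{c})$. Two elementary consequences come for free: since $G_{\phi}\ge 0$ and $\Lambda>0$, positivity of $D$ forces $1-\frac{r^{2}\Lambda}{3}>0$, hence $R_{c}\le\sqrt{3/\Lambda}<\infty$; and it forces $\frac{\kappa}{r}I(r)<1$, so $I(r)<r/\kappa\le\frac{1}{\kappa}\sqrt{3/\Lambda}$ is bounded on $[0,R_{c})$ and, being nondecreasing, has a finite limit $I_{\infty}$ as $r\to R_{c}$. Now suppose, towards a contradiction, that $\liminf_{r\to R_{c}}D(r)=:c_{0}>0$, and fix $\rho\in(0,R_{c})$ with $D\ge c_{0}/2$ on $[\rho,R_{c})$; the aim is to extend $y_{\Lambda}$ beyond $R_{c}$, contradicting the maximality of $R_{c}$.

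The heart of the argument is an a priori bound on $y_{\Lambda}$. From \eqref{defhphi} the function $y\mapsto H_{\phi}(\rho,y)$ is unbounded (as $y\to\infty$ the integrand increases to $(\varepsilon^{2}-1-L_{0}/\rho^{2})^{\ell+3/2}$, which is not integrable at $\varepsilon=\infty$ because $\ell>-\frac12$), so I may fix $Y$ with $H_{\phi}(\rho,Y)>\frac{2\Lambda}{3\kappa}$. Using that $H_{\phi}$ is increasing in both arguments (Lemma~\ref{lemmagh}), whenever $r\in[\rho,R_{c})$ and $y_{\Lambda}(r)\ge Y$ we get $H_{\phi}(r,y_{\Lambda}(r))\ge H_{\phi}(\rho,Y)>\frac{2\Lambda}{3\kappa}$, so the factor $rH_{\phi}(r,y_{\Lambda})-\frac{2r\Lambda}{3\kappa}+\frac{1}{r^{2}}I(r)$ on the right of \eqref{eqsimy} is strictly positive, and since $D>0$ this gives $y_{\Lambda}'(r)<0$. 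A barrier argument then yields the upper bound: if $y_{\Lambda}$ ever exceeded $M_{+}:=\max\{Y,\max_{[0,\rho]}y_{\Lambda}\}$ at some $r_{1}\in(\rho,R_{c})$, take the last $r_{0}<r_{1}$ with $y_{\Lambda}(r_{0})=M_{+}$; on $(r_{0},r_{1})$ one has $y_{\Lambda}>Y$ and $r\ge\rho$, hence $y_{\Lambda}'<0$, contradicting $y_{\Lambda}(r_{1})>M_{+}=y_{\Lambda}(r_{0})$. So $y_{\Lambda}\le M_{+}$ on $[0,R_{c})$. With this in hand, on $[\rho,R_{c})$ each of $rH_{\phi}(r,y_{\Lambda})\le R_{c}H_{\phi}(R_{c},M_{+})$, $\frac{2r\Lambda}{3\kappa}\le\frac{2R_{c}\Lambda}{3\kappa}$ and $\frac{1}{r^{2}}I(r)\le I_{\infty}/\rho^{2}$ is bounded while $D\ge c_{0}/2$, so $|y_{\Lambda}'|$ is bounded there; thus $y_{\Lambda}$ is Lipschitz on $[\rho,R_{c})$, in particular bounded from below as well, and extends continuously to $[0,R_{c}]$.

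Finally, at the point $r=R_{c}>0$, $y=y_{\Lambda}(R_{c})$ one has $D(R_{c})\ge c_{0}/2>0$, so (viewing \eqref{eqsimy} together with $I'=r^{2}G_{\phi}(r,y_{\Lambda})$ as an ODE system) the right-hand side is a $C^{1}$ function of the solution near this point, by Lemma~\ref{lemmagh}\,(\ref{ghdiff}) and the fact that the denominator is bounded away from zero nearby — this is exactly the routine, away-from-$r=0$ part of the local existence argument of Lemma~\ref{lemloc}. Standard ODE continuation therefore extends $y_{\Lambda}$ past $R_{c}$, contradicting its maximality. Hence $\liminf_{r\to R_{c}}D(r)=0$, and $R_{D}:=R_{c}$ satisfies \eqref{lemden1}. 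I expect the sole genuine obstacle to be the a priori upper bound on $y_{\Lambda}$: one must exploit that the radial‑pressure term $rH_{\phi}(r,y)$ in \eqref{eqsimy} dominates the cosmological term $\frac{2r\Lambda}{3\kappa}$ as soon as $y$ is large, which forces $y_{\Lambda}$ to turn around and so prevents the matter quantities from blowing up before the denominator does; the remaining ingredients are continuity, the intermediate value theorem and standard ODE continuation.
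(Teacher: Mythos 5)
Your proposal is correct and follows essentially the same route as the paper: assume the denominator stays bounded below by a positive constant, deduce an a priori bound on $y_\Lambda$ (and hence on $|y_\Lambda'|$) from the fact that the pressure term $rH_\phi$ dominates the cosmological term $\frac{2r\Lambda}{3\kappa}$ once $y_\Lambda$ is large, forcing $y_\Lambda'<0$ there, and then continue the solution past $R_c$ for a contradiction. The only cosmetic differences are that you package the boundedness of $y_\Lambda$ as a clean barrier argument with an explicit threshold $Y$ (the paper runs a $\limsup$-contradiction through the matter quantities instead) and you invoke standard ODE continuation for the augmented system $(y,I)$ where the paper re-runs its fixed-point contraction.
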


\begin{rem}
Lemma \ref{lemden} implies that the denominator on the right hand side of equation (\ref{eqsimy}) becomes arbitrarily small on $[0,R_c)$, i.e.~the numerator has no singular behavior that would make the solution collapse as long as the denominator is larger than zero.
\end{rem}

\begin{rem}
We can a priori not exclude the case $R_c=\infty$ which would however not occur due to the $\La$ term.
\end{rem}

\begin{proof}
Assume
\begin{equation}
1-\frac{r^2\Lambda}{3}-\frac{\kappa}{r}\int_0^r s^2 G_\phi(s,y_\Lambda(s))\mathrm ds > 0
\end{equation}
for all $r\in[0,R_c)$. Otherwise $R_D < R_c$ (with $R_D$ characterized as above) occurs due to the continuity of $y_\Lambda$ and $G_\phi$ and the lemma follows. Assume now that the assertion of the lemma does not hold, i.e. there is a constant $a>0$ such that
\begin{equation} \label{defra}
1 - \frac{r^2\Lambda}{3} - \frac{\kappa}{r} \int_0^r s^2 G_\phi(s,y_\Lambda(s)) \mathrm ds \geq a
\end{equation}
for all $r\in[0,R_c)$. First we show that this implies the existence of a $C>0$ such that for all $r\in[0,R_c)$ we have $|y_\La'(r)| \leq C$. Therefore we consider
\begin{equation}\label{mod mu pr}
|y_\La'(r)| \leq \frac{4\pi}{a} \left(r H_\phi(r,y_\Lambda(r)) + \frac{2r\Lambda}{3\kappa} +\frac{1}{r^2} \int_0^r s^2 G_\phi(s,y_\Lambda(s))\mathrm ds\right).
\end{equation}
Here it is used that $H_\phi$ and $G_\phi$ are positive. It is obvious that the second term, $\frac{2r\Lambda}{3\kappa}$, is bounded on the interval $[0,R_c)$. We show that the right hand side of (\ref{mod mu pr}) is bounded on this interval. Assume the opposite,
\begin{equation}
\limsup_{r\to R_c} H_\phi(r,y_\Lambda(r))=\infty \quad\mathrm{or}\quad\limsup_{r\to R_c} \int_0^r s^2 G_\phi(s,y_\Lambda(s))\mathrm ds = \infty.
\end{equation}
 The second possibility implies $\limsup_{r\to R_c} G_\phi(r,y_\Lambda(r)) =\infty$. On the interval $[0,R_c)$ we have the upper bounds $H_\phi(r,y_\Lambda(r)) \leq H_\phi(R_c,y_\Lambda(r))$ and $G_\phi(r,y_\Lambda(r)) \leq G_\phi(R_c,y_\Lambda(r))$, cf.~Lemma \ref{lemmagh}, (\ref{ghinc}). And since $H_\phi(r,y)$ and $G_\phi(r,y)$ are increasing functions in $y$ (cf.~Lemma \ref{lemmagh}) this in turn implies
\begin{equation}
\limsup_{r\to R_c} y_\Lambda(r) = \infty.
\end{equation}
It follows that for all $\varepsilon > 0$ sufficiently small there exists $r\in (R_c-\varepsilon,R_c)$ such that $y_\Lambda'(r) > 0$ which on the other hand implies
\begin{equation}\label{eq-310}
r H_\phi(r,y_\Lambda(r))+\frac{1}{r^2}\int_0^r s^2 G_\phi(s,y_\Lambda(s))\mathrm ds < \frac{2r\Lambda}{3\kappa},
\end{equation}
by equation \eqref{eqsimy} for $y_\Lambda'$. This contradicts the assumption that either $H_\phi(r,y_\Lambda(r))$ or the integral $\int_0^r s^2 G_\phi(s,y_\Lambda(s))\mathrm ds$ diverge as the right hand side of (\ref{eq-310}) is bounded. Thus $|y_\Lambda'(r)|$ is bounded on $[0,R_c)$.\par
In the remainder of this proof it is shown that the solution can be continued beyond $R_c$ which yields the desired contradiction. To achieve this, similar methods as in the proof of Lemma \ref{lemloc} will be used. For $\delta,\varepsilon>0$, $\delta > \varepsilon$ define $y_{\varepsilon}=y_\Lambda(R_c-\varepsilon)$, the interval $I_\varepsilon$ containing $R_c$ by $I_\varepsilon=[R_c-\varepsilon,R_c-\varepsilon+\delta]$, and
\begin{equation} \label{defumu}
u_y(r) := \Big\{\begin{array}{ll} y_\Lambda(r);&\quad r \in [0,R_c-\varepsilon] \\ u(r);&\quad r > R_c-\varepsilon\end{array}.
\end{equation}
Consider the operator
\begin{equation}
\begin{aligned}
(T_\varepsilon u)(r)& = y_\varepsilon + \int_{R_c-\varepsilon}^{r} \frac{\kappa/2}{1-\frac{s^2\Lambda}{3}-\frac{\kappa}{s} \int_0^s\sigma^2 G_\phi(\sigma,u_y(\sigma))\mathrm d\sigma} \\
&\qquad\qquad\times \left(s H_\phi(s,u(s)) - \frac{2s\Lambda}{3\kappa} + \frac{1}{s^2}\int_0^s \sigma^2 G_\phi(\sigma,u_y(\sigma))\mathrm d\sigma\right) \mathrm ds
\end{aligned}
\end{equation}
acting on the set
\begin{equation}
\begin{aligned}
M_\varepsilon&=\Big\{u:I_\varepsilon\to\mathbb R\;|\; u(R_c-\varepsilon)=y_\varepsilon,y_\varepsilon-1\leq u(r)\leq y_\varepsilon+1, \\
&\qquad\qquad\qquad \frac{r^2\Lambda}{3}+\frac{\kappa}{r}\int_0^r s^2 G_\phi(s,u_y(s))\mathrm ds \leq c < 1, r\in I_\varepsilon\Big\}.
\end{aligned}
\end{equation}
Using (\ref{defra}) and $|y_\Lambda'(r)|<C$ on $[0,R_c)$ for a $C>0$ one can prove
that $T_\varepsilon$ acts as a contraction on $M_\varepsilon$. In virtue of Banach's fixed point theorem the operator $T_\varepsilon$ has a fixed point $y_\varepsilon \in M_\varepsilon$ such that $\left(y_\varepsilon\right)_y$ defined by (\ref{defumu}) solves equation (\ref{eqsimy}) on the interval $(0,R_c-\varepsilon+\delta)$. But this contradicts the definition of $R_c$ and the lemma follows.
\end{proof}

\subsection{Existence beyond the non-vacuum region} \label{secsupport}

\begin{lemma} \label{bounded}
Let $\Phi:\mathbb R^2 \to [0,\infty)$ be of the form (\ref{ouransf}) and let $y$ be the unique global $C^1$-solution of equation (\ref{eqsimy}) in the case $\Lambda = 0$ where $y(0)=y_0>0$, cf.~\cite{rein94}. As proved in \cite{rein94}, $f$ has bounded spatial support $[0,R_0)$ where $y(R_0)=0$ defines $R_0$ uniquely. Let $y_\Lambda$ be the unique $C^2$-solution of equation (\ref{eqsimy}) with $\Lambda > 0$ and $y_\Lambda(0) = y(0)$ that according to Lemma \ref{lemloc} exists at least on an interval $[0,\delta]$ for a certain $\delta > 0$ and let $f_\La$ be the distribution function corresponding to $y_\La$. \par
Then $y_\Lambda$ exists at least on $[0,R_0+\Delta R]$ and the spatial support of $f_\Lambda$ is bounded by some $R_{0,\La}<R_0+\Delta R$ if $\La$ and $\Delta R>0$ are chosen such that
\begin{equation} \label{condlambda}
0<\Lambda < \min\left\{\frac{|y(R_0+\Delta R)|}{C_y(R_0+\Delta R)},\frac{\frac{1}{18}}{C_v(R_0+\Delta R)}\right\}
\end{equation}
holds. The constants $C_y(r)$ defined in equation (\ref{defcmu}) and $C_v(r)$ defined in equation (\ref{defcf}) are determined by the background solution $y$.
\end{lemma}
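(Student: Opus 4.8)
\emph{Proof plan.} On the fixed compact interval $[0,R_0+\Delta R]$ I would treat $y_\Lambda$ as a small perturbation of the $\Lambda=0$ background $y$: a Grönwall estimate controls $y_\Lambda-y$ there, the continuation criterion (Lemma \ref{lemden}) promotes this to existence on the whole interval, and the sign of $y$ just beyond $R_0$ forces the matter of $f_\Lambda$ to stop before $R_0+\Delta R$. Two features of the background carry the argument. First, the ansatz \eqref{ouransf} yields Vlasov matter, for which $v_r^2+|x\times v/r|^2=|v|^2$ gives $p+2p_T=\int f\,|v|^2\varepsilon^{-1}\,\mathrm dv\le\int f\,\varepsilon\,\mathrm dv=\varrho$; hence Lemma \ref{theobuchdahl} applies to $y$ and yields $\tfrac{\kappa}{r}\int_0^r s^2 G_\phi(s,y(s))\,\mathrm ds=\tfrac{2m(r)}{r}\le\tfrac{8}{9}$ for all $r>0$, so the $\Lambda=0$ denominator in \eqref{eqsimy} is bounded below by $\tfrac19$. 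Second, for $r\ge R_0$ one has $G_\phi=H_\phi=0$, so \eqref{eqsimy} reduces to the Schwarzschild form $y'(r)=-\tfrac{m_0}{r(r-2m_0)}<0$ with $m_0=m(R_0)\le\tfrac49 R_0<\tfrac12 R_0$; thus $y$ is strictly decreasing past $R_0$, whence $|y(R_0+\Delta R)|>0$.

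Next I would set the margin $a:=\tfrac1{18}$ and run a continuity argument on the set of $r\in[0,R_0+\Delta R]$ for which $y_\Lambda$ exists, the denominator $1-\tfrac{s^2\Lambda}{3}-\tfrac{\kappa}{s}\int_0^s\sigma^2 G_\phi(\sigma,y_\Lambda(\sigma))\,\mathrm d\sigma$ stays $\ge a$ for $s\le r$, and $y_\Lambda$ stays in the tube $|y_\Lambda(s)-y(s)|\le\Lambda C_y(s)$, where $C_y$ is the constant of \eqref{defcmu} and the set is taken with the customary extra room so that its defining inequalities close strictly; this set is nonempty by Lemma \ref{lemloc}. On such an interval both denominators in the integral versions of \eqref{eqsimy} for $y_\Lambda$ and for $y$ are bounded below ($\ge a$ and $\ge\tfrac19$), and $G_\phi,H_\phi$ and their $y$-derivatives are bounded on the relevant compact region, so subtracting these integral equations, bounding the $\Lambda$-explicit terms $\tfrac{2s\Lambda}{3\kappa},\tfrac{s^2\Lambda}{3}$ by $O(\Lambda)$ and the remaining differences by a Lipschitz constant times $\sup_{[0,s]}|y_\Lambda-y|$, and applying Grönwall reproduces $|y_\Lambda(r)-y(r)|\le\Lambda C_y(r)$ with $C_y$ depending only on $y$ and on $\Delta R$. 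Feeding this back, monotonicity of $G_\phi$ and $\partial_y G_\phi$ in $y$ together with the Buchdahl bound gives $\tfrac{\kappa}{r}\int_0^r s^2 G_\phi(s,y_\Lambda(s))\,\mathrm ds\le\tfrac89+\Lambda\cdot(\text{a background quantity})$, so that the denominator above is $\ge\tfrac19-\Lambda C_v(r)$, where $C_v(r)$ — the $\tfrac{r^2}{3}$ from the explicit cosmological term plus that error term — is the constant of \eqref{defcf}.

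Now the hypotheses \eqref{condlambda} close the loop on the bootstrap interval: $\Lambda C_v(R_0+\Delta R)<\tfrac1{18}$ keeps the denominator strictly above $a$; $\Lambda C_y(R_0+\Delta R)<|y(R_0+\Delta R)|$ keeps $y_\Lambda$ strictly inside the tube, so all estimates remain valid; and boundedness of the right-hand side of \eqref{eqsimy} allows $y_\Lambda$ to be continued exactly as in the continuation step of Lemma \ref{lemden}. Hence the supremum of the bootstrap set cannot be interior, i.e.\ it equals $R_0+\Delta R$, and by Lemma \ref{lemden} the solution $y_\Lambda$ exists on all of $[0,R_0+\Delta R]$ with $|y_\Lambda-y|\le\Lambda C_y$ there. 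Finally $y_\Lambda(R_0+\Delta R)\le y(R_0+\Delta R)+\Lambda C_y(R_0+\Delta R)<0$ by \eqref{condlambda}, so $e^{-y_\Lambda(r)}\sqrt{1+L_0/r^2}>1$ at $r=R_0+\Delta R$ and, by continuity, on a left neighbourhood of it; part (\ref{ghvac}) of Lemma \ref{lemmagh} then shows this region is vacuum, whence $R_{0,\Lambda}:=\sup\{r\le R_0+\Delta R : f_\Lambda(r,\cdot)\ne 0\}<R_0+\Delta R$ and the support of $f_\Lambda$ lies in $[0,R_{0,\Lambda}]$.

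The main obstacle is the circular structure of this bootstrap: the comparison estimate $|y_\Lambda-y|\le\Lambda C_y$ presupposes a lower bound on the denominator of \eqref{eqsimy}, while that lower bound is extracted from the Buchdahl inequality \emph{for the background} only once $y_\Lambda$ is already known to be close to $y$. Untangling this with a clean continuity argument, and doing the bookkeeping so that $C_y$ and $C_v$ depend on nothing but the fixed background $y$ (and $\Delta R$) — which is what makes the smallness condition \eqref{condlambda} meaningful — is the technical core; the numerology $8/9\to 1/9\to 1/18$ is dictated by Buchdahl together with the cosmological term.
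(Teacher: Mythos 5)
Your proposal is correct and follows essentially the same route as the paper: a bootstrap (the paper phrases it as a contradiction at $\tilde r^*=\min\{r^*,\tilde r\}$) controlling simultaneously the denominator of \eqref{eqsimy} via the background Buchdahl bound $v\geq\tfrac19$ perturbed to $v_\Lambda\geq\tfrac1{18}$, and the distance $|y_\Lambda-y|\leq\Lambda C_y$ via a Gr\"onwall estimate, closed by \eqref{condlambda} and finished by $y_\Lambda(R_0+\Delta R)<0$ together with Lemma \ref{lemmagh}~(iii). Your explicit verification that the ansatz satisfies $p+2p_T\leq\varrho$ (so that Lemma \ref{theobuchdahl} applies) and your computation of $y'$ in the exterior Schwarzschild region are small additions the paper leaves implicit, but they do not change the argument.
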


\begin{rem}
Note that the upper bound for $\Lambda$ in (\ref{condlambda}) is strictly larger than zero since $|y(R_0+\Delta R)|>0$. This holds because the globally existing background solution $y$ is strictly monotone and we have $y(R_0)=0$ by definition of $R_0$.
\end{rem}

\begin{proof}
We define
\begin{eqnarray}
m(r) &=& 4\pi\int_0^r s^2 \varrho(s)\mathrm ds, \qquad m_\Lambda(r) = 4\pi\int_0^r s^2 \varrho_\Lambda(s)\mathrm ds, \\
v(r) &=& 1 - \frac{2m(r)}{r}, \qquad \qquad \; v_\Lambda(r) = 1 - \frac{r^2\Lambda}{3} - \frac{2m_\Lambda(r)}{r}.
\end{eqnarray}
Consider the continuous function $v_\Lambda$. Note that $v_\Lambda(0)=1$. We define
\begin{equation}
r^*:=\inf \{r\in[0,R_c)\,|\,v_\La(r)=1/18\},
\end{equation}
i.e., $r^* $ is the smallest radius where $v_\Lambda(r) = \frac{1}{18}$. Lemma \ref{lemden} assures that $r^*<R_c$, i.e., ~$r^*$ is well defined. Note that $v_\Lambda(r)$ is the quantity in Lemma \ref{lemden}.
In addition, we define
\begin{equation}
\tilde r:=\sup \{r\in [0,R_c]\,|\, |y_\Lambda(r)-y(r)|\leq |y(R_0+\Delta R)|\}.
\end{equation}
The right hand side of this inequality is given by the background solution $y$, which exists globally.
Note that $|y(R_0+\Delta R)|>0$ since $y$ is strictly monotone, and $y(0)=y_\Lambda(0)=y_0$, so $0 < \tilde r$ by continuity of $y$ and $y_\Lambda$. Let
\begin{equation}
\tilde r^* := \min\{r^*,\tilde r\}.
\end{equation}
Choosing $\Lambda$ s.t.~(\ref{condlambda}) holds, we will show that $\tilde r^* > R_0+\Delta R$. We assume the opposite, $\tilde r^* \leq R_0 + \Delta R$, and consider the sum $|\varrho_\Lambda(r) - \varrho(r)| + |p_\Lambda(r) - p(r)|$ on the interval $[0,\tilde r^*]$. By the mean value theorem we have
\begin{equation}
|\varrho_\Lambda(r) - \varrho(r)| + |p_\Lambda(r) - p(r)| = \left(\left|\partial_y G_\phi(r,y)\big|_{u_1}\right| + \left|\partial_y H_\phi(r,y)\big|_{u_2}\right|\right) |y_\Lambda(r)-y(r)|
\end{equation}
where $u_1,u_2\in[y(r),y_\Lambda(r)]$ are chosen appropriately. From the estimate (\ref{defcgh}) in Appendix B we have that for $r\leq \tilde r^*$
\begin{equation}
|\varrho_\Lambda(r) - \varrho(r)| + |p_\Lambda(r) - p(r)| \leq
\Lambda C_{gh}(\tilde r^*),
\end{equation}
where $C_{gh}$ is defined in \eqref{defcgh}.
Note that $C_{gh}(r)$ is increasing in $r$. Still on $[0,\tilde r^*]$ we compute
\begin{equation}\label{defcf}
\begin{aligned}
|v(r) - v_\Lambda(r)| &\leq \frac{r^2\Lambda}{3} + \frac{2}{r}|m_\Lambda(r)-m(r)| = \frac{r^2\Lambda}{3} + \frac{8\pi}{r}\int_0^r s^2|\varrho_\Lambda(s)-\varrho(s)|\mathrm ds  \\
&\leq \left(\frac{(\tilde r^*)^2}{3} + \frac{8\pi}{3} (\tilde r^*)^2C_{gh}(\tilde r^*)\right)\Lambda =: C_v(\tilde r^*) \Lambda
\end{aligned}
\end{equation}
Since we have $v(r) \geq \frac{1}{9}$ (Buchdahl inequality, cf.~Lemma \ref{theobuchdahl}) and $\Lambda < \frac{1/18}{C_v(R_0+\Delta R)}$ by choice of $\Lambda$ we can conclude
\begin{equation} \label{cont1}
v_\Lambda(r) \geq v(r) - \Lambda C_v(\tilde r^*) > \frac 1 9 - \frac{1/18}{C_v(R_0+\Delta R)} C_v(\tilde r^*) \geq \frac{1}{18}
\end{equation}
on $[0,\tilde r^*]$ since $C_v(\tilde r^*) < C_v(R_0+\Delta R)$ because $C_v(r)$ is increasing and $\tilde r^* \leq R_0+\Delta R$ by assumption.\par
We also consider the distance between $y$ and $y_\Lambda$ on $[0,\tilde r^*]$. Following the procedure depicted in Section \ref{apdetest} of the appendix one obtains
\begin{equation} \label{defcmu}
\begin{aligned}
|y_\Lambda(r) - y(r)| &\leq \Lambda \left(3r^2+29\pi r^4\left(H_\phi(r,y_0)+\frac 1 3 G_\phi(r,y_0)\right)\right) \\
&\qquad + 72\pi\left(r+24\pi r^2\left(H_\phi(r,y_0) + \frac 1 3 G_\phi(r,y_0)\right)\right) \int_0^r C_{gh}(s) \Lambda \mathrm ds  \\
& =: C_y(r) \Lambda \leq C_y(\tilde r^*) \Lambda.
\end{aligned}
\end{equation}
Since $C_y(\tilde r^*) \leq C_y(R_0+\Delta R)$ and  $\Lambda < \frac{|y(R_0+\Delta R)|}{C_y(R_0+\Delta R)}$ on $[0,\tilde r^*]$ by assumption, the relation
\begin{equation} \label{cont2}
|y_\Lambda(r) - y(r)| < |y(R_0+\Delta R)|
\end{equation}
 already holds. Equations (\ref{cont1}) and (\ref{cont2}) state that $v_\Lambda(\tilde r^*) > \frac{1}{18}$ and $|y_\Lambda(\tilde r^*)-y(\tilde r^*)| < |y(R_0+\Delta R)|$, respectively on the interval $[0,\tilde r^*]$, which is a contradiction to the definition of $\tilde r^*$. Thus we have $\tilde r^* > R_0+\Delta R$ as desired.\par
We have shown that $y_\Lambda$ exists at least on $[0,R_0+\Delta R]$ as the continuation criterion applies and from equation (\ref{cont2}) we already know that $y_\Lambda(R_0+\Delta R) < 0$. Since $y_\Lambda$ is continuous it has at least one zero at in the interval $(R_0,R_0+\Delta R)$. In particular there exists an interval $(R_{0\Lambda},R_0+\Delta R)$ where $y_\Lambda$ is strictly smaller than zero. $R_{0\Lambda}$ is the largest zero of $y_\Lambda$ in $(R_0,R_0+\Delta R)$. So the spatial support of $f_\Lambda$ is contained in the interval $[0,R_{0,\Lambda})$ and this implies the assertion.
\end{proof}

\subsection{Global regular solutions for $\La>0$}
\label{ssec : sagr-4}
In the last two sections we have seen that for suitably chosen $\Lambda$ there exists a unique solution $y_\Lambda$ to equation (\ref{eqsimy}) on the interval $[0,R_0+\Delta R]$ for some $\Delta R>0$. This solution uniquely induces a solution $\mu_\Lambda$, $\lambda_\Lambda$ of the equations \eqref{eeq1}, \eqref{eeq2} on $[0,R_0+\Delta R]$ whose distribution function $f_\Lambda$ is of bounded support in space. By gluing a  Schwarzschild-deSitter metric to this solution one can construct a global static solution to the Einstein-Vlasov system.

\begin{theorem}\label{main-thm}
Let $\Phi:\mathbb R^2 \to [0,\infty)$ be of the form (\ref{ouransf}). For every initial value $\mu_0 <0$ there exists a constant $C = C(\mu_0,\phi) > 0$ such that for every $0<\Lambda < C$ there exists a unique global solution $\mu_\Lambda, \lambda_\Lambda \in C^2([0,\infty)), f_\Lambda \in C^0([0,\infty)\times \mathbb R^3)$ of the static, spherically symmetric Einstein-Vlasov system \eqref{eqvlasov}-\eqref{p}
with $\mu_\Lambda(0)=\mu_0$, and $\lambda_\Lambda(0)=0$ such that the support of the distribution function is bounded. This solution coincides with the Schwarzschild-deSitter metric in the vacuum region.
\end{theorem}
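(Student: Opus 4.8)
The plan is to assemble the theorem from the three lemmas in this section together with the explicit Schwarzschild-deSitter vacuum solution. Given $\mu_0 < 0$, set $y_0 := \ln(E_0) - \mu_0$ for a convenient normalization of $E_0$ (ultimately $E_0$ will be re-adjusted at the end, but for the construction any fixed choice with $y_0 > 0$ works; note $y_0>0$ is exactly the hypothesis needed to invoke Lemma \ref{bounded}, and it is equivalent to $\mu_0 < \ln E_0$). First I would invoke Lemma \ref{bounded} with the $\Lambda=0$ background solution $y$ having $y(0)=y_0$: this produces $R_0$ with $y(R_0)=0$, a choice of $\Delta R>0$, and the explicit threshold
\begin{equation}
C := \min\left\{\frac{|y(R_0+\Delta R)|}{C_y(R_0+\Delta R)},\ \frac{1/18}{C_v(R_0+\Delta R)}\right\} > 0,
\end{equation}
which depends only on $\mu_0$ (through $y_0$) and $\phi$. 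For every $\Lambda \in (0,C)$, Lemma \ref{lemloc} gives a local $C^2$-solution $y_\Lambda$ with $y_\Lambda(0)=y_0$, Lemma \ref{lemden} provides the continuation criterion, and Lemma \ref{bounded} then guarantees that $y_\Lambda$ exists on $[0,R_0+\Delta R]$ with the spatial support of $f_\Lambda$ contained in $[0,R_{0,\Lambda})$ for some $R_{0,\Lambda} < R_0+\Delta R$, and that $y_\Lambda(r)<0$ on $(R_{0,\Lambda},R_0+\Delta R]$.

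Next I would pass from $y_\Lambda$ to metric coefficients. On $[0,R_0+\Delta R]$ define $m_\Lambda(r) = 4\pi\int_0^r s^2 G_\phi(s,y_\Lambda(s))\,\mathrm ds$ and set
\begin{equation}
e^{-2\lambda_\Lambda(r)} = 1 - \frac{2m_\Lambda(r)}{r} - \frac{\Lambda r^2}{3},
\end{equation}
which is the integrated form of \eqref{eeq1}; positivity of the right side on the whole interval is exactly $v_\Lambda(r) \geq 1/18 > 0$, established in \eqref{cont1}. The function $\mu_\Lambda$ is recovered from $y_\Lambda$ by $\mu_\Lambda = \ln E_0 - y_\Lambda$ after the standard re-definition: one lets $y_\Lambda$ continue into the vacuum region, reads off $y_\Lambda(\infty)$ (see below), puts $E_0 := 1/y_\Lambda(\infty)$ and $e^{\mu_\Lambda} := E_0 y_\Lambda$, so that $\mu_\Lambda(\infty)=0$; then one shifts $\mu_0$ back to the prescribed value — i.e. one must run the argument with the background $y_0$ chosen so that after normalization $\mu_\Lambda(0)=\mu_0$, which is possible because the map $y_0 \mapsto \mu_\Lambda(0)$ is continuous and, by scaling of $E_0$, surjective onto $(-\infty,0)$. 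In the matter region $f_\Lambda = c_0\phi(1-\varepsilon e^{-y_\Lambda})[L-L_0]_+^\ell$ solves the Vlasov equation because $E,L$ are conserved along characteristics, and by construction $(\mu_\Lambda,\lambda_\Lambda)$ solve \eqref{eeq1}–\eqref{eeq2}; that \eqref{eeq3} and the full Einstein system then hold follows as in Theorem 2.1 of \cite{rr92}, whose proof carries over verbatim with the $\Lambda$-terms.

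Finally I would glue in the exterior. For $r \geq R_{0,\Lambda}$ the matter vanishes, so $m_\Lambda(r) \equiv M := m_\Lambda(R_{0,\Lambda})$ and the equations reduce to the vacuum Einstein equations with cosmological constant, whose spherically symmetric static solution is Schwarzschild-deSitter:
\begin{equation}
e^{2\mu_\Lambda(r)} = e^{-2\lambda_\Lambda(r)} = 1 - \frac{2M}{r} - \frac{\Lambda r^2}{3}.
\end{equation}
One checks $C^1$-matching of $\lambda_\Lambda$ at $R_{0,\Lambda}$ from continuity of $m_\Lambda$ and of $\varrho_\Lambda$ (hence of $m_\Lambda'$), and $C^1$-matching of $\mu_\Lambda$ from the fact that $y_\Lambda$ and the exterior solution satisfy the same first-order ODE \eqref{eqsimy} once $H_\phi = G_\phi = 0$; higher regularity $C^2$ is then automatic away from the glue point and, since $\varrho_\Lambda, p_\Lambda$ are $C^0$ (Lemma \ref{lemmagh}(i)) and vanish near $R_{0,\Lambda}$, across it as well. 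Existence of $y_\Lambda(\infty)$ needed above follows because in the Schwarzschild-deSitter exterior $y_\Lambda' \to 0$ like $r^{-2}$ (the bracket in \eqref{eqsimy} behaves like $-\tfrac{2r\Lambda}{3\kappa} + M/r^2$ while the denominator stays bounded away from zero until the cosmological horizon, but we stop well before), so $\mu_\Lambda$ extends smoothly; uniqueness of the whole solution follows from uniqueness in Lemmas \ref{lemloc}–\ref{lemden} and uniqueness of the vacuum continuation.

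The main obstacle I anticipate is the bookkeeping in the last step — ensuring that one can stop the interior construction strictly before the cosmological horizon $r_c \sim \sqrt{3/\Lambda}$, which for small $\Lambda$ is very large compared to $R_0+\Delta R$, and that the glued Schwarzschild-deSitter piece genuinely has a horizon (so the metric is a bona fide static solution on $[0,r_c)$ with the claimed asymptotics) rather than a naked singularity; this is where the smallness of $\Lambda$ relative to the background data $R_0, M$ is really used, beyond its role in Lemma \ref{bounded}. Verifying that $1 - 2M/r - \Lambda r^2/3 > 0$ on the relevant range and that the exterior mass parameter $M$ satisfies the Schwarzschild-deSitter regime inequality $9\Lambda M^2 < 1$ (automatic once $\Lambda$ is small since $M$ is controlled by the background) is the one quantitative point that needs care.
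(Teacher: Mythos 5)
Your overall architecture is the same as the paper's: local existence (Lemma \ref{lemloc}), continuation (Lemma \ref{lemden}), the perturbative support bound of Lemma \ref{bounded} giving the threshold $C$ and the radius $R_{0,\Lambda}<R_0+\Delta R$, reconstruction of $\lambda_\Lambda$ from the integrated first Einstein equation, and a $C^1$/$C^2$ gluing to Schwarzschild-deSitter in the vacuum region. That part is fine and matches the paper's proof essentially step for step.

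There is, however, one step that fails as written: the normalization of $E_0$ via $y_\Lambda(\infty)$ and the claim that $\mu_\Lambda(\infty)=0$. For $\Lambda>0$ there is no asymptotically flat end: in the vacuum exterior $y_\Lambda$ satisfies \eqref{eqmussds}, so $e^{-2y_S}$ is proportional to $1-\tfrac{\Lambda r^2}{3}-\tfrac{2M}{r}$, which vanishes at the cosmological horizon $r_C\sim\sqrt{3/\Lambda}$; hence $y_\Lambda\to+\infty$ as $r\to r_C^-$ and $y_\Lambda(\infty)$ does not exist. Your assertion that the bracket in \eqref{eqsimy} behaves like $-\tfrac{2r\Lambda}{3\kappa}+M/r^2$ so that $y_\Lambda'\to 0$ like $r^{-2}$ is the $\Lambda=0$ asymptotics; here the $-\tfrac{2r\Lambda}{3\kappa}$ term grows linearly and the denominator degenerates at $r_C$, so ``stopping well before the horizon'' does not produce a limit at infinity. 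The paper avoids this entirely by normalizing at the edge of the matter support, setting $E_0=e^{\mu_\Lambda(R_{0,\Lambda})}$ so that the exterior is the \emph{unshifted} Schwarzschild-deSitter metric $e^{2\mu_S}=e^{-2\lambda_S}=1-\tfrac{\Lambda r^2}{3}-\tfrac{2M}{r}$; with that replacement (and your continuity/scaling argument to hit the prescribed $\mu_0$, which is a reasonable way to make the paper's implicit bookkeeping explicit), your proof goes through. Your closing concerns about $9\Lambda M^2<1$ and the positivity of $1-\tfrac{2M}{r}-\tfrac{\Lambda r^2}{3}$ on the glued range are legitimate and are indeed only implicitly handled in the paper via the smallness of $\Lambda$.
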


\begin{proof}
According to Lemma \ref{lemloc} there exists a $C^2$-solution $y_\Lambda$ of equation (\ref{eqsimy}) on a small interval $[0,\delta]$. In the proof of Lemma \ref{bounded} we saw that this solution can be extended at least until $r = R_0+\Delta R$ for any $\Delta R$ if one chooses $\Lambda$ small enough. Beyond the support of $\varrho_\Lambda$ and $p_\Lambda$, thus for $r\in[R_{0,\Lambda},R_0+\Delta R]$, equation (\ref{eqsimy}) takes the form
\begin{equation} \label{eqmussds}
y_\Lambda'(r) = -\frac 1 2 \frac{\mathrm d}{\mathrm dr}\ln\left(1-\frac{r^2\Lambda}{3}-\frac{2M}{r}\right)
\end{equation}
where $M=m_\Lambda(R_{0,\Lambda})$. This equation is solved by the (shifted) Schwarzschild-deSitter metric, whose corresponding $y$-coefficient $y_S$ is given by
\begin{equation}
y_S(r) = -\frac 1 2\ln\left( 1-\frac{r^2\Lambda}{3}-\frac{2M}{r} \right) - \ln\left(e^{-\lambda(R_{0,\Lambda})}\right).
\end{equation}
The shift has been chosen such that $y_\Lambda$ can be extended by $y_S$ as a $C^2$-solution of equation \eqref{eqsimy} on $[0,\infty)$ using a modified ansatz for the matter distribution $f_\Lambda$. Namely, for $r > R_0+\Delta R$ we drop the original ansatz $\Phi$ for $f_\Lambda$ and continue $f_\Lambda$ by the constant zero function, i.e.
\begin{equation}
f_\Lambda(x,v)=\Big\{\begin{array}{ll}\left[1-\varepsilon e^{-y}\right]_+^k[L-L_0]_+^\ell,& r \in [0,R_0+\Delta R]\\0, & r > R_0+\Delta R\end{array}.
\end{equation}
Obviously $f_\Lambda$ is continuous since $f_\Lambda(r)=0$ already on $(R_{0,\Lambda},R_0+\Delta R)$ but $\frac{\mathrm d}{\mathrm dr}f_\Lambda(r,v)$ is not continuous in general. \par
Via $\mu_\Lambda=\ln(E_0)-y_\Lambda$ and
\begin{equation}
e^{-2\lambda_\Lambda}=1-\frac{r^2\Lambda}{3}-\frac{\kappa}{r}\int_0^r s^2 G_\phi(s,y_\Lambda(s))\mathrm ds
\end{equation}
 one can construct a local solution $\mu_\Lambda,\lambda_\Lambda\in C^2([0,R_c))$ of \eqref{eeq1}, \eqref{eeq2}, where $R_c>R_0+\Delta R$. This solution fulfills the boundary conditions $\lambda_\Lambda(0)=0$, $\mu_\Lambda(0)=\ln(E_0)-y_0$, $\lambda_\Lambda'(0)=\mu_\Lambda'(0)=0$.
We now see that $E_0 = e^{\mu(R_{0,\Lambda})}$ and continue $\mu_\Lambda$ and $\lambda_\Lambda$ with the Schwarzschild-deSitter coefficients $\mu_S$, $\lambda_S$ given by
\begin{equation}
e^{2\mu_S}=e^{-2\lambda_S}=1-\frac{r^2\Lambda}{3}-\frac{2M}{r}
\end{equation}
in a continuous way beyond $R_0+\Delta R$. From equation (\ref{eqmussds}) we deduce that also the derivatives of $\mu_\Lambda$ and $\mu_S$ can be glued together in a continuous way.  The functions $\mu_\Lambda$, $\lambda_\Lambda$, and $f_\Lambda$ solve the Einstein-Vlasov system \eqref{eqvlasov}, \eqref{eeq1}, \eqref{eeq2} globally.
\end{proof}

\begin{rem}
In the isotropic case, i.e.~$L_0=\ell=0$ in the ansatz \eqref{ouransf} for the distribution function $f$, the matter quantities $\varrho$ and $p$ are monotonically decreasing. This implies that their support in space is a ball. In the anisotropic case however, so called shell solutions occur, cf.~\cite{ar13}. The support of such matter shells is in general not connected.
\end{rem}


\section{Static, anisotropic, globally regular solutions for $\La<0$} \label{sec : sage neg}
\subsection{Local existence} \label{seclocal}
In this section an existence lemma for $\Lambda < 0$ is stated for small radii. This lemma corresponds to the first part of the proof of Theorem 2.2 in \cite{static} for the case $\La=0$.
\begin{lemma} \label{lemloc_neg}
Let $\Phi:\mathbb R^2\to[0,\infty)$ be of the form \eqref{ouransf} and let $G_\phi$, $H_\phi$ be defined by equations (\ref{defgphi}) and (\ref{defhphi}), respectively. Then for every $y_0 \in \mathbb R$ and every $\Lambda < 0$ there exists a $\delta>0$ such that there exists a unique solution $y_\Lambda \in C^2([0,\delta])$ of equation \eqref{eqsimy} with initial value $y_\Lambda(0)=y_0$.
\end{lemma}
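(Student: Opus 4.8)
The plan is to follow the proof of Lemma~\ref{lemloc} essentially verbatim; the only structural change is the sign of the cosmological term, and I will indicate where this enters and why it does not spoil any estimate. Integrating \eqref{eqsimy} with $y_\Lambda(0)=y_0$ turns the problem into the fixed point equation $y_\Lambda=Ty_\Lambda$ on an interval $[0,\delta]$, where $T$ is exactly the operator in \eqref{eqdeft} (now with $\Lambda<0$) considered on the set $M$ defined in \eqref{eqdefm}. First I would check that $M$ is non-empty for $\delta>0$ small. The constraint $\frac{r^2\Lambda}{3}+\frac{\kappa}{r}\int_0^r s^2 G_\phi(s,u(s))\,\mathrm ds\le c<1$ is in fact easier to meet than for $\Lambda>0$, since for $\Lambda<0$ the first summand is negative; the second summand tends to $0$ as $r\to 0$, uniformly for $u$ in the ball $\{|u-y_0|\le 1\}$: either $G_\phi(r,\cdot)\equiv 0$ for small $r$ (when $L_0>0$, since for $r$ small the lower limit $\sqrt{1+L_0/r^2}$ in \eqref{defgphi} exceeds $e^{y_0+1}$ and the integrand vanishes on the whole range), or $G_\phi(r,y)\le C r^{2\ell}$ with $2\ell>-1$ so that $\frac1r\int_0^r s^{2+2\ell}\,\mathrm ds\sim r^{2+2\ell}\to 0$, the constant being made uniform in $u$ by the monotonicity of $G_\phi$ in $y$ (Lemma~\ref{lemmagh}).

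Next I would show $T(M)\subset M$ and that $T$ is a contraction on $M$ for $\delta$ small, repeating the computation carried out in the appendix, Section~\ref{apptcontr}. The only new feature relative to the $\Lambda>0$ case is that the numerator of \eqref{eqsimy} now carries the term $+\frac{2r|\Lambda|}{3\kappa}$ and the denominator the term $+\frac{r^2|\Lambda|}{3}$. Both are harmless: the extra denominator term only pushes the denominator farther from zero (on $M$ it is bounded below by $1-c>0$, and trivially bounded above on $[0,\delta]$), while the extra numerator term is smooth in $r$, independent of $u$, and bounded on $[0,\delta]$, hence contributes nothing to the Lipschitz estimate in $u$. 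The Lipschitz dependence on $u$ of the remaining terms is controlled, exactly as in Section~\ref{apptcontr}, by the local Lipschitz continuity of $G_\phi$ and $H_\phi$ in $y$ (Lemma~\ref{lemmagh}); shrinking $\delta$ makes the Lipschitz constant of $T$ strictly less than one.

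Banach's fixed point theorem then produces a unique $y_\Lambda\in M$ with $Ty_\Lambda=y_\Lambda$. The explicit form of $T$ shows $y_\Lambda\in C^1([0,\delta])$, and differentiating the identity $y_\Lambda=Ty_\Lambda$ in $r$ recovers \eqref{eqsimy} on $[0,\delta]$ with $y_\Lambda(0)=y_0$. Away from $r=0$ the right-hand side of \eqref{eqsimy} is $C^1$, so standard ODE theory yields $y_\Lambda\in C^2((0,\delta])$ together with uniqueness, and a short computation using the regularity of $G_\phi$, $H_\phi$ (cf.~\cite{rein94}) and the vanishing of the integral terms at $r=0$ shows that $y_\Lambda''$ extends continuously to $r=0$ with $y_\Lambda'(0)=0$; hence $y_\Lambda\in C^2([0,\delta])$. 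I do not expect a genuine obstacle here: unlike the $\Lambda>0$ case there is no competition of signs in the denominator, so the estimates are if anything more comfortable, and the only point needing a moment's care is the uniform smallness of $\frac{\kappa}{r}\int_0^r s^2 G_\phi\,\mathrm ds$ near $r=0$ noted above.
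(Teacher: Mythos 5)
Your proposal is correct and follows exactly the route the paper takes: the paper's own proof of this lemma consists of the single remark that the argument of Lemma \ref{lemloc} (the fixed point set-up of \eqref{eqdeft}--\eqref{eqdefm} together with the contraction estimates of Appendix \ref{apptcontr}) carries over verbatim to $\Lambda<0$. Your additional observations --- that the sign of $\Lambda$ only makes the denominator constraint in $M$ easier to satisfy and that the $\Lambda$-term in the numerator is $u$-independent and hence irrelevant for the Lipschitz estimate --- are exactly the points one would check, so there is nothing to add.
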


\begin{proof}
The proof works in an exact analogue way as in the case $\Lambda > 0$.
\end{proof}

\subsection{Globally regular solutions for $\La<0$}

For negative cosmological constants the global existence of solutions can be proved in an analogue way as done in \cite{rein94} for the case $\Lambda = 0$. After establishing the local existence of solutions analog to the $\La>0$ case, we show that the metric components stay bounded for all $r\in\mathbb R_+$ with an energy estimate. This will yield the global existence of solutions of the Einstein-Vlasov system with negative cosmological constant. In the next step we show by virtue of a suitable choice of an ansatz for the matter distribution $f$, that the matter quantities $\varrho$ and $p$ are of bounded support.\\
In the following theorem the existence on spatial intervals of the form $\mathbb R\setminus [0,r_0)$, for $r_0>0$ is included for the purpose of applying the same theorem to the construction of static spacetimes with Schwarzschild singularities in the center (cf.~Section \ref{bhAdS}). The solutions of interest here are those where the radius variable takes values in all of $\mathbb R$.
\begin{theorem} \label{theo_glo_adS}
Let $\La<0$ and let $\Phi:\mathbb R^2 \to [0,\infty)$ be of the form (\ref{ouransf}) and let $G_\phi$ and $H_\phi$ be defined by equations \eqref{defgphi} and \eqref{defhphi}. Then for every $r_0\geq 0$ and $\mu_0,\lambda_0\in \mathbb R$ there exists a unique solution $\lambda_\Lambda,\mu_\Lambda\in C^1([r_0,\infty))$ of the Einstein-Vlasov system  \eqref{eqvlasov}-\eqref{p} with $\mu_\Lambda(r_0)=\mu_0$ and $\lambda_\Lambda(r_0)=\lambda_0$. One has $\lambda_0=0$ if $r_0=0$.
\end{theorem}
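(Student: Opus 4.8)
The plan is to prove Theorem~\ref{theo_glo_adS} in three stages: local existence, an energy estimate that gives global-in-$r$ existence, and (implicitly, via the stated ansatz) bounded support of the matter. Local existence on a small interval $[r_0,r_0+\delta]$ follows exactly as in Lemma~\ref{lemloc}/Lemma~\ref{lemloc_neg}: one rewrites \eqref{eqsimy} as a fixed point equation $y_\Lambda = T_{r_0}y_\Lambda$ and applies the Banach fixed point theorem on a suitable closed set on which the denominator $1-\frac{r^2\Lambda}{3}-\frac{\kappa}{r}\int_{r_0}^r s^2 G_\phi(s,y_\Lambda(s))\mathrm ds$ stays bounded away from zero; here one should note that for $\Lambda<0$ the term $-\frac{r^2\Lambda}{3}$ is \emph{positive}, so the denominator is even easier to control than in the $\Lambda>0$ case. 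Standard ODE theory then extends $y_\Lambda$ to a maximal interval $[r_0,R_c)$, and one recovers $\mu_\Lambda,\lambda_\Lambda$ via $\mu_\Lambda=\ln(E_0)-y_\Lambda$ and $e^{-2\lambda_\Lambda}=1-\frac{r^2\Lambda}{3}-\frac{\kappa}{r}\int_{r_0}^r s^2 G_\phi(s,y_\Lambda(s))\mathrm ds$ (with the appropriate additive constants fixed by $\mu_\Lambda(r_0)=\mu_0$, $\lambda_\Lambda(r_0)=\lambda_0$), using Lemma~\ref{lemmagh}(i) for the regularity.

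The heart of the argument is showing $R_c=\infty$, i.e.\ that $y_\Lambda$ cannot blow up and the denominator cannot reach zero at a finite radius. Following \cite{rein94}, I would introduce the energy-type quantity built from $m_\Lambda(r)=4\pi\int_{r_0}^r s^2\varrho_\Lambda(s)\mathrm ds$ and the denominator $v_\Lambda(r)=1-\frac{r^2\Lambda}{3}-\frac{2m_\Lambda(r)}{r}$, and exploit the crucial sign: since $\Lambda<0$, equation \eqref{eqsimy} shows that the prefactor $-\frac{2r\Lambda}{3\kappa}$ inside the large bracket has a good sign together with $rH_\phi$ and $\frac{1}{r^2}\int s^2 G_\phi$, so $y_\Lambda'(r)\le 0$ everywhere; hence $y_\Lambda$ is monotonically decreasing and therefore bounded above by $y_0$ on all of $[r_0,R_c)$. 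Monotonicity of $G_\phi,H_\phi$ in $y$ (Lemma~\ref{lemmagh}(ii)) then bounds $\varrho_\Lambda$ and $p_\Lambda$ in terms of $G_\phi(r,y_0)$, $H_\phi(r,y_0)$ on any finite subinterval, so $m_\Lambda(r)$ stays finite. It remains to see $v_\Lambda(r)>0$: because $-\frac{r^2\Lambda}{3}>0$, one has $v_\Lambda(r)\ge 1-\frac{2m_\Lambda(r)}{r}$, and an energy estimate — differentiating $m_\Lambda/r$ and using the Einstein constraint together with the bound on $y_\Lambda$ — shows $2m_\Lambda(r)/r$ cannot reach $1$ in finite $r$ (this is where the argument of \cite{rein94} is imported essentially verbatim, the negative $\Lambda$ term only helping). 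With $v_\Lambda$ bounded below on compact intervals and $y_\Lambda$ bounded, the continuation criterion analogous to Lemma~\ref{lemden} forces $R_c=\infty$.

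Finally, once global existence on $[r_0,\infty)$ is in hand, one checks that $(\mu_\Lambda,\lambda_\Lambda,f_\Lambda)$ solves not just the reduced system but all Einstein equations, exactly as in Theorem~2.1 of \cite{rr92} (the computation is unchanged by $\Lambda\ne0$), and that the boundary conditions hold, with $\lambda_0=0$ being the regular-center condition \eqref{boundary1} when $r_0=0$. I expect the main obstacle to be the energy estimate controlling $2m_\Lambda(r)/r<1$: one must set up the right monotone quantity and verify that the cosmological term genuinely contributes with the favorable sign at every step, rather than merely appearing benign; bounded support of $\varrho_\Lambda,p_\Lambda$ (needed for the later gluing constructions but not asserted in this theorem) is then a separate matter handled by the choice of ansatz as in \cite{rein98} and is deferred.
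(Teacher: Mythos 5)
Your outline matches the paper's proof: local existence as in Lemma \ref{lemloc_neg}, then an energy argument in the style of \cite{rein94} that exploits the favourable sign of the cosmological term, then continuation past any finite $R_c$. The one place where your sketch does not match the actual execution is the energy estimate itself, which you defer to ``\cite{rein94} imported essentially verbatim'' and describe as ``differentiating $m_\Lambda/r$'' to show $2m_\Lambda(r)/r<1$. The paper does neither of these things: it introduces the modified mass function
\[
w_\Lambda(r) = -\frac{\Lambda}{12\pi} + \frac{1}{r^3}\left(-\frac{r_0^3\Lambda}{24\pi} + \frac{r_0}{8\pi}\left(1-e^{-2\lambda_0}\right) + \int_{r_0}^r s^2\varrho_\Lambda(s)\,\mathrm ds\right),
\]
so that $\mu_\Lambda'=4\pi r e^{2\lambda_\Lambda}(p_\Lambda+w_\Lambda)$, and then uses the TOV equation \eqref{toveq} to show that on $\left[\tfrac{R_c}{2},R_c\right)$ one has $\frac{\mathrm d}{\mathrm dr}\ln\bigl(e^{\mu_\Lambda+\lambda_\Lambda}(p_\Lambda+w_\Lambda)\bigr)\leq C_2$. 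The finiteness of $C_2=C_1/(p_\Lambda+w_\Lambda)$ is precisely where $\Lambda<0$ enters, since it forces $w_\Lambda>0$; this is the concrete form of your heuristic that ``the cosmological term genuinely contributes with the favorable sign.'' The conclusion is then that $\mu_\Lambda+\lambda_\Lambda$ is bounded above, which together with $\mu_\Lambda\geq\mu_0$ (your monotonicity observation, in the equivalent form $\mu_\Lambda'\geq 0$) and the elementary bound $e^{-2\lambda_\Lambda}\leq 1+r^2|\Lambda|/3$ bounds both metric functions on $\left[\tfrac{R_c}{2},R_c\right)$ and allows continuation — there is no direct estimate showing $2m_\Lambda/r$ stays below $1$, only that the denominator $1-\tfrac{r^2\Lambda}{3}-\tfrac{2m_\Lambda}{r}=e^{-2\lambda_\Lambda}$ stays away from zero. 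So your plan is the right one and the sign analysis is correct, but to complete it you must identify the specific monotone quantity $e^{\mu_\Lambda+\lambda_\Lambda}(p_\Lambda+w_\Lambda)$ with the $\Lambda$-term absorbed into $w_\Lambda$; your step 2 bound $\varrho_\Lambda(r)\leq G_\phi(r,y_0)$ is true but is not by itself enough to keep the denominator positive.
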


\begin{proof} We use an energy argument similar to \cite{rein94}.
Let $y_\Lambda \in C^2([r_0,r_0+\delta])$ be the local solution of equation \eqref{eqsimy} with $y_\Lambda(r_0)=\ln(E_0)e^{-\mu_0}$. If $r_0=0$ the existence of this local solution is established by Lemma \ref{lemloc_neg} and in the case $r_0 > 0$ the existence of a local solution follows directly from the regularity of the right hand sides of \eqref{eeq1} and \eqref{eeq2}. Let $[r_0,R_c)$ be the maximal interval of existence of this solution. By $\mu_\Lambda = \ln(E_0)-y_\Lambda$ and
\begin{equation}
e^{-2\lambda_\Lambda}=1-\frac{\Lambda}{3}\left(r^2-\frac{r_0^3}{r}\right)-\frac{2}{r}\left(\frac{r_0}{2}\left(1-e^{-2\lambda_0}\right)+4\pi \int_{r_0}^r s^2 G_\phi(s,y_\Lambda(s))\mathrm ds\right)
\end{equation}
one constructs a local solution $\mu_\Lambda,\lambda_\Lambda \in C^2([r_0,R_c])$ of equations \eqref{eeq1} and \eqref{eeq2}. We define
\begin{equation}
w_\Lambda(r) = -\frac{\Lambda}{12\pi} + \frac{1}{r^3} \left(-\frac{r_0^3\Lambda}{24\pi} + \frac{r_0}{8\pi}\left(1-e^{-2\lambda_0}\right) + \int_{r_0}^r s^2\varrho_\Lambda(s)\mathrm ds\right).
\end{equation}
The Einstein equation \eqref{eeq1} implies
\begin{equation} \label{eqmu}
\mu_\Lambda'(r) = 4\pi r e^{2\lambda_\Lambda(r)}\left(p_\Lambda(r)+w_\Lambda(r)\right).
\end{equation}
By adding equations \eqref{eeq1} and \eqref{eeq2} we have
\begin{equation}
\left(\mu_\Lambda'(r) + \lambda_\Lambda'(r)\right) = 4\pi r e^{2\lambda_\Lambda(r)}(p_\Lambda(r) + \varrho_\Lambda(r)).
\end{equation}
We assume $R_c < \infty$ and consider the quantity $e^{\mu_\Lambda+\lambda_\Lambda}\left(p_\Lambda + w_\Lambda\right)$ on the interval $\left[\frac{R_c}{2},R_c\right)$. On this interval, in particular away from the origin, a differential inequality will be established that will allow us to deduce that both $\mu_\Lambda$ and $\lambda_\Lambda$ are bounded on $\left[\frac{R_c}{2},R_c\right)$. Using the TOV equation \eqref{toveq} we obtain for $r\in \left[\frac{R_c}{2},R_c\right)$
\begin{equation}
\begin{aligned}
\frac{\mathrm d}{\mathrm dr}\left(e^{\mu_\Lambda+\lambda_\Lambda}\left(p_\Lambda + w_\Lambda\right)\right) &= e^{\mu_\Lambda+\lambda_\Lambda}\left(-\frac{2p_\Lambda}{r} - \frac{3w_\Lambda}{r} - \frac{\Lambda}{4\pi r} + \frac{2p_{T\Lambda}}{r} + \frac{\varrho_\Lambda}{r}\right) \\
&\leq C_1 e^{\mu_\Lambda+\lambda_\Lambda} = \underbrace{\frac{C_1}{p_\Lambda + w_\Lambda}}_{=:C_2}\left(p_\Lambda+w_\Lambda \right) e^{\mu_\Lambda+\lambda_\Lambda}.
\end{aligned}
\end{equation}
In the course of this estimate we have used that $\frac{\Lambda}{4\pi r}$, $p_{T\Lambda}(r)/r$ and $\varrho_\Lambda(r)/r$ stay bounded for $r\in\left[\frac{R_c}{2},R_c\right)$. The constant $C_2$ is bounded since $w_\Lambda(r) > 0$ for negative $\Lambda$. It follows
\begin{equation} \label{lpmu-b}
\frac{\mathrm d}{\mathrm dr}\ln\left(e^{\mu_\Lambda+\lambda_\Lambda}\left(p_\Lambda + w_\Lambda\right)\right) \leq C_2 \quad \Rightarrow \quad \lambda_\Lambda+\mu_\Lambda < \infty.
\end{equation}
Equation \eqref{eqmu} implies that $\mu_\Lambda'(r)\geq 0$ and therefore $\mu_\Lambda(r) \geq \mu_0$. We also have
\begin{equation}
e^{-2\lambda_\Lambda} \leq 1 + \frac{r^2|\Lambda|}{3} \leq \frac{3+R_c^2|\Lambda|}{3} < \infty.
\end{equation}
This in turn implies $\lambda_\Lambda > -\infty$ and we deduce from equation \eqref{lpmu-b} that both $\mu_\Lambda$ and $\lambda_\Lambda$ are bounded on $\left[\frac{R_c}{2},R_c\right)$. This allows to continue $\mu_\Lambda$ and $\lambda_\Lambda$ as $C^2$-solutions of the Einstein equations beyond $R_c$ which contradicts its definition. So $R_c = \infty$.
\end{proof}
We prove in the following theorem that the distribution function in the previous theorem is compactly supported which yield physically reasonable solutions.
\begin{theorem} \label{theo_bou_adS}
Let $\Phi:\mathbb R^2 \to [0,\infty)$ be of the form (\ref{ouransf}) and let $\mu_0 \in \mathbb R$ , $r_0\geq 0$ and let $\lambda_\Lambda,\mu_\Lambda\in C^1([r_0,\infty))$, $f(x,v)=\Phi(E,L)$ be the unique global-in-$r$ solution of the Einstein-Vlasov system \eqref{eqvlasov} -- \eqref{p} with negative cosmological constant where $\mu_\Lambda(0)=\mu_0$ such that $y_0 = \ln(E_0)e^{-\mu_0} > 0$. Then there exists $R_0\in(r_0,\infty)$ such that the spatial support of $f_\Lambda$ is contained in the interval $[r_0,R_0)$.
\end{theorem}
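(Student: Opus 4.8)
The plan is to show that $y_\Lambda$ is strictly decreasing on $[r_0,\infty)$ and in fact tends to $-\infty$, so that it crosses zero at a finite radius $R_0$; beyond $R_0$, Lemma~\ref{lemmagh}~(\ref{ghvac}) then forces vacuum. The driving mechanism is the good sign of the cosmological term when $\Lambda<0$: its contribution to $y_\Lambda'$ never degenerates, and it grows fast enough relative to $e^{-2\lambda_\Lambda}$ to push $y_\Lambda$ below every level.

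First I would use the identity $y_\Lambda'(r)=-\mu_\Lambda'(r)=-4\pi r\,e^{2\lambda_\Lambda(r)}\big(p_\Lambda(r)+w_\Lambda(r)\big)$ from the proof of Theorem~\ref{theo_glo_adS} (which for $r_0=0$ is just equation~\eqref{eqsimy}). Since $p_\Lambda\ge 0$ and $w_\Lambda>0$ for $\Lambda<0$, this gives $y_\Lambda'(r)<0$ for $r>r_0$, so $y_\Lambda$ is strictly decreasing and, as $y_\Lambda(r_0)=y_0>0$, remains positive on an initial interval. Next I would record two quantitative bounds. On the one hand, from the explicit form of $w_\Lambda$ and $\varrho_\Lambda\ge 0$ one has $w_\Lambda(r)=\tfrac{|\Lambda|}{12\pi}+O(r^{-3})$, so there is $R_1>r_0$ with $w_\Lambda(r)\ge\tfrac{|\Lambda|}{24\pi}$ for $r\ge R_1$. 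On the other hand $e^{-2\lambda_\Lambda(r)}$ is positive and $O(1+r^2)$: for $r_0=0$ one has $e^{-2\lambda_\Lambda}\le 1+\tfrac{|\Lambda|r^2}{3}$ directly from $m_\Lambda\ge 0$ (the bound already used in the proof of Theorem~\ref{theo_glo_adS}), and for $r_0>0$ the same follows from the explicit formula for $e^{-2\lambda_\Lambda}$. Combining these, for $r\ge R_1$,
\begin{equation*}
y_\Lambda'(r)\;\le\;-\,4\pi r\,w_\Lambda(r)\,e^{2\lambda_\Lambda(r)}\;\le\;-\,\frac{c\,r}{1+r^2}
\end{equation*}
with some $c>0$, and integrating from $R_1$ to $r$ yields $y_\Lambda(r)\le y_\Lambda(R_1)-\tfrac{c}{2}\ln\!\big(\tfrac{1+r^2}{1+R_1^2}\big)\to-\infty$.

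Finally, continuity and strict monotonicity of $y_\Lambda$, together with $y_\Lambda(r_0)=y_0>0$, give a unique $R_0\in(r_0,\infty)$ with $y_\Lambda(R_0)=0$ and $y_\Lambda(r)<0$ for $r>R_0$. By Lemma~\ref{lemmagh}~(\ref{ghvac}), $f_\Lambda(r,\cdot)=\varrho_\Lambda(r)=p_\Lambda(r)=0$ for all $r\ge R_0$, so the spatial support of $f_\Lambda$ lies in $[r_0,R_0)$, as claimed. Given Theorem~\ref{theo_glo_adS} the argument is short; the only point needing a little care is the uniform positivity of $w_\Lambda$ for large $r$ when $r_0>0$, where one must check that the sign-indefinite constant term $\tfrac{r_0}{8\pi}(1-e^{-2\lambda_0})$ appearing in $w_\Lambda$ contributes only $O(r^{-3})$ and is therefore harmless.
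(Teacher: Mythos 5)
Your proposal is correct and rests on the same mechanism as the paper's proof: monotone decrease of $y_\Lambda$ plus a logarithmic lower bound on the rate of decrease coming from the cosmological term in the numerator of \eqref{eqsimy} against the at-most-quadratic growth of the denominator. The paper packages this more compactly by comparing directly with the explicit vacuum profile $y_{\mathrm{vac},\Lambda}(r)=y_0-\tfrac12\ln\left(1-\tfrac{r^2\Lambda}{3}\right)$, observing $y_\Lambda'\leq y_{\mathrm{vac},\Lambda}'$ since $G_\phi,H_\phi\geq 0$, which yields $y_\Lambda(r)\leq y_0-\tfrac12\ln\left(1+\tfrac{r^2|\Lambda|}{3}\right)\to-\infty$ in one step and avoids both your large-$r$ threshold $R_1$ and the discussion of the sign-indefinite constant in $w_\Lambda$.
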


\begin{proof}
Due to Lemma \ref{lemmagh}, (\ref{ghvac}) we have vacuum if $y_\Lambda(r) \leq 0$. By assumption we have $y_\Lambda(0) > 0$. In the following we show that $\lim_{r\to \infty} y_\Lambda(r) < 0$. Since $y_\Lambda$ is continuous and monotonically decreasing, this implies that $y_\Lambda$ possesses a single zero $R_0$ and the support of the matter quantities $\varrho_\Lambda$ and $p_\Lambda$ is contained in $[0,R_0)$. \par
We define $y_{\mathrm{vac},\Lambda}$ by
\begin{equation}
y_{\mathrm{vac},\Lambda}=y_0-\frac 1 2 \ln\left(1-\frac{r^2\Lambda}{3}\right).
\end{equation}
So we have
\begin{equation}
y_{\mathrm{vac},\Lambda}'(r)=-\frac{\kappa/2}{1-\frac{\Lambda r^2}{3}}\left(-\frac{2r\Lambda}{3\kappa}\right)
\end{equation}
and $y_{\mathrm{vac},\Lambda}(0)=y_\Lambda(0)=y_0$. Furthermore, since $y_{\Lambda}'(r)<y_{\mathrm{vac},\Lambda}'(r)$ which can be seen immediately by means of equation \eqref{eqsimy} we have
\begin{equation}
y_\Lambda(r) \leq y_{\mathrm{vac},\Lambda}(r) = y_0-\frac 1 2 \ln\left(1+\frac{r^2|\Lambda|}{3}\right)\stackrel{r\to\infty}{\longrightarrow}-\infty<0
\end{equation}
and the theorem follows.
\end{proof}

\begin{rem}
The solution coincides with Schwarzschild-AdS for $r \geq R_0$ if the continuity condition
\begin{equation}
\mu_\Lambda(R_0) = \ln(E_0) - y_\Lambda(R_0) = \frac 1 2 \ln\left(1-\frac{R_0^2\Lambda}{3}-\frac{2M}{R_0}\right)
\end{equation}
is fulfilled, where $M=4\pi\int_0^{R_0} s^2\varrho_\Lambda(s)\mathrm ds$. So if $y_0$ is given, the corresponding value of $E_0$ in the ansatz $\Phi$ for the matter distribution $f$ can be read off.
\end{rem}


\section{Solutions with a Schwarzschild singularity at the center} \label{sec : swss}

In this section we construct spherically symmetric, static solutions of the Einstein-Vlasov system with non-vanishing cosmological constant that contain a Schwarzschild singularity at the center. We consider both the case with a positive and a negative cosmological constant. The construction for the case $\Lambda > 0$ makes use of the corresponding solutions with vanishing $\La$. In the following we will call this solution, where $\Lambda = 0$, a \emph{background solution}. The global existence of the background solution is proved in \cite{rein94}. The matter quantities belonging to this  background solution are of finite support.\par

\subsection{Matter shells immersed in Schwarzschild-deSitter spacetime}\label{sec dS bh}
The construction of the solution with $\Lambda > 0$ can be outlined as follows. In the vacuum case, i.e.~when the right hand sides of the Einstein equations \eqref{eeq1} and \eqref{eeq2} are zero, the solutions are given by
\begin{equation} \label{ssdssol}
e^{2\mu(r)}=1-\frac{r^2\Lambda}{3}-\frac{2M_0}{r},\quad e^{2\lambda(r)}=\left(1-\frac{r^2\Lambda}{3}-\frac{2M_0}{r}\right)^{-1},\quad r> r_{B\Lambda}
\end{equation}
where $r_{B\Lambda}$ is defined to be the black hole event horizon, i.e.~the smallest positive zero of $1-r^2\Lambda/3-2M_0/r$. 
If one chooses $L_0$ and $M_0$ appropriately and $\Lambda$ sufficiently small the following configuration is on hand. For small $r>r_{B\Lambda}$ one sets $f(x,v)\equiv0$ and the metric is given by Schwarzschild-deSitter. Thus one has the coefficients \eqref{ssdssol}. Increasing the radius $r$ one reaches an interval $[r_{-\Lambda},r_{+\Lambda}]$ where also an ansatz $f(x,v)=\Phi(E,L)$ of the form \eqref{ouransf} yields vacuum, i.e.~$G_\phi(r,y(r))=H_\phi(r,y(r))=0$. In this interval it is possible to glue to the Schwarzschild-deSitter solution \eqref{ssdssol} a non vacuum solution solving the Einstein-Vlasov system. It will be shown that the matter quantities $\varrho_\Lambda$ and $p_\Lambda$ of this solution have finite support. Beyond the support of the matter quantities the solution will be continued again by Schwarzschild-deSitter. \par
For negative cosmological constant, globally defined solutions can be constructed as well. Like in the case above, the black hole is surrounded by a vacuum shell which is on its part surrounded by a shell containing matter. In the outer region, we again have vacuum. \par
Before we consider the system with $\Lambda \neq 0$ we establish a Buchdahl type inequality for solutions of the Einstein equations with a Schwarzschild singularity at the center. This inequality is relevant for the proof of existence of solutions of the Einstein-Vlasov system with $ \Lambda > 0$.
\begin{lemma} \label{buchdahlbh}
Let $\lambda,\mu\in C^1([0,\infty))$ and let $\varrho, p, p_T\in C^0([0,\infty))$ be functions that satisfy the system of equations (\ref{eeq1}-\ref{eeq3}) with a Schwarzschild singularity with mass parameter $M_0>0$ at the center, and such that $p+2p_T \leq \varrho$.
Then the inequality
\begin{equation}\label{mb}
\frac{2(M_0 + m(r))}{r} \leq \frac 8 9
\end{equation}
holds for all $r\in\left[\frac{9}{4}M_0,\infty\right)$ where $m(r)$ is given by
\begin{equation}
m(r) ={4\pi\int_{2M_0}^r} s^2\varrho(s)\mathrm ds.
\end{equation}
\end{lemma}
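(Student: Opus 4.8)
The plan is to reduce the statement to the already-established Buchdahl inequality of Lemma~\ref{theobuchdahl} by a \emph{reflection/gluing trick}: given a solution with a Schwarzschild singularity of mass $M_0$ at the center, defined on $[2M_0,\infty)$, I would construct from it a regular solution of the same system on $[0,\infty)$ to which Lemma~\ref{theobuchdahl} applies, in such a way that the quantity $2(M_0+m(r))/r$ of the singular solution is controlled by $2\tilde m(\tilde r)/\tilde r$ of the regular one. Concretely, one notes that the Schwarzschild piece $e^{2\mu}=e^{-2\lambda}=1-2M_0/r$ on $[2M_0,\infty)$ is, after the substitution $r\mapsto r-2M_0$ at the level of the \emph{mass function}, indistinguishable from a vacuum interior region of a regular configuration: if one defines on $[0,\infty)$ a new energy density $\tilde\varrho$ which vanishes on $[0,\tfrac94 M_0]$ (or on some initial vacuum interval whose total mass is exactly $M_0$) and then, as a function of $r$, matches the given $\varrho$ for larger $r$, the associated mass $\tilde m(r)=4\pi\int_0^r s^2\tilde\varrho(s)\,\mathrm ds$ satisfies $\tilde m(r)=M_0+m(r)$ for $r$ in the matter region. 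The point is that one only needs a \emph{comparison} configuration satisfying the hypotheses of Lemma~\ref{theobuchdahl} with the same $2\tilde m/\tilde r$ at the radii of interest; the dominant energy-type condition $p+2p_T\le\varrho$ is inherited, and the vacuum core contributes its Schwarzschild mass $M_0$ exactly as a point mass would.

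The key steps, in order, would be: (i) Recall from Lemma~\ref{theobuchdahl} (and the more general version in \cite{and08}) the precise hypotheses under which $\sup_r 2m(r)/r\le 8/9$ holds; isolate that the proof there only uses $\lambda,\mu\in C^1$, the field equations \eqref{eeq1}--\eqref{eeq3}, $p+2p_T\le\varrho$, and regularity at the center via $\lambda(0)=0$. (ii) Starting from the given singular solution on $[2M_0,\infty)$ with $e^{2\mu}=1-2M_0/r-\dots$, extend it inward across $[0,2M_0]$ by the device above — either literally by a thin regular shell of total mass $M_0$ placed near $r=0$ with the remaining interval $(\text{shell},2M_0]$ being vacuum Schwarzschild, or by directly invoking the generalized Buchdahl result of \cite{and08}, which is stated there for exactly this broader class including an interior Schwarzschild region. (iii) Translate the conclusion $2\tilde m(\tilde r)/\tilde r\le 8/9$ back: since $\tilde m(r)=M_0+m(r)$ for $r\ge\tfrac94 M_0$ (once the comparison shell is squeezed to the center so its outer radius is $<\tfrac94 M_0$), this is precisely \eqref{mb}. (iv) Check that the restriction to $r\in[\tfrac94 M_0,\infty)$ is exactly what is needed so that the comparison configuration's matter lies at radii $\le r$, i.e.\ the Schwarzschild core has been fully "converted" by radius $\tfrac94 M_0$; this threshold is the Buchdahl radius $r=\tfrac94\cdot 2M_0\cdot\tfrac12$ reflecting that a point mass $M_0$ alone gives $2M_0/r\le\tfrac89$ iff $r\ge\tfrac94 M_0$.

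The main obstacle I anticipate is step (ii): making the reflection/comparison rigorous without destroying the $C^1$ regularity of $\lambda,\mu$ or the sign condition $p+2p_T\le\varrho$ at the gluing radius $r=2M_0$ — one must either appeal directly to the version of the Buchdahl inequality in \cite{and08} that already tolerates an interior Schwarzschild region (which seems to be the intended route, given the citation there), or carefully verify that the inequality's proof, which proceeds by a differential-inequality / monotonicity argument in $r$ starting from data on a sphere, can be initialized on the horizon sphere $r=2M_0$ rather than at $r=0$. A secondary, purely bookkeeping point is confirming that the lower endpoint $\tfrac94 M_0$ (and not, say, $3M_0$ or the photon sphere) is the sharp radius below which the bare Schwarzschild term already violates $8/9$, so that the statement is vacuous exactly where it must be; this follows from $2M_0/r\le\tfrac89\iff r\ge\tfrac94 M_0$.
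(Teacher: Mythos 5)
Your proposal correctly locates the difficulty (the regular-center hypothesis $\lambda(0)=0$ in Lemma~\ref{theobuchdahl}) but neither of your two routes around it works as stated. The comparison construction is circular: a regular static configuration of total mass $M_0$ with outer radius $R_{\mathrm{shell}}<\frac{9}{4}M_0$ cannot exist, because Lemma~\ref{theobuchdahl} applied to that very configuration at $r=R_{\mathrm{shell}}$ forces $2M_0/R_{\mathrm{shell}}\leq 8/9$, i.e.\ $R_{\mathrm{shell}}\geq\frac{9}{4}M_0$. So the parenthetical ``once the comparison shell is squeezed to the center so its outer radius is $<\frac{9}{4}M_0$'' describes an impossible object, and the identity $\tilde m(r)=M_0+m(r)$ cannot be arranged on all of $[\frac{9}{4}M_0,\infty)$. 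One could try shells with outer radius slightly above $\frac{9}{4}M_0$, but (a) their existence is the nontrivial sharpness result of \cite{and07}, a far heavier input than the lemma requires, and (b) fitting such a shell needs a vacuum gap of the given solution strictly beyond $\frac{9}{4}M_0$, which the hypotheses do not supply. The fallback of ``directly invoking'' \cite{and08} also fails: Theorem~1 there (Lemma~\ref{theobuchdahl} here) is stated for a regular center, not for an interior Schwarzschild region, which is exactly why the present paper proves Lemma~\ref{buchdahlbh} separately.

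What the paper actually does is the second alternative you mention in passing and do not carry out: it re-initializes the monotonicity argument of \cite{ks08}, \cite{a09} on the sphere $r=\frac{9}{4}M_0$ instead of at $r=0$. Integrating \eqref{eeq1} from $\frac{9}{4}M_0$ and using that the region $\left(2M_0,\frac{9}{4}M_0\right)$ is vacuum Schwarzschild (so that $\lambda\left(\frac{9}{4}M_0\right)$ is explicit) gives $e^{-2\lambda}=1-\frac{2(M_0+m(r))}{r}$. In the variables $x=\frac{2(M_0+m(r))}{r}$, $y=8\pi r^2p$, $\beta=2\ln r$, the curve starts at $(x,y)=\left(\frac{8}{9},0\right)$, where the functional $w(x,y)=\frac{(3(1-x)+1+y)^2}{1-x}$ equals exactly $16$; the condition $p+2p_T\leq\varrho$ translates via \eqref{eeq1}--\eqref{eeq3} and the TOV equation into a differential inequality that keeps $w\leq 16$ along the curve, which is equivalent to \eqref{mb}. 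Your bookkeeping observation that $2M_0/r\leq\frac{8}{9}\iff r\geq\frac{9}{4}M_0$ is correct and is precisely why the curve starts on the boundary $w=16$ at that radius, but the actual proof requires running the differential-inequality argument from there, not a gluing.
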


\begin{proof}
For the prove of the lemma we apply techniques that are already used in \cite{ks08} to prove the Buchdahl inequality for globally regular solutions without Schwarzschild singularity. Only the steps that differ from the proof of Theorem 4.1 in \cite{ks08}, or Theorem 1 in \cite{a09} for the charged case, will be described in detail.\par
By integrating the Einstein equation \eqref{eeq1} over the interval $\left(\frac{9M_0}{4},r\right)$ we obtain
\begin{equation}
e^{-2\lambda} = 1 - \frac{9M_0}{4r}\left(1-e^{-2\lambda_0}\right) - \frac{8\pi}{r}\int_{\frac{9M_0}{4}}^r s^2\varrho(s) \mathrm ds,
\end{equation}
where $\lambda_0 = \lambda\left(\frac{9M_0}{4}\right)$. Since we have vacuum on $\left(2M_0,\frac{9M_0}{4}\right)$ on this interval the metric is given by the Schwarzschild metric and one can compute $\lambda_0$ explicitly. One finds that
\begin{equation}
e^{-2\lambda} = 1- \frac{2(M_0 + m(r))}{r}.
\end{equation}
We plug this into the other Einstein equation \eqref{eeq2} and obtain the differential equation
\begin{equation}
\mu'(r) = \frac{1}{1-\frac{2(M_0+m(r))}{r}}\left(4\pi r p + \frac{M_0+m(r)}{r^2}\right).
\end{equation}
We now introduce the variables
\begin{equation}
x = \frac{2(M_0+m(r))}{r},\quad y = 8\pi r^2p(r).
\end{equation}
Note that $x < 1$ and $y \geq 0$. The first inequality must hold true since otherwise the metric function $\lambda$ would not stay bounded.
Next we let $\beta = 2\ln(r)$ and consider the curve $\left(x\left(e^{\beta/2}\right),y\left(e^{\beta/2}\right)\right)$ parameterized by $\beta$ in $[0,1) \times [0,\infty)$. In the following a dot denotes the derivative with respect to $\beta$. 
Using the Einstein equations and the generalized TOV equation (\ref{toveq})
one checks that $x$ and $y$ satisfy the equations
\begin{eqnarray}
8\pi r^2\varrho &=& 2\dot x + x, \label{xy_curve_1} \\
8\pi r^2 p &=& y,\\
8\pi r^2p_T &=& \frac{x+y}{2(1-x)}\dot x + \dot y + \frac{(x+y)^2}{4(1-x)}. \label{xy_curve_3}
\end{eqnarray}
By virtue of these equations \eqref{xy_curve_1} -- \eqref{xy_curve_3} the condition $p+2p_T \leq \varrho$ can be written in the form
\begin{equation}
(3x-2+y)\dot x + 2(1-x)\dot y \leq -\frac{\alpha(x,y)}{2},\quad \alpha = 3x^2-2x+y^2+2y.
\end{equation}
From now on the proof is analogue to the proof of Theorem 1 in \cite{a09} for the charged case. One defines the quantity
\begin{equation}
w(x,y) = \frac{(3(1-x)+1+y)^2}{1-x}
\end{equation}
and 
shows that since $0 \leq x < 1$ and $y \leq 0$ this quantity is bounded by $16$ along the curve $(x,y)$ with an optimization procedure. The inequality $w \leq 16$ is already equivalent to
\begin{equation}
\frac{2(M_0+m(r))}{r} \leq \frac{8}{9}
\end{equation}
for all $r\in\left[\frac{9M_0}{4},\infty\right)$ and the proof is complete.
\end{proof}
\begin{rem}
In the case when $M_0=0$ it is known that the inequality is sharp, cf.~\cite{and08} and \cite{ks08}. For the purpose of this work the bound (\ref{mb}) is sufficient and we have not tried to show sharpness.
\end{rem}
In the course of the proof of Theorem \ref{theo_la_pos_bh} we will need a continuation criterion for the solution of the Einstein equations, namely the following statement.

\begin{lemma} \label{genlemden}
Let $\La>0$, $\mu_0\in\mathbb R$ and $M_0,r_0>0$. Let $G_\phi$ and $H_\phi$ defined by equations \eqref{defgphi} and \eqref{defhphi}. Then the equation
\begin{equation} \label{eq_gen_mula}
\begin{aligned}
\mu_\Lambda'&=\frac{1}{1-\frac{\Lambda}{3} \left(r^2-\frac{r_0^3}{r}\right) - \frac{2}{r}\left(M_0 + 4\pi\int_{r_0}^r s^2G_\phi(s,\mu_\Lambda(s))\mathrm ds\right)}\\
&\quad \times \left(4\pi r H_\phi(s,\mu_\Lambda(s))-\Lambda\left(\frac{r}{3} + \frac{r_0^3}{6r^2}\right) \frac{1}{r^2}\left(M_0+4\pi\int_{r_0}^r s^2G_\phi(s,\mu_\Lambda(s))\mathrm ds\right)\right)
\end{aligned}
\end{equation}
has a unique local $C^2$-solution $\mu_\Lambda$ with $\mu(r_0)=\mu_0$ with maximal interval of existence $[r_0,R_c)$, $R_c>0$. Moreover, there exists $R_D \leq R_c$ such that
\begin{equation} \label{genlemden1}
\liminf_{r\to R_D}\left(1-\frac{\Lambda}{3} \left(r^2-\frac{r_0^3}{r}\right) - \frac{2}{r}\left(M_0 + 4\pi\int_{r_0}^r s^2G_\phi(s,\mu_\Lambda(s))\mathrm ds\right)\right) = 0.
\end{equation}
\end{lemma}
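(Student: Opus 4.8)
The plan is to mirror, almost verbatim, the two-part argument already carried out for Lemma \ref{lemloc} and Lemma \ref{lemden}, since equation \eqref{eq_gen_mula} differs from \eqref{eqsimy} only by the presence of the fixed mass parameter $M_0$ and the shift of the lower integration limit from $0$ to $r_0$. First I would establish local existence: integrating \eqref{eq_gen_mula} from $r_0$ with $\mu_\Lambda(r_0)=\mu_0$ turns the problem into a fixed point equation $\mu_\Lambda = \mathcal T\mu_\Lambda$ for an operator $\mathcal T$ of exactly the same shape as $T$ in \eqref{eqdeft}, acting on the analogue of the set $M$ in \eqref{eqdefm}, namely functions $u:[r_0,r_0+\delta]\to\mathbb R$ with $u(r_0)=\mu_0$, $|u(r)-\mu_0|\leq 1$, and $\frac{\Lambda}{3}(r^2-\frac{r_0^3}{r})+\frac{2}{r}(M_0+4\pi\int_{r_0}^r s^2 G_\phi(s,u(s))\,\mathrm ds)\leq c<1$. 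Because $r_0>0$ there is no coordinate singularity to worry about — unlike in Lemma \ref{lemloc}, where the origin required extra care — so the regularity of $G_\phi$ and $H_\phi$ from Lemma \ref{lemmagh}(\ref{ghdiff}) makes $\mathcal T$ a contraction on this set once $\delta$ is small enough; this is the content referenced in Appendix \ref{apptcontr}, adapted with $M_0$ and $r_0$ carried along. Banach's fixed point theorem then yields a unique solution, differentiation of the integral identity shows it solves \eqref{eq_gen_mula}, and the standard ODE continuation argument extends it to a maximal interval $[r_0,R_c)$. The $C^2$ regularity follows because the right-hand side of \eqref{eq_gen_mula}, as long as the denominator is positive, is a $C^1$ function of $r$ and $\mu_\Lambda$.

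Next I would prove the continuation criterion \eqref{genlemden1}, following the proof of Lemma \ref{lemden} line by line. Write $D_\Lambda(r)$ for the denominator appearing in \eqref{eq_gen_mula} and in \eqref{genlemden1}. If $D_\Lambda$ fails to stay positive on $[0,R_c)$, then by continuity of $\mu_\Lambda$ and $G_\phi$ there is a radius $R_D<R_c$ at which it first vanishes and the liminf is trivially zero. Otherwise assume for contradiction that $D_\Lambda(r)\geq a>0$ on all of $[r_0,R_c)$; one then bounds $|\mu_\Lambda'(r)|$ by $\frac{1}{a}$ times the (manifestly positive) numerator, and argues — exactly as in Lemma \ref{lemden}, invoking the monotonicity of $G_\phi$ and $H_\phi$ in both arguments from Lemma \ref{lemmagh}(\ref{ghinc}) — that if either $H_\phi(r,\mu_\Lambda(r))$ or $\int_{r_0}^r s^2 G_\phi(s,\mu_\Lambda(s))\,\mathrm ds$ blew up as $r\to R_c$ then $\mu_\Lambda(r)\to\infty$, forcing $\mu_\Lambda'(r)>0$ near $R_c$, which via \eqref{eq_gen_mula} contradicts the divergence because the remaining term $\Lambda(\frac r3+\frac{r_0^3}{6r^2})\frac{1}{r^2}(M_0+4\pi\int_{r_0}^r s^2 G_\phi\,\mathrm ds)$ stays bounded on the bounded interval $[r_0,R_c)$. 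Hence $|\mu_\Lambda'|\leq C$ on $[r_0,R_c)$, and then — repeating the gluing-operator construction $T_\varepsilon$ on a set $M_\varepsilon$, now with base point $R_c-\varepsilon$, mass $M_0$, and lower limit $r_0$ — one extends $\mu_\Lambda$ past $R_c$, contradicting maximality. Therefore the liminf condition \eqref{genlemden1} holds for some $R_D\leq R_c$.

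The only genuinely new bookkeeping compared to Lemma \ref{lemden} is checking that the extra cosmological term in the numerator of \eqref{eq_gen_mula}, which is no longer simply $-\frac{2r\Lambda}{3\kappa}$ but carries the factor $\frac{r}{3}+\frac{r_0^3}{6r^2}$ multiplied against $\frac{1}{r^2}(M_0+4\pi\int_{r_0}^r s^2 G_\phi\,\mathrm ds)$, still remains bounded on bounded $r$-intervals; this is immediate since $r\geq r_0>0$ there and $G_\phi$ is bounded on compacts by Lemma \ref{lemmagh}. I expect the main obstacle to be purely notational rather than conceptual: one must verify that the contraction estimates in Appendix \ref{apptcontr} go through with the denominator $D_\Lambda$ in place of $1-\frac{s^2\Lambda}{3}-\frac{\kappa}{s}\int_0^s\sigma^2 G_\phi\,\mathrm d\sigma$, using that on the relevant sets $D_\Lambda$ is bounded below by $1-c>0$ and that $M_0$ contributes a fixed additive constant that drops out of all differences. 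Since the structure of the proofs of Lemmas \ref{lemloc} and \ref{lemden} is entirely robust to these modifications, I would state that the proof is analogous to those of Lemmas \ref{lemloc} and \ref{lemden}, indicating only the role of the $M_0$ term and the shifted lower limit.
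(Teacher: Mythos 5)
Your proposal is correct and follows essentially the same route as the paper, which likewise reduces the lemma to the situation of Lemmas \ref{lemloc} and \ref{lemden} and observes that the additional terms containing $M_0$ and $r_0$ are bounded and well behaved on finite intervals $[r_0,R_c)$. The only cosmetic difference is that the paper dispenses with the fixed-point machinery for local existence altogether, since $r_0>0$ removes the coordinate singularity and standard ODE theory applies directly to the regular right-hand side.
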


\begin{proof}
The local existence of a $C^2$-solution of equation \eqref{eq_gen_mula} follows from the regularity of the right hand side. Basically one is in the situation of Lemma \ref{lemden}, i.e., the case with a regular center and $\Lambda>0$, except for the fact that there are additional terms containing $r_0$ and $M_0$. But on a finite interval $[r_0,R_c)$ these terms are bounded and well behaved, i.e.~the proof can be carried out in an analogue way.
\end{proof}

\begin{rem}
Lemma \ref{genlemden} implies that if
the denominator of the right hand side of equation \eqref{eq_gen_mula} is strictly larger than zero on an interval $[r_0,r)$, then $\mu_\Lambda$ can be extended beyond $r$ as a solution of \eqref{eq_gen_mula}.
\end{rem}
The following theorem states the existence of solutions for $\La>0$ with a Schwarzschild singularity at the center.
\begin{theorem} \label{theo_la_pos_bh}
Let $\Phi:\mathbb R^2 \to [0,\infty)$ be of the form (\ref{ouransf}) with $E_0=1$ and let $L_0,M_0 \geq 0$ such that $L_0 > 16M_0^2$. 
Then there exists a unique solution $\mu_\Lambda, \lambda_\Lambda \in C^2((r_{B\Lambda},\infty))$, $f\in C^0((r_{B\Lambda},\infty)\times \mathbb R^3)$ of the Einstein-Vlasov system \eqref{eqvlasov} -- \eqref{p} for $\Lambda > 0$ sufficiently small. The spatial support of the distribution function $f_\Lambda$ is contained in a shell $\{r_{+\Lambda} < r < R_{0\Lambda}\}$. In the complement of this shell the solution of the Einstein equations is given by the Schwarzschild-deSitter metric.
\end{theorem}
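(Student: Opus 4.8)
The plan is to mimic the strategy used for Theorem \ref{main-thm}, replacing the regular-center background solution by a Schwarzschild-deSitter black hole background, and replacing Lemma \ref{theobuchdahl} by its singular analogue Lemma \ref{buchdahlbh}. Concretely, I would start from pure vacuum for $r>r_{B\Lambda}$, so that $\mu_\Lambda,\lambda_\Lambda$ are given by \eqref{ssdssol} with mass parameter $M_0$. The condition $L_0>16M_0^2$ is exactly what guarantees that the ansatz \eqref{ouransf} still produces vacuum slightly outside the horizon: since $f=\Phi(E,L)$ vanishes unless $L>L_0$, and in the Schwarzschild-deSitter region $L\le |x\times v|^2$ is controlled in terms of $E$, $r$ and $M_0$, one checks that for $r$ in a nontrivial interval $[r_{-\Lambda},r_{+\Lambda}]$ (with $r_{-\Lambda}$ only slightly larger than $r_{B\Lambda}$) the support condition $\varepsilon e^{-y}\le 1$ together with $L>L_0$ cannot be simultaneously met, so $G_\phi(r,y(r))=H_\phi(r,y(r))=0$ there. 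This is the step where the quantitative hypothesis $L_0>16M_0^2$ enters and where I expect the first bit of real work.

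Next, at some radius $r_0\in(r_{-\Lambda},r_{+\Lambda})$ I would switch on the matter ansatz and invoke Lemma \ref{genlemden} to get a unique local $C^2$-solution $\mu_\Lambda$ of \eqref{eq_gen_mula} (equivalently, the $y$-equation \eqref{eqsimy} adapted to a Schwarzschild-singular center), existing on a maximal interval $[r_0,R_c)$ with the continuation criterion \eqref{genlemden1}. The heart of the argument is then a perturbation-from-$\Lambda=0$ estimate exactly parallel to Lemma \ref{bounded}: I compare $(\mu_\Lambda,\lambda_\Lambda)$ with the globally existing $\Lambda=0$ background solution with the same black hole mass $M_0$ and the same data at $r_0$. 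Using Lemma \ref{lemmagh}(\ref{ghinc}) to bound $G_\phi$, $H_\phi$ and their $y$-derivatives on a fixed finite $r$-interval, the mean value theorem to control $|\varrho_\Lambda-\varrho|+|p_\Lambda-p|$ by a constant times $\Lambda$ times $|\mu_\Lambda-\mu|$, and a Grönwall-type bootstrap on the distance $|\mu_\Lambda(r)-\mu(r)|$ together with the singular Buchdahl bound \eqref{mb} (which supplies the lower bound $1-2(M_0+m(r))/r\ge 1/9$ for $r\ge \frac94 M_0$), I would show that for all sufficiently small $\Lambda>0$ the denominator in \eqref{eq_gen_mula} stays bounded below up to any prescribed radius $R_0+\Delta R$, so that $R_c$ is at least $R_0+\Delta R$, where $R_0$ is the outer edge of the matter support of the background solution.

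Once existence past the matter region is secured, compact support of $f_\Lambda$ follows as in Lemma \ref{bounded}: the $\Lambda=0$ background has $y(R_0)=0$ with $y$ strictly monotone, the closeness estimate forces $y_\Lambda$ to become negative in $(R_0,R_0+\Delta R)$, and by Lemma \ref{lemmagh}(\ref{ghvac}) the matter vanishes where $\varepsilon e^{-y}\ge 1$; the largest zero $R_{0\Lambda}$ of $y_\Lambda$ in that interval then bounds the support, giving the shell $\{r_{+\Lambda}<r<R_{0\Lambda}\}$. For $r>R_{0\Lambda}$ the equation \eqref{eq_gen_mula} again reduces to the vacuum form and one glues in a Schwarzschild-deSitter metric with the total interior mass $M_0+m_\Lambda(R_{0\Lambda})$, matching $\mu_\Lambda$ and its first derivative continuously exactly as in the proof of Theorem \ref{main-thm}; the resulting triple $(\mu_\Lambda,\lambda_\Lambda,f_\Lambda)$ solves \eqref{eqvlasov}--\eqref{p} on $(r_{B\Lambda},\infty)$. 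The main obstacle, as in Lemma \ref{bounded}, is keeping the perturbation estimate closed: one must choose $\Lambda$ small in terms of $\mu_0$, $\phi$, $L_0$, $M_0$ and the fixed radius $R_0+\Delta R$ so that simultaneously the denominator stays away from zero and $|y_\Lambda-y|$ stays below $|y(R_0+\Delta R)|$ — the extra $r_0$- and $M_0$-dependent terms in \eqref{eq_gen_mula} make the constants $C_y$, $C_v$ bulkier but, being bounded on a finite interval, do not change the structure of the argument.
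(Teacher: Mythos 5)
Your proposal follows essentially the same route as the paper's proof: establishing a vacuum annulus $[r_{-\Lambda},r_{+\Lambda}]$ outside the horizon where the ansatz itself yields vacuum (this is where $L_0>16M_0^2$ does its work, via the barrier function $a_\Lambda(r)=e^{-y_\Lambda(r)}\sqrt{1+L_0/r^2}$ exceeding $1$), switching on the matter there, running the Lemma~\ref{bounded}-style perturbation off the $\Lambda=0$ background with Gr\"onwall and the singular Buchdahl inequality of Lemma~\ref{buchdahlbh}, and gluing Schwarzschild-deSitter beyond the support. The only slips are cosmetic (e.g.\ $r_{-\Lambda}$ lies above $r_-$, not just slightly above $r_{B\Lambda}$, and the mean value theorem gives $|\varrho_\Lambda-\varrho|+|p_\Lambda-p|\le C|\mu_\Lambda-\mu|$ with the factor of $\Lambda$ emerging only after Gr\"onwall), and they do not affect the argument.
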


\begin{rem}
In the course of the proof one will come across the fact that in one of the vacuum regions, either $r \leq r_{+\Lambda}$ or $r \geq R_{0\Lambda}$, the component $\mu_\mathrm{vac}$ given by $e^{2\mu_\mathrm{vac}}=1-\frac{r^2\Lambda}{3}-\frac{2M}{r}$ of the Schwarzschild-deSitter metric will be shifted by a constant. But this shift is just a reparametrization of the time $t$ \cite{rein94}. Thus the shell of Vlasov matter causes a redshift.
\end{rem}

\begin{proof}
In the first part of the proof we consider the black hole region and show that the chosen parameters lead to
the configuration depicted in~Figure \ref{fig_radii}. Then we make use of the existence of a background solution and construct the desired solution $\mu_\Lambda$. \par
\begin{figure}[h]
\begin{center}
\setlength{\unitlength}{0.14in}
\begin{picture}(38,18)
\put(0,1){\includegraphics[width=0.9\textwidth]{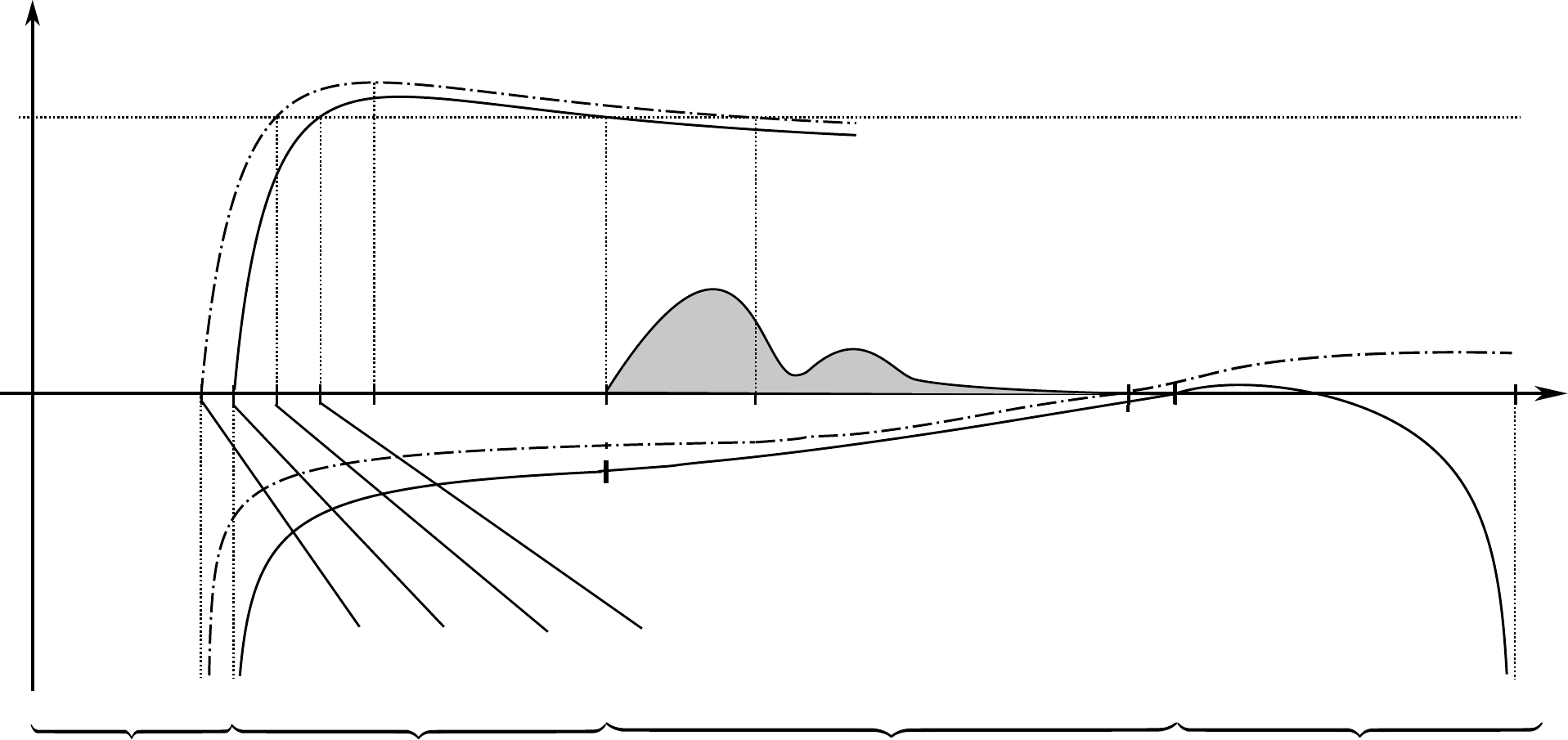}}
\put(-0.2,15.65){1}
\put(17.7,16.5){$a(r)$}
\put(19.3,14.5){$a_\Lambda(r)$}
\put(15.5,10){$\varrho_\Lambda(r)$}
\put(26.6,9.8){$R_0$}
\put(28,9.8){$R_{0\Lambda}$}
\put(9.05,8.3){$\hat r$}
\put(14.2,8.5){$r_{+\Lambda}$}
\put(18,8.5){$r_+$}
\put(38,8.5){$r$}
\put(8.3,2.7){$r_B$}
\put(10.2,2.7){$r_{B\Lambda}$}
\put(13.2,2.7){$r_-$}
\put(15.2,2.7){$r_{-\Lambda}$}
\put(13.3,6.2){gluing}
\put(27.1,8){gluing}
\put(34,11){$\mu(r)$}
\put(32.5,6){$\mu_\Lambda(r)$}
\put(0.7,0){black hole}
\put(8.3,0){vacuum}
\put(20,0){matter}
\put(31,0){vacuum}
\end{picture}
\caption{{Qualitative sketch of a black hole configuration surrounded by a shell of matter} \label{fig_radii}}
\end{center}
\end{figure}
We define the functions
\begin{eqnarray}
a(r) &=& \sqrt{1-\frac{2M_0}{r}}\sqrt{1+\frac{L_0}{r^2}}, \\
a_\Lambda(r) &=& \sqrt{1-\frac{r^2\Lambda}{3}-\frac{2M_0}{r}}\sqrt{1+\frac{L_0}{r^2}}.
\end{eqnarray}
Moreover, we define $r_-$ and $r_+$ to be the first and second radius where $a(r)=1$, respectively, and $r_B:=2M_0$ to be the event horizon of the black hole. Since $L_0>16 M_0^2$ we have $r_B < r_- < r_+$ (cf.~\cite{rein94}). {Note also that $r_+ > 4M_0 > \frac{18}{5}M_0$.}\par
Since $9M_0^2\Lambda < 1$ by assumption ($\Lambda$ is chosen to be small), there exists a black hole horizon $r_{B\Lambda}$ of the Schwarzschild-deSitter metric with parameters $M_0$ and $\Lambda$. It can be calculated explicitly\footnote{To assure oneself of that one has chosen the right zero, {using $\frac{\mathrm d}{\mathrm dx}\arccos(x)=-\frac{1}{\sqrt{1-x^2}}$} one checks {$$\lim_{\Lambda \to 0} r_{B\Lambda} \stackrel{\mathrm{l'H\hat{o}spital}}{=} \lim_{\Lambda\to 0}\frac{2\sin\left(\frac 1 3 \arccos\left(-3M_0\sqrt \Lambda\right)+\frac \pi 3\right)\frac{1}{3\sqrt{1-(3M_0\sqrt \Lambda)^2}}\frac{3M_0}{2\sqrt \Lambda}}{1/\left(2\sqrt \Lambda\right)} = 2M_0.$$}} by
\begin{equation}
r_{B\Lambda} = -\frac{2}{\sqrt{\Lambda}} \cos\left(\frac{1}{3}\arccos\left(-3M_0\sqrt{\Lambda}\right)+\frac{\pi}{3}\right).
\end{equation}
Note that $r_B < r_{B\Lambda}$. We construct an upper bound to $r_{B\Lambda}$. Set $v(r) = 1-\frac{2M_0}{r}$.
\begin{equation}
\begin{aligned}
v(r_{B\Lambda}) &= \int_{r_B}^{r_{B\Lambda}} v'(s)\mathrm ds + \underbrace{v(r_B)}_{=0}\\
& \geq \int_{r_B}^{r_{B\Lambda}}\left(\inf_{s\in[r_B,r_{B\Lambda}]}v'(s)\right)\mathrm ds = (r_{B\Lambda}-r_B)v'(r_{B\Lambda}) \\
\Rightarrow \quad r_{B\Lambda} &\leq r_B + \frac{v(r_{B\Lambda})}{v'(r_{B\Lambda})}
\end{aligned}
\end{equation}
A short calculation yields $v(r_{B\Lambda})=\frac{r_{B\Lambda}^2\Lambda}{3}$ and $v'(r_{B\Lambda})=\frac{2M_0}{r_{B\Lambda}^2}$. One also checks by explicit calculation that $\frac{\mathrm d r_{B\Lambda}}{\mathrm d \Lambda} > 0$. So the distance
\begin{equation}
r_{B\Lambda} - r_B \leq \frac{r_{B\Lambda}^4 \Lambda}{6M_0}
\end{equation}
between the two horizons can be made arbitrarily small if $\Lambda$ is chosen to be sufficiently small. In particular we need $\Lambda$ to be small enough to assure $r_{B\Lambda} < r_-$. \par
Next we define $r_{-\Lambda}$ and $r_{+\Lambda}$ to be the first and second radius where $a_\Lambda(r)=1$. Note that $a(r) > a_\Lambda(r)$ for all $r\in(r_{B\Lambda},r_C)$, where $r_C$ is the cosmological horizon {of the vacuum solution}, thus the second positive zero of $1-r^2\Lambda/3-2M_0/r$. Between $r_-$ and $r_+$ the function $a(r)$ has a {unique} maximum at $r=\hat r$, given by
\begin{equation}
\hat r = \frac{L_0-\sqrt{L_0^2-12M_0^2L_0}}{2M_0}.
\end{equation}
We consider the distance between $a^2(r)$ and $a_\Lambda^2(r)$ at this radius $\hat r$:
\begin{equation}
|a^2(\hat r)-a_\Lambda^2(\hat r)| = \Lambda \frac{\hat r^2 + L_0}{3}.
\end{equation}
Choosing $\Lambda$ sufficiently small one can attain $|a^2(\hat r) - a_\Lambda^2(\hat r)| < a^2(\hat r) -1$. This implies that $a_\Lambda(r)-1$ has {exactly} two zeros in the interval $(r_-,r_+)$. This in turn yields the desired configuration
\begin{equation}
2M_0=r_B < r_{B\Lambda} < r_- < r_{-\Lambda} < \hat r < r_{+\Lambda} < r_+.
\end{equation} \par
In the vacuum region $[r_{-\Lambda},r_{+\Lambda}]$ the function $a_{\Lambda}(r)$ coincides with the expression $e^{-y_{\Lambda}(r)}\sqrt{1+\frac{L_0}{r^2}}$. Lemma \ref{lemmagh}, (\ref{ghvac}) implies that therefore for $r\in[r_{-\Lambda},r_{+\Lambda}]$ also the ansatz $\Phi$ for the distribution function $f$ yields $\varrho_\Lambda(r) = G_\phi(r,y_\Lambda(r))=0$ and $p_\Lambda(r)=H_\phi(r,y_\Lambda(r))=0$. So at $r=r_{+\Lambda}$ one can continue $f$ by the ansatz $\Phi$ in a continuous way and for $r\geq r_{+\Lambda}$ the Einstein equations lead to the differential equation
\begin{equation}\label{eqmulbh}
\begin{aligned}
\mu_\Lambda' &= \frac{1}{1 - \frac{\Lambda}{3} \left(r^2-\frac{r_{+\Lambda}^3}{r}\right) - \frac{2}{r} \left(\frac{r_{+\Lambda}}{2} \left(1-e^{-2\lambda_0}\right) + 4\pi\int_{r_{+\Lambda}}^r s^2\varrho_\Lambda(s)\mathrm ds\right)} \\
&\qquad\times
\left( 4\pi rp_{\Lambda} - \Lambda \left(\frac{r}{3}+\frac{r_{+\Lambda}^3}{6r^2} \right) + \frac{r_{+\Lambda}}{2r^2} \left(1-e^{-2\lambda_0}\right) + \frac{4\pi}{r^2} \int_{r_{+\Lambda}}^r s^2 \varrho_\Lambda(s) \mathrm ds \right)
\end{aligned}
\end{equation}
where $\lambda_0=\lambda(r_{+\Lambda})$. \par
There exists a background solution $\mu\in C^2((2M_0,\infty))$ to the Einstein equations with $\Lambda = 0$ (cf.~\cite{rein94}). For $r\in(2M_0, r_{+\Lambda}]$ this solution is given by the Schwarzschild metric and for $r>r_{+\Lambda}$ as a solution of equation \eqref{eqmulbh} with $\Lambda = 0$. The background solution is continuous at $r_{+\Lambda}$ if
\begin{equation}
\frac{r_{+\Lambda}}{2}\left(1-e^{-2\lambda_0}\right)=M_0.
\end{equation}
Furthermore, the background solution $\mu$ has the property that there exists $R_0>0$ such that $\mu(R_0)=0$ which implies that the support of matter quantities $\varrho$ and $p$ is contained in the interval $(r_+,R_0)$ (cf.~\cite{rein94}). In the remainder of the proof we show that using properties of this background solution $\mu$ one obtains a global solution $\mu_\Lambda$ of equation \eqref{eqmulbh}. We set
\begin{eqnarray}
\mu_{0\Lambda} &=& \frac 1 2 \ln\left(1-\frac{r_{+\Lambda}^2\Lambda}{3} - \frac{2M_0}{r_{+\Lambda}}\right),\label{mu0lbh} \\
\mu_0 &=& \mu(r_{+\Lambda}) = \frac 1 2 \ln\left(1-\frac{2M_0}{r_{+\Lambda}}\right).
\end{eqnarray}
In the following we seek for a solution $\mu_\Lambda$ of equation \eqref{eqmulbh} on on an interval beginning at $r=r_{+\Lambda}$ with the initial value $\mu_{0\Lambda}$ at given in \eqref{mu0lbh} that we can glue to the vacuum solution on $(r_{B\Lambda},r_{+\Lambda}]$. Note that $\mu_{0\Lambda} < 0$.
Since there are no issues with an irregular center the local existence of $\mu_\Lambda$ on an interval $(r_{+\Lambda},r_{+\Lambda}+\delta]$, $\delta > 0$ follows from the regularity of the right hand side of equation \eqref{eqmulbh}. So let $(2M_0,R_c)$ be the maximum interval of existence of $\mu_\Lambda$. We define
\begin{eqnarray}
v_{M_0}(r) &=& 1 - \frac{2}{r}\left(M_0+4\pi\int_{r_{+\Lambda}}^r s^2\varrho(s)\mathrm ds\right), \\
v_{M_0\Lambda}(r)&=&1-\frac{\Lambda}{3}\left(r^2-\frac{r_{+\Lambda}^3}{r}\right) - \frac{2}{r}\left(M_0+4\pi\int_{r_{+\Lambda}}^r s^2\varrho_\Lambda(s)\mathrm ds\right)
\end{eqnarray}
as the denominator of the right hand side of equation \eqref{eqmulbh}. We set \begin{equation}
\Delta v_0 := \frac{1}{18} v_{M_0\Lambda} (r_{+\Lambda})=\frac{1-\frac{2M_0}{r_{+\Lambda}}}{18} \leq \frac{1}{18}
\end{equation}
and define the radii

\begin{equation}
\begin{aligned}
r^* &= \inf\left\{r\in(r_{+\Lambda},R_c)\;\left|\;v_{M_0\Lambda}(r)= \Delta v_0 \right.\right\}, \\
\tilde r &= \sup\{r\in(r_{+\Lambda},R_c)\;|\;|\mu_\Lambda(r)-\mu(r)| \leq \mu(R_0+\Delta R)\},
\end{aligned}
\end{equation}
and set $\tilde r^* := \min\{\tilde r, r^*\}$. Note that $\mu(R_0+\Delta R)>0$ since $\mu(R_0)=0$ and $\mu$ is strictly increasing.
We assume that $r\leq \tilde r^*$ and calculate $|\mu(r) - \mu_\Lambda(r)|$. 
To make calculations more convenient, we extend $\varrho$ and $p$ on $[0,2M_0]$ as constant zero such that integrals of $\varrho$ and $p$ over $(r_{+},r)$ can be replaced by integrals over $(0,r)$. First we calculate
\begin{equation}
|\mu_0-\mu_{0\Lambda}| = \frac 1 2 \ln\left[1+\frac{r_{+\Lambda}^2 \Lambda}{3}\left(1-\frac{r_{+\Lambda}^2\Lambda}{3}-\frac{2M_0}{r_{+\Lambda}}\right)^{-1}\right] =:C_{0\Lambda}(r). 
\end{equation}
We write
\begin{equation}
\begin{aligned}
&|\mu(r) - \mu_\Lambda(r)| \\
&\leq \int_{r_{+\Lambda}}^r \frac{1}{v_{M_0\Lambda}(s)} \bigg[4\pi s|p_\Lambda(s)-p(s)| - \Lambda \left(\frac{s}{3} + \frac{r_{+\Lambda}^3}{s^2}\right)\\
&\qquad\qquad\qquad\qquad + \frac{4\pi}{s^2} \int_0^s \sigma^2 |\varrho_\Lambda(\sigma)-\varrho(\sigma)| \mathrm d\sigma \bigg]\mathrm ds\\
&\quad+ \int_{r_{+\Lambda}}^r \left(4\pi sp(s) + \frac{4\pi}{s^2} \int_0^s \sigma^2 \varrho(\sigma)\mathrm d\sigma\right) \left|\frac{1}{v_{M_0\Lambda}(s)}-\frac{1}{v_{M_0}(s)}\right| \mathrm ds + C_{0\Lambda}(r)
\end{aligned}
\end{equation}
{We would like to apply the generalized Buchdahl inequality (Lemma \ref{buchdahlbh}) to the background solution $\mu$ on the interval $[r_{+\Lambda},\infty)$. We have that $r_{+\Lambda} > \hat r \geq 3M_0 > 9/4M_0 $. The crucial condition is the existence of a vacuum region on $\left(2M_0,\frac{9}{4}M_0\right]$. But this is ensured by virtue of the assumption $L_0 > 16M_0^2$ which implies $r_+>4M_0$.} So the difference $|\mu(r)-\mu_\Lambda(r)|$ can be further simplified and estimated. Using similar estimates as in Appendix B we obtain
an inequality of the form
\begin{equation}
|\mu(r) - \mu_\Lambda(r)| \leq C_\Lambda(r) + C(r) \int_0^r \left(|p(s)-p_\Lambda(s)|+|\varrho(s)-\varrho_\Lambda(s)|\right)\mathrm ds
\end{equation}
where $C(r)$ is increasing in $r$, $C_\Lambda(r)$ is increasing both in $\Lambda$ and $r$ and we have $C_\Lambda(r) = 0$ if $\Lambda = 0$. Note that the constants are fully determined by $M_0$, $L_0$, $\phi$ and $\mu$. \par
In virtue of the mean value theorem, the sum $|p_\Lambda - p|+|\varrho_\Lambda-\varrho|$ can be estimated as
\begin{equation}
|p_\Lambda(r) - p(r)|+|\varrho_\Lambda(r)-\varrho(r)| \leq C \cdot |\mu_\Lambda(r) - \mu(r)|,
\end{equation}
where the constant $C$ is determined by the derivatives of $G_\phi$ and $H_\phi$. A Gr\"onwall argument yields $|\mu_\Lambda(r)-\mu(r)| \leq C_{\mu\Lambda}(r)$ implying $|\varrho_\Lambda(r)-\varrho(r)| \leq C_{g\Lambda}(r)$ with certain constants $C_{g\Lambda}$ and $C_{\mu\Lambda}$.\par
One can choose $\Lambda$ small enough such that for all $r\in(r_{+\Lambda},R_0+\Delta R]$ we have
\begin{equation}
|\mu_\Lambda(r)-\mu(r)|<\mu(R_0+\Delta R).
\end{equation}
Moreover, we consider the difference
\begin{equation}
\left|v_{M_0}(r)-v_{M_0\Lambda}(r)\right| \leq \frac \Lambda 3 \left| r^2-\frac{r_{+\Lambda}^3}{r} \right| + \frac{8\pi r^2}{3} C_{g\Lambda}(r).
\end{equation}
Lemma \ref{buchdahlbh} implies $v_{M_0}(r)\geq \frac 1 9$ for all $r\in(r_{+\Lambda},\infty)$. Choosing $\Lambda$ sufficiently small, such that for all $r\in(r_{+\Lambda},R_0+\Delta R]$ we have $\left|v_{M_0}(r)-v_{M_0\Lambda}(r)\right| \leq \frac{1}{18}$ one obtains $v_{M_0\Lambda} \geq \frac{1}{18}$ on $(r_{+\Lambda},R_0+\Delta R]$.
\par
So altogether, one has deduced that $\tilde r^* \geq R_0 + \Delta R$ if $\Lambda$ is chosen sufficiently small. This implies that $\mu_\Lambda$ exists at least on $[0,R_0+\Delta R]$ by Lemma \ref{genlemden} and also that $\mu_\Lambda(R_0+\Delta R) > 0$. From the latter property one deduces that there exists a radius $R_{0\Lambda} > R_0$ such that for all $r\in[R_{0\Lambda},R_0+\Delta R]$ we have $\varrho_\Lambda(r) = p_\Lambda(r) = 0$. On this interval, we can glue an appropriately shifted Schwarzschild-de Sitter metric to $\mu_\Lambda$. This yields the desired solution defined on $(r_{B\Lambda},\infty)$.
\end{proof}

\begin{rem}
To see that the solutions constructed in Theorem \ref{theo_la_pos_bh} are non-vacuum, one checks {that for $r\geq r_{+\Lambda}$ one has
\begin{equation}
\frac{\mathrm d}{\mathrm dr}a_\Lambda(r) < 0 \quad \mathrm{and} \quad \frac{\mathrm d^2}{\mathrm dr^2}a_\Lambda(r) \leq 0.
\end{equation}
Since $a_\Lambda(r)$ corresponds to $e^{-y_\Lambda(r)}$,} this implies that for some $r > r_{+\Lambda}$ the quantity $e^{-y_\Lambda(r)}\sqrt{1+\frac{L_0}{r^2}} < 1$ which in turn implies by Lemma \ref{lemmagh}, (\ref{ghvac}) that $\varrho_\Lambda(r),p_\Lambda(r) > 0$ for some $r > r_{+\Lambda}$.
\end{rem}

\begin{rem} \label{rem_shift_bh}
In contrary to the metric without a singularity at the center, the metric with a Schwarzschild singularity does not coincide with the not shifted Schwarzschild-de Sitter solution for $r>R_{0\Lambda}$. This can be seen as follows. We have
\begin{equation}
\mu_\Lambda'(r)\geq \frac 1 2 \frac{\mathrm d}{\mathrm dr} \ln\left(1-\frac{r^2\Lambda}{3}-\frac{2M_0}{r}\right).
\end{equation}
Certainly, the mass parameter $M$ of the vacuum solution, that is glued on in the outer region, is larger than $M_0$. This implies
\begin{equation}
1-\frac{r^2\Lambda}{3}-\frac{2M_0}{r} > 1-\frac{r^2\Lambda}{3}-\frac{2M}{r}
\end{equation}
for all $r\in(r_{B\Lambda},r_C)$. So there is no ansatz $\Phi$ for the matter distribution that yields a metric component $\mu_\Lambda$ that connects the two vacuum solutions without any shift. But by suitable choice of $\Phi$ and $E_0$ one can determine whether the inner or the outer Schwarzschild-deSitter metric is shifted. For the maximal $C^2$-extension of the metric constructed in Theorem \ref{theo_la_pos_bh} we will need the aolution to coincide with the not shifted Schwarzschild-deSitter metric for $r>R_{0\Lambda}$.
\end{rem}

\subsection{Matter shells immersed in Schwarzschild-AdS spacetimes}\label{bhAdS}
We construct solutions of the Einstein-Vlasov system with a Schwarzschild singularity at the center for the case $\Lambda < 0$. The result is given in the following theorem.

\begin{theorem}\label{theo_la_neg_bh}
Let $\Phi:\mathbb R^2 \to [0,\infty)$ be of the form (\ref{ouransf}) and let $L_0,M_0 \geq 0$ such that $L_0 < 16M_0^2$. 
Then there exists a unique solution $\mu_\Lambda, \lambda_\Lambda \in C^2((r_{B\Lambda},\infty))$, $f\in C^0((r_{B\Lambda},\infty)\times \mathbb R^3)$ of the Einstein-Vlasov system \eqref{eqvlasov} -- \eqref{p} for $\Lambda < 0$ and $|\Lambda|$ sufficiently small. The spatial support of the distribution function $f_\Lambda$ is contained in a shell, $\{r_{+\Lambda} < r < R_{0\Lambda}\}$. In {the complement} of this shell, the solution of the Einstein equations is given by the Schwarzschild-AdS metric.
\end{theorem}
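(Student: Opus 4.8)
The plan is to mirror the proof of Theorem~\ref{theo_la_pos_bh}, substituting the ingredients that are special to $\La>0$ by their $\La<0$ counterparts from Section~\ref{sec : sage neg}. As announced, the solution is assembled from four radial zones: an interior Schwarzschild--AdS black hole, a vacuum collar around its horizon, a shell of Vlasov matter, and an outer Schwarzschild--AdS region. The decisive change is that the continuation criterion of Lemma~\ref{genlemden} and the Buchdahl estimate of Lemma~\ref{buchdahlbh}, which drove the $\La>0$ argument, are replaced by the energy argument of Theorem~\ref{theo_glo_adS} together with the comparison argument of Theorem~\ref{theo_bou_adS}; indeed, for $L_0<16M_0^2$ the associated $\La=0$ solution has no vacuum collar around $r=2M_0$ in which to anchor a Buchdahl inequality, so the Buchdahl route is simply unavailable.

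First I would fix the vacuum geometry. For $\ab{\La}$ sufficiently small the function $1+\tfrac{r^2\ab{\La}}{3}-\tfrac{2M_0}{r}$ has a unique positive zero $r_{B\La}$, the Schwarzschild--AdS horizon, and one checks $r_{B\La}<2M_0$, $r_{B\La}\to 2M_0$ as $\La\to 0^-$, and $\tfrac{\mathrm d r_{B\La}}{\mathrm d\La}$ bounded; this is the analogue of the horizon analysis in Theorem~\ref{theo_la_pos_bh}, with the single AdS horizon in place of the two Schwarzschild--deSitter horizons. On $(r_{B\La},r_{+\La}]$ one sets $f\equiv0$ and takes the Schwarzschild--AdS metric with mass $M_0$, i.e.~$e^{2\mu_\La}=e^{-2\lambda_\La}=1+\tfrac{r^2\ab{\La}}{3}-\tfrac{2M_0}{r}$, and defines $r_{+\La}$ as the radius at which the quantity $e^{-y_\La(r)}\sqrt{1+L_0/r^2}$, evaluated along this vacuum branch, drops to the value $1$ --- the value $E_0$ in the ansatz being still free to choose here --- so that by Lemma~\ref{lemmagh}~(\ref{ghvac}) the ansatz $\Phi$ can be switched on at $r_{+\La}$ continuously (it already yields $\varrho_\La=p_\La=0$ at $r_{+\La}$) and produces matter just beyond. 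Establishing this configuration for $\ab{\La}$ small, under the hypothesis $L_0<16M_0^2$ --- in particular the ordering $r_{B\La}<r_{+\La}$ and the correct qualitative behaviour of $e^{-y_\La}\sqrt{1+L_0/r^2}$ on the vacuum branch --- is what plays the role of the inequality $L_0>16M_0^2$ in the $\La>0$ case, and it is the main obstacle; the separation $r_{B\La}<r_{+\La}$ is essential since $\varrho_\La$ would blow up if matter reached the horizon.

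Next I would continue the solution across $r_{+\La}$. For $r\geq r_{+\La}$ the Einstein equations reduce to an ODE for $\mu_\La$ of the same shape as \eqref{eqmulbh} but with $\La<0$, and a local $C^2$-solution exists by the regularity of its right-hand side, with initial value $\mu_\La(r_{+\La})=\tfrac12\ln(1+\tfrac{r_{+\La}^2\ab{\La}}{3}-\tfrac{2M_0}{r_{+\La}})$ dictated by continuous matching to the Schwarzschild--AdS collar. To see that it extends to all of $[r_{+\La},\infty)$ I would run the energy argument of Theorem~\ref{theo_glo_adS} verbatim with $r_0=r_{+\La}$: one computes $e^{-2\lambda_\La}=1+\tfrac{r^2\ab{\La}}{3}-\tfrac2r\left(M_0+4\pi\int_{r_{+\La}}^r s^2\varrho_\La(s)\,\mathrm ds\right)$; one then introduces the auxiliary function $w_\La$, notes $w_\La>0$ because $\La<0$, uses $\mu_\La'=4\pi re^{2\lambda_\La}(p_\La+w_\La)\geq0$ to bound $\mu_\La$ from below and $e^{-2\lambda_\La}\leq 1+\tfrac{r^2\ab{\La}}{3}$ to bound $\lambda_\La$ from below, and closes the estimate with the TOV-based differential inequality for $\ln(e^{\mu_\La+\lambda_\La}(p_\La+w_\La))$ to bound $\mu_\La+\lambda_\La$ from above on $[\tfrac{R_c}{2},R_c)$; hence $R_c=\infty$. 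Compact support of the matter follows as in Theorem~\ref{theo_bou_adS}: comparing $y_\La$ with the vacuum comparison function built from $1+\tfrac{r^2\ab{\La}}{3}-\tfrac{2M_0}{r}$ gives $y_\La(r)\to-\infty$, hence $e^{-y_\La}\sqrt{1+L_0/r^2}\to\infty$, so there is a largest $R_{0\La}\in(r_{+\La},\infty)$ with $\varrho_\La=p_\La=0$ for $r\geq R_{0\La}$; on $[R_{0\La},\infty)$ one glues in an appropriately shifted Schwarzschild--AdS metric of total mass $M=M_0+4\pi\int_{r_{+\La}}^{R_{0\La}}s^2\varrho_\La(s)\,\mathrm ds$, the continuity of $\mu_\La$ and $\mu_\La'$ being checked exactly as in Theorem~\ref{theo_la_pos_bh}, and uniqueness follows from uniqueness of the local solutions at each matching radius. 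Apart from transcribing the $\La<0$ estimates of Section~\ref{sec : sage neg}, the only genuinely new work is the geometric configuration of the radii in the second step.
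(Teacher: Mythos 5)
Your overall architecture coincides with the paper's: an inner Schwarzschild--AdS region on $(r_{B\Lambda},r_{+\Lambda}]$ with $f\equiv 0$, the ansatz $\Phi$ switched on at $r_{+\Lambda}$, global existence and compact support of the matter on $[r_{+\Lambda},\infty)$ obtained by invoking Theorems \ref{theo_glo_adS} and \ref{theo_bou_adS} in place of the continuation criterion and Buchdahl machinery of the $\Lambda>0$ case, and an outer Schwarzschild--AdS gluing beyond $R_{0\Lambda}$. All of that matches the paper step for step, and those steps are indeed reduced to citations of Section \ref{sec : sage neg}.

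The genuine gap is precisely the step you yourself label ``the main obstacle'' and then do not carry out: the existence of the vacuum collar, i.e.\ of radii $r_{B\Lambda}<r_{-\Lambda}<r_{+\Lambda}<\infty$ on which $a_\Lambda(r)=\sqrt{1-r^2\Lambda/3-2M_0/r}\,\sqrt{1+L_0/r^2}\geq 1$, so that by Lemma \ref{lemmagh}~(iii) the switch from $f\equiv 0$ to $f=\Phi$ at $r_{+\Lambda}$ is continuous and the matter is strictly separated from the horizon. This is the only part of the proof that is not delegated to earlier results, and the paper does prove it: since $\Lambda<0$ one has $a_\Lambda>a$ pointwise, where $a$ is the $\Lambda=0$ profile, which yields $r_{B\Lambda}<r_{-\Lambda}<r_-$ and $r_+<r_{+\Lambda}$; and the identity $|a_\Lambda^2(r)-a^2(r)|=|\Lambda|(r^2+L_0)/3$, evaluated at $r_++\Delta r$ where $a<1$, forces $a_\Lambda(r_++\Delta r)<1$ for $|\Lambda|$ small, hence $r_{+\Lambda}<r_++\Delta r<\infty$. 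Note that this comparison requires the zeros $r_-<r_+$ of $a-1$ to exist, i.e.\ $a>1$ somewhere, which holds exactly when $L_0>16M_0^2$; the hypothesis ``$L_0<16M_0^2$'' in the statement is inconsistent with the paper's own assertion that it ``assures $r_B<r_-<r_+$'' and should be read as $L_0>16M_0^2$, as in Theorem \ref{theo_la_pos_bh}. You correctly sensed that $L_0<16M_0^2$ destroys the $\Lambda=0$ collar, but instead of resolving the tension and running the comparison with $a$, you propose an alternative definition of $r_{+\Lambda}$ along the vacuum branch (with $E_0$ ``still free to choose'') and leave the required ordering and sign behaviour unverified. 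Taken literally, with $L_0<16M_0^2$, $E_0=1$ and $|\Lambda|$ small, one has $a_\Lambda<1$ on any fixed compact set outside the horizon, so the ansatz would produce matter arbitrarily close to $r_{B\Lambda}$ and no collar exists; your proposal neither supplies the missing inequality nor a workable substitute for it. Everything downstream of the collar in your outline is correct.
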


\begin{proof}
We define $r_B:=2M_0$ to be the Schwarzschild black hole horizon of the background solution and $r_{B\Lambda}$ to be the black hole horizon for the Schwarzschild-AdS with $\Lambda < 0$, i.e.~the smallest positive zero of $1-r^2\Lambda/3-2M_0/r$. Define also the functions
\begin{eqnarray}
a(r) &=& \sqrt{1-\frac{2M_0}{r}}\sqrt{1+\frac{L_0}{r^2}}, \\
a_\Lambda(r) &=& \sqrt{1-\frac{r^2\Lambda}{3}-\frac{2M_0}{r}}\sqrt{1+\frac{L_0}{r^2}}.
\end{eqnarray}
Moreover we define $r_-$ and $r_+$ to be the first and second positive zero of $a(r)-1$, respectively, as well as $r_{-\Lambda}$ and $r_{+\Lambda}$ to be the first and second positive zero of $a_\Lambda(r)-1$. The assumption $L_0 < 16M_0^2$ assures that $r_B < r_- < r_+$ but a priori $r_{+\Lambda}=\infty$ and $r_{-\Lambda}=\infty$ are possible. However, we show that the configuration is
\begin{equation} \label{eqconfads}
r_{B\Lambda} < r_{-\Lambda} < r_{+\Lambda} < \infty.
\end{equation}
First, we observe that $a(r) < 1$ for all $r > r_+$ and also that $a_\Lambda(r) > a(r)$ for all $r\in\mathbb R_+$ since $\Lambda < 0$. So we have $r_{B_\Lambda} < r_{-\Lambda} < r_- < r_+ < r_{+\Lambda}$. It remains to show that $r_{+\Lambda} < \infty$. This is done by showing that for $|\Lambda|$ sufficiently small the functions $a$ and $a_\Lambda$ are sufficiently close at a radius $r_++\Delta r$, $\Delta r > 0$ such that $a_\Lambda(r_++\Delta r) < 1$. So we consider the difference $|a_\Lambda^2(r) - a^2(r)|$ at the radius $r_++\Delta r$:
\begin{equation}
|a_\Lambda^2(r_++\Delta r) - a^2(r_++\Delta r)| = |\Lambda|\frac{(r_++\Delta r)^2+L_0}{3}.
\end{equation}
Choosing $|\Lambda|$ small one attains this difference to be smaller than $a(r_++\Delta r)-1$ which implies $r_{+\Lambda} < r_++\Delta r < \infty$. \par
Given this configuration \eqref{eqconfads} we construct a global solution of the Einstein-Vlasov system in the following manner. For $r\in(r_{B\Lambda},r_{+\Lambda}]$ we set $f(x,v)\equiv 0$ and
\begin{equation} \label{eq_ads_mu0}
\mu_\Lambda(r) = \frac 1 2\ln\left(1- \frac{r^2\Lambda}{3} - \frac{2M_0}{r}\right).
\end{equation}
For $r \geq r_{+\Lambda}$ we set $f(x,v)=\Phi(E,L)$. Since also $\Phi(E,L) = 0$ on the interval $(r_{-\Lambda},r_{+\Lambda})$ the distribution function $f$ is continuous and the metric coefficient $\mu_\Lambda$ is given by the ODE \eqref{eqmulbh} with $\Lambda < 0$ for all $r\in(r_{-\Lambda},\infty)$. The initial value $\lambda_0$ of $\lambda_\Lambda$ is determined by the continuity criterion
\begin{equation} \label{eq_ads_l0}
\frac{r_{+\Lambda}}{2}\left(1-e^{-2\lambda_0}\right)=M_0.
\end{equation}
The last step of the proof is to assure for the existence of a solution with the desired properties of the Einstein-Vlasov system on $[r_{+\Lambda},\infty)$ with initial values $\lambda_0$ given by \eqref{eq_ads_l0} and $\mu_0 = \mu_\Lambda(r_{+\Lambda})$, given by equation \eqref{eq_ads_mu0}. But this is already implied by the Theorems \ref{theo_glo_adS} and \ref{theo_bou_adS}.
\end{proof}

\section{Solutions on $\mathbb R\times S^3$ and $\mathbb R\times S^2\times\mathbb R$}\label{sec : glob sol}
In Sections \ref{ssec : sagr-4} and \ref{sec dS bh} we constructed spherically symmetric, static solutions of the Einstein-Vlasov system with small positive cosmological constant $\Lambda$. For small radii the $\Lambda$-term plays only a minor role. This was crucial for the method of proof. However, the global structure of the constructed spacetime is substantially different when $\Lambda > 0$ and shows interesting properties. In particular, it allows for solutions with different global topologies. \par

The following theorem gives a class of new solutions to the non-vacuum field equations with non-trivial global topology. These solutions are constructed from pieces consisting of solutions constructed in Theorems \ref{main-thm} and \ref{theo_la_pos_bh}.

\begin{theorem}\label{thm-top}
Let $\Lambda >0$ be sufficiently small and let $\mathscr M_1=\mathbb R\times  S^3$ and $\mathscr M_2=\mathbb R\times S^2\times \mathbb R$. The following types of static metrics solving the Einstein-Vlasov system exist on these topologies.
\begin{enumerate}
\item There is a class of static metrics on $\mathscr M_1$, which is characterized in Figure \ref{penrose_simple}. In regions I and IV a metric in this class coincides with two a priori different solutions of the type constructed in Theorem \ref{main-thm} with identical total mass, but possibly different matter distributions and radii of the support of the matter quantities $R_1$ and $R_2$ and regular centers. The metric in regions II and III is vacuum. \label{class1}
\item There is a class of static metrics on $\mathscr M_1$, which is characterized in Figure \ref{penrose_diagram}. A metric in this class consists of two regular centers with finitely extended matter distribution around each of the centers of equal mass but possible different matter distributions and radii $R_1$, $R_2$ of the type constructed in Theorem \ref{main-thm}. These two regions are connected by a chain of black holes of identical masses (the diagram shows the minimal configuration with one black hole). \label{class2}
\item There is a class of metrics on $\mathscr M_2$, which is characterized in Figure \ref{penrose_periodic}. The spacetime consists of an infinite sequence of black holes, each surrounded by matter shells of possibly different radii and positions. In regions IV, VII, X and XIII these solutions coincide with those constructed in Theorem \ref{theo_la_pos_bh}. The necessary conditions on the masses are $M_{\varrho}^{A_1}=M_\varrho^{A_2}$, $M_{\varrho}^{B_1}=M_\varrho^{B_2}$ and $M_0^A+M_\varrho^{A_2}=M_{\varrho}^{B_1}+M_0^B$, where $M_0^i$, $i=A,B$, denote the mass parameter of the black holes and $M_\varrho^{i_j}$, $i=A,B$, $j=1,2$ denote the quasilocal mass of the matter shells defined in equation (\ref{eq_p3_quasilocal}). \label{class3}
\end{enumerate}
\end{theorem}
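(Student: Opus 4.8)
The plan is to build these space-times by cutting and pasting the solutions already produced in Theorems~\ref{main-thm} and~\ref{theo_la_pos_bh} along vacuum Schwarzschild--deSitter regions, using the periodic structure of the maximally extended Schwarzschild--deSitter space-time~\cite{GiHa77}. First I would isolate the vacuum input: for a mass $M$ with $9M^2\Lambda<1$ the function $V_M(r)=1-r^2\Lambda/3-2M/r$ has exactly two positive zeros $r_{B\Lambda}(M)<r_C(M)$, is positive in between, and $r_C(M)\to\infty$ as $\Lambda\to0^+$; the maximal analytic extension of the corresponding metric is a bi-infinite chain of static blocks, all with the \emph{same} parameter $M$, glued alternately across black-hole Killing horizons $\{r=r_{B\Lambda}(M)\}$ and cosmological Killing horizons $\{r=r_C(M)\}$, with the metric manifestly $C^\infty$ across each horizon in Kruskal-type coordinates and with $r$ reflected (a block in which $r$ increases towards $r_C$ being continued by one in which $r$ decreases from $r_C$, and likewise at $r_{B\Lambda}$). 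Since in all our interior solutions the matter vanishes in a neighbourhood of every horizon along which we glue, these extensions apply directly.

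For class~\ref{class1} I would fix $\Lambda>0$ small and choose, via Theorem~\ref{main-thm}, two globally regular solutions with regular centres, with matter supported in balls of radii $R_1,R_2$, and with \emph{equal} exterior mass $M$; equality can be arranged because the total mass depends continuously on the central value $\mu_0$ and on the profile $\phi$, so one may take two copies of one solution or, moving $\phi$ along a level set of the mass, two genuinely different ones, and for $\Lambda$ small $R_i<r_C(M)$ by Lemma~\ref{bounded} together with $r_C(M)\to\infty$. I would then glue the exterior Schwarzschild--deSitter block $\{R_1<r<r_C(M)\}$ of the first solution to the exterior block $\{R_2<r<r_C(M)\}$ of the second across the common cosmological horizon, keeping the two matter-filled balls attached at $r=R_1$ and $r=R_2$. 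A constant-$t$ slice is then two $3$-balls identified along the sphere $\{r=r_C(M)\}$, i.e.\ an $S^3$, so the space-time has topology $\mathbb R\times S^3$, is static off the two neck horizons (the vacuum halves of the neck being regions~II and~III), and equals Theorem~\ref{main-thm} solutions in regions~I and~IV, giving the class of Figure~\ref{penrose_simple}. Class~\ref{class2} is the same except that, before closing up with the second regular solution, I would splice in a finite chain of vacuum Schwarzschild--deSitter blocks of mass $M$ (crossing $\{r=r_C(M)\}$, then $\{r=r_{B\Lambda}(M)\}$, and so on); each inserted block contributes an $S^2\times[0,1]$, which leaves the slice an $S^3$, and the chain of black-hole horizons is the ``chain of black holes of identical masses'' of Figure~\ref{penrose_diagram}.

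For class~\ref{class3} I would instead iterate the solutions of Theorem~\ref{theo_la_pos_bh}: each fills one static block, is Schwarzschild--deSitter with parameter $M_0$ on $\{r_{B\Lambda}<r<r_{+\Lambda}\}$, carries a matter shell of quasi-local mass $M_\varrho$ (cf.~\eqref{eq_p3_quasilocal}) on $\{r_{+\Lambda}<r<R_{0\Lambda}\}$, and is Schwarzschild--deSitter with parameter $M_0+M_\varrho$ on $\{R_{0\Lambda}<r<r_C\}$. I would alternate two types of such blocks, gluing consecutive outer ends across cosmological horizons and consecutive inner ends across black-hole horizons; consistency of the Schwarzschild--deSitter parameter across each horizon then forces precisely the stated relations $M_\varrho^{A_1}=M_\varrho^{A_2}$, $M_\varrho^{B_1}=M_\varrho^{B_2}$ and $M_0^A+M_\varrho^{A_2}=M_\varrho^{B_1}+M_0^B$, which I would satisfy by a suitable choice of the admissible parameters $L_0$, $M_0$, $\phi$ in Theorem~\ref{theo_la_pos_bh}. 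Now there is no regular centre and every block contributes an $S^2\times[0,1]$, so a constant-$t$ slice is $S^2\times\mathbb R$ and the space-time has topology $\mathbb R\times S^2\times\mathbb R$, as in Figure~\ref{penrose_periodic}.

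Smoothness away from the horizons is inherited from Theorems~\ref{main-thm}, \ref{theo_glo_adS} and~\ref{theo_la_pos_bh} (including the $C^1$-matching of $\mu$ at the matter/vacuum interfaces carried out there), and across the horizons it is the standard Kruskal regularity recalled above; the constant time-shifts of the Schwarzschild--deSitter pieces noted in Remark~\ref{rem_shift_bh} are immaterial, since distinct static blocks are separated by Killing horizons and carry independent time coordinates. I expect the genuine work to be purely bookkeeping: checking that for $\Lambda$ small all support radii stay below the relevant horizon radii so that the vacuum Schwarzschild--deSitter collar needed for each gluing actually exists, and verifying that the finitely many (resp.\ countably many, periodically repeating) mass constraints can be met within the parameter ranges allowed by Theorems~\ref{main-thm} and~\ref{theo_la_pos_bh}. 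This last point — exploiting the freedom in the ansatz~\eqref{ouransf} to hit the prescribed masses — is the main obstacle.
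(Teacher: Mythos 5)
Your overall strategy --- assembling the spacetimes from the solutions of Theorems \ref{main-thm} and \ref{theo_la_pos_bh} by gluing across black-hole and cosmological Killing horizons in Kruskal-type coordinates, exploiting the periodicity of the extended Schwarzschild--deSitter spacetime --- is the same as the paper's, which merely carries out the chart bookkeeping explicitly with the coordinates $(U_C,V_C)$, $(U_B,V_B)$, etc. For classes (\ref{class1}) and (\ref{class2}) your argument matches the paper's in substance.

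In class (\ref{class3}), however, there is a genuine gap. You declare the constant time-shifts of Remark \ref{rem_shift_bh} ``immaterial, since distinct static blocks are separated by Killing horizons and carry independent time coordinates,'' and you claim all three mass relations are forced by consistency of the Schwarzschild--deSitter parameter across each horizon. Only $M_0^A+M_\varrho^{A_2}=M_\varrho^{B_1}+M_0^B$ arises that way, namely on the connected vacuum region \emph{outside} the shells ($r\geq R_{0\Lambda}$) joining blocks VII and X through $\mathrm{VIII}_{\mathrm{C}}/\mathrm{IX}_{\mathrm{C}}$. The connected vacuum region \emph{inside} the shells around black hole $A$ (joining blocks IV and VII through $\mathrm{V}_{\mathrm{B}}/\mathrm{VI}_{\mathrm{W}}$) automatically carries the single mass parameter $M_0^A$ on both sides, so mass consistency there is vacuous and cannot yield $M_\varrho^{A_1}=M_\varrho^{A_2}$. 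What does yield it --- and this is precisely the content of Remark \ref{rem_shift_bh} --- is that the inner vacuum metric of each block is a \emph{shifted} Schwarzschild--deSitter metric, $-C\left(1-\tfrac{r^2\Lambda}{3}-\tfrac{2M_0^A}{r}\right)\mathrm dt^2+\dots$, with $C$ determined by the surrounding matter shell once the outer vacuum region is normalized to be unshifted; since a single analytic vacuum expression must cover the whole connected region through the bifurcate horizon, the shift constants inherited from the two adjacent blocks must agree, which the paper translates into $M_\varrho^{A_1}=M_\varrho^{A_2}$ (and likewise $M_\varrho^{B_1}=M_\varrho^{B_2}$). The static time coordinates of adjacent blocks are not independent: both are induced by the same global Kruskal chart, so a relative rescaling is an obstruction, not a gauge freedom. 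As written, your proof would miss two of the three necessary conditions in the statement.
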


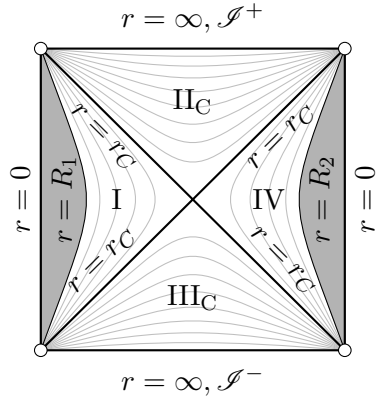
\begin{figure}[ht]
\begin{center}
\begin{tikzpicture}
\pgfdeclarelayer{bg}
\pgfsetlayers{bg,main}

\node (I1)    at (-1,0)  {I};
\node (IV1)   at (1,0)  {IV};
\node (IIC1)  at (0, 1.3)   {$\mathrm{II}_\mathrm{C}$};
\node (IIIC1) at (0,-1.3)   {$\mathrm{III}_\mathrm{C}$};

\path  
  (I1) +(-1,2)  coordinate[label=90:]  (I1top)
       +(-1,-2) coordinate[label=-90:] (I1bot)
       +(0:1) coordinate               (I1right)
       ;
\path 
   (IV1) +(1,2)  coordinate[label=90:]  (IV1top)
         +(1,-2) coordinate[label=-90:] (IV1bot)
         +(180:1)  coordinate  (IV1left)
         ;

\begin{scope}[on background layer]
\foreach \a in {1,...,8}{
\pgfmathparse{1+(\a-1)/7}
\edef\w{\pgfmathresult}
\draw[lightgray] (I1top) to[out =-90+\a*5, in=90-\a*5, looseness=\w] (I1bot);
\draw[lightgray] (I1top) to[out=-\a*5, in=-180+\a*5, looseness=\w] (IV1top);
\draw[lightgray] (I1bot) to[out=\a*5, in=180-\a*5, looseness=\w] (IV1bot);
\draw[lightgray] (IV1top) to[out=-90-\a*5, in=90+\a*5, looseness=\w] (IV1bot);
}
\end{scope}

\draw[-, fill=black!30]
      (I1top) -- (I1bot) to[out=70, in=-70, looseness=1.5] node[midway, above, sloped] {$r=R_1$} (I1top) -- cycle;
\draw[thick] (I1bot) -- node[midway, above, sloped] {$r=0$} (I1top);

\draw[thick] (I1top) --
          node[midway, below, sloped] {$r=r_C$}
      (I1right) --
          node[midway, above, sloped] {$r=r_C$}
      (I1bot);

\draw[thick] (I1top) -- (IV1top)
      node[midway, above, inner sep=2mm] {$r=\infty,\mathscr I^+$};
\draw[thick] (I1bot) -- (IV1bot)
      node[midway, below, inner sep=2mm] {$r=\infty,\mathscr I^-$};

\draw[-, fill=black!30]
      (IV1top) -- (IV1bot) to[out=110, in=-110, looseness=1.5] node[midway, below, sloped] {$r=R_2$} (IV1top) -- cycle;
\draw[thick] (IV1bot) -- node[midway, below, sloped] {$r=0$} (IV1top);

\draw[thick] (IV1bot) --
          node[midway, above, sloped] {$r=r_{C}$}
      (IV1left) --
          node[midway, below, sloped] {$r=r_{C}$}
      (IV1top);

\path[fill=white,draw=black] (I1top) circle (0.5ex);
\path[fill=white,draw=black] (I1bot) circle (0.5ex);
\path[fill=white,draw=black] (IV1top) circle (0.5ex);
\path[fill=white,draw=black] (IV1bot) circle (0.5ex);
\end{tikzpicture}
\caption{Penrose diagram of the maximal $C^2$-extension of a metric constructed as spherically symmetric solution of the Einstein-Vlasov system. Region I corresponds to the region $0<r<r_C$. The metric is extended in an analogue way to the standard extension of the deSitter metric. The gray lines are surfaces of constant $r$. \label{penrose_simple}}
\end{center}
\end{figure}


\begin{figure}[ht]
\begin{center}
\begin{tikzpicture}
\pgfdeclarelayer{bg}
\pgfsetlayers{bg,main}

\node (I1)    at (-5,0)  {I};
\node (IV1)   at (-2,0)  {IV};
\node (I2)    at (2,0)   {VII}; 
\node (IV2)   at (5,0)   {X}; 
\node (IIC1)  at (-4, 1.3)   {$\mathrm{II}_\mathrm{C}$};
\node (IIIC1) at (-4,-1.3)   {$\mathrm{III}_\mathrm{C}$};
\node (IIH1)  at (0, 1.3)    {$\mathrm{V}_\mathrm{BH}$}; 
\node (IIIH1) at (0,-1.3)    {$\mathrm{VI}_\mathrm{WH}$}; 
\node (IIC2)  at (4, 1.3)    {$\mathrm{VIII}_\mathrm{C}$}; 
\node (IIIC2) at (4,-1.3)    {$\mathrm{IX}_\mathrm{C}$}; 

\path  
  (I1) +(-1,2)  coordinate[label=90:]  (I1top)
       +(-1,-2) coordinate[label=-90:] (I1bot)
       +(0:1) coordinate                  (I1right)
       ;
\path  
  (IV1) +(90:2)  coordinate  (IV1top)
        +(-90:2) coordinate  (IV1bot)
        +(0:2)   coordinate  (IV1right)
        +(180:2) coordinate  (IV1left)
        ;
\path 
   (I2) +(90:2)  coordinate (I2top)
        +(-90:2) coordinate (I2bot)
        +(180:2) coordinate (I2left)
        +(0:2)   coordinate (I2right)
        ;
\path 
   (IV2) +(1,2)  coordinate[label=90:]  (IV2top)
         +(1,-2) coordinate[label=-90:] (IV2bot)
         +(180:1)  coordinate  (IV2left)
         ;

\begin{scope}[on background layer]
\foreach \a in {1,...,8}{
\pgfmathparse{1+(\a-1)/7}
\edef\w{\pgfmathresult}
\draw[lightgray] (I1top) to[out =-90+\a*5, in=90-\a*5, looseness=\w] (I1bot);
\draw[lightgray] (I1top) to[out=-\a*5, in=-180+\a*5, looseness=\w] (IV1top);
\draw[lightgray] (I1bot) to[out=\a*5, in=180-\a*5, looseness=\w] (IV1bot);
\draw[lightgray] (IV1top) to[out=-90-\a*5, in=90+\a*5, looseness=\w] (IV1bot);
\draw[lightgray] (IV1top) to[out=-90+\a*5, in=90-\a*5, looseness=\w] (IV1bot);
\draw[lightgray] (IV1top) to[out=-\a*5, in=-180+\a*5, looseness=\w] (I2top);
\draw[lightgray] (IV1bot) to[out=\a*5, in=180-\a*5, looseness=\w] (I2bot);
\draw[lightgray] (I2top) to[out=-90-\a*5, in=90+\a*5, looseness=\w] (I2bot);
\draw[lightgray] (I2top) to[out=-90+\a*5, in=90-\a*5, looseness=\w] (I2bot);
\draw[lightgray] (I2top) to[out=-\a*5, in=-180+\a*5, looseness=\w] (IV2top);
\draw[lightgray] (I2bot) to[out=\a*5, in=180-\a*5, looseness=\w] (IV2bot);
\draw[lightgray] (IV2top) to[out=-90-\a*5, in=90+\a*5, looseness=\w] (IV2bot);
}
\draw[lightgray] (IV1top) to[out=-90, in=90, looseness=0] (IV1bot);
\draw[lightgray] (I2top) to[out=-90, in=90, looseness=0] (I2bot);
\end{scope}

\draw[-, fill=black!30]
      (I1top) -- (I1bot) to[out=70, in=-70, looseness=1.5] node[midway, above, sloped] {$r=R_1$} (I1top) -- cycle;
\draw[thick] (I1bot) -- node[midway, above, sloped] {$r=0$} (I1top);

\draw[thick] (I1top) --
          node[midway, below, sloped] {$r=r_C$}
      (I1right) --
          node[midway, above, sloped] {$r=r_C$}
      (I1bot);

\draw[thick] (IV1left) --
          node[midway, below, sloped] {$r=r_C$}
      (IV1top) --
          node[midway, below, sloped] {$r=r_{B\Lambda}$}
      (IV1right) --
          node[midway, above, sloped] {$r=r_{B\Lambda}$}
      (IV1bot) --
          node[midway, above, sloped] {$r=r_C$}
      (IV1left) -- cycle;

\draw[thick] (I1top) -- (IV1top)
      node[midway, above, inner sep=2mm] {$r=\infty,\mathscr I^+$};
\draw[thick] (I1bot) -- (IV1bot)
      node[midway, below, inner sep=2mm] {$r=\infty,\mathscr I^-$};

\draw[thick] (I2left) --
          node[midway, below, sloped] {$r=r_{B\Lambda}$}
      (I2top) --
          node[midway, below, sloped] {$r=r_{C}$}
      (I2right) --
          node[midway, above, sloped] {$r=r_{C}$}
      (I2bot) --
          node[midway, above, sloped] {$r=r_{B\Lambda}$}
      (I2left) -- cycle;

\draw[-, fill=black!30]
      (IV2top) -- (IV2bot) to[out=110, in=-110, looseness=1.5] node[midway, below, sloped] {$r=R_2$} (IV2top) -- cycle;
\draw[thick] (IV2bot) -- node[midway, below, sloped] {$r=0$} (IV2top);

\draw[thick] (IV2bot) --
          node[midway, above, sloped] {$r=r_{C}$}
      (IV2left) --
          node[midway, below, sloped] {$r=r_{C}$}
      (IV2top);

\draw[thick] (I2top) -- (IV2top)
      node[midway, above, inner sep=2mm] {$r=\infty,\mathscr I^+$};
\draw[thick] (I2bot) -- (IV2bot)
      node[midway, below, inner sep=2mm] {$r=\infty,\mathscr I^-$};

\draw[decorate,decoration={snake, segment length=3mm, amplitude=0.7mm},thick] (IV1top) -- (I2top)
      node[midway, above, inner sep=2mm] {$r=0$};

\draw[decorate,decoration={snake, segment length=3mm, amplitude=0.7mm},thick] (IV1bot) -- (I2bot)
      node[midway, below, inner sep=2mm] {$r=0$};

\path[fill=white,draw=black] (I1top) circle (0.5ex);
\path[fill=white,draw=black] (I1bot) circle (0.5ex);
\path[fill=white,draw=black] (IV1top) circle (0.5ex);
\path[fill=white,draw=black] (IV1bot) circle (0.5ex);
\path[fill=white,draw=black] (I2top) circle (0.5ex);
\path[fill=white,draw=black] (I2bot) circle (0.5ex);
\path[fill=white,draw=black] (IV2top) circle (0.5ex);
\path[fill=white,draw=black] (IV2bot) circle (0.5ex);
\end{tikzpicture}
\caption{Penrose diagram of the maximal $C^2$-extension of a metric constructed as spherically symmetric solution of the Einstein-Vlasov system. Region I corresponds to the region $0<r<r_C$. In this region matter (represented by the shaded area) is present and the metric is regular. This metric is extended with the Schwarzschild-deSitter metric that leads to a periodic solution. The periodic course stops when a matter region appears again preventing the metric from being singular at $r=0$. The gray lines are surfaces of constant $r$. \label{penrose_diagram}}
\end{center}
\end{figure}
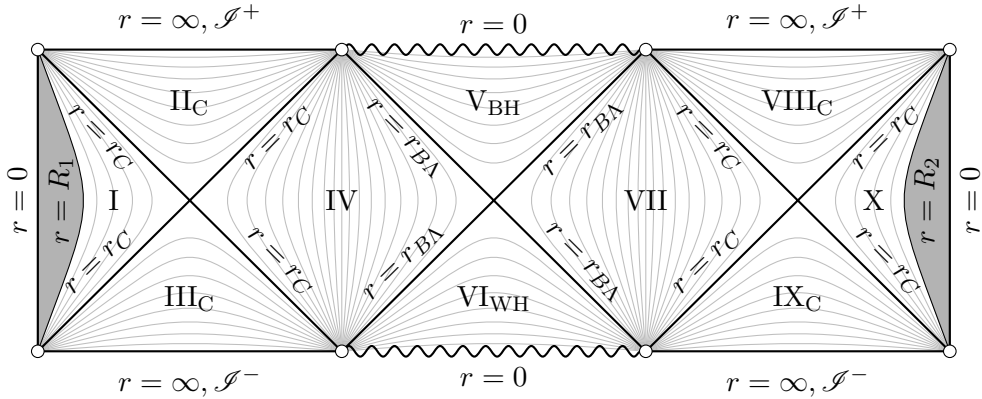


\begin{figure}[ht]
\begin{center}
\begin{tikzpicture}
\pgfdeclarelayer{bg}
\pgfsetlayers{bg,main}

\node (I)     at (-5,0)      {I};
\node (II)    at (-4,0.65)   {$\mathrm{II}_\mathrm{C}$};
\node (III)   at (-4,-0.65)  {$\mathrm{III}_\mathrm{C}$};
\node (IV)    at (-3,0)      {IV};
\node (V)     at (-2,0.65)   {$\mathrm{V}_\mathrm{B}$};
\node (VI)    at (-2,-0.65)  {$\mathrm{VI}_\mathrm{W}$};
\node (VII)   at (-1,0)      {VII};
\node (VIII)  at (0,0.65)    {$\mathrm{VIII}_\mathrm{C}$};
\node (IX)    at (0,-0.65)   {$\mathrm{IX}_\mathrm{C}$};
\node (X)     at (1,0)       {X};
\node (XI)    at (2,0.65)    {$\mathrm{XI}_\mathrm{B}$};
\node (XII)   at (2,-0.65)   {$\mathrm{XII}_\mathrm{W}$};
\node (XIII)  at (3,0)       {XIII};
\node (XIV)   at (4,0.65)    {$\mathrm{XIV}_\mathrm{C}$};
\node (XV)    at (4,-0.65)   {$\mathrm{XV}_\mathrm{C}$};
\node (XVI)   at (5,0)       {XVI};
\node (M_rho_a1) at (-3,-1.5) {$M_\varrho^{A_1}$};
\node (M_0_a) at (-2,-1.5) {$M_0^A$};
\node (M_rho_a2) at (-1,-1.5) {$M_\varrho^{A_2}$};
\node (M_rho_b1) at (1,-1.5) {$M_\varrho^{B_1}$};
\node (M_0_b) at (2,-1.5) {$M_0^B$};
\node (M_rho_b2) at (3,-1.5) {$M_\varrho^{B_2}$};

\path 
(I) +(-1,1)  coordinate (Ilefttop)
    +(0,1)   coordinate (Itop)
    +(1,0)   coordinate (Iright)
    +(0,-1)  coordinate (Ibot)
    +(-1,-1) coordinate (Ileftbot)
;
\path  
  (IV) +(-1,0)  coordinate  (IVleft)
  (IV) +(0,1)   coordinate  (IVtop)
  (IV) +(1,0)   coordinate  (IVright)
  (IV) +(0,-1)  coordinate  (IVbot)
;
\path  
  (VII) +(-1,0)  coordinate  (VIIleft)
  (VII) +(0,1)   coordinate  (VIItop)
  (VII) +(1,0)   coordinate  (VIIright)
  (VII) +(0,-1)  coordinate  (VIIbot)
;
\path  
  (X) +(-1,0)  coordinate  (Xleft)
  (X) +(0,1)   coordinate  (Xtop)
  (X) +(1,0)   coordinate  (Xright)
  (X) +(0,-1)  coordinate  (Xbot)
;
\path  
  (XIII) +(-1,0)  coordinate  (XIIIleft)
  (XIII) +(0,1)   coordinate  (XIIItop)
  (XIII) +(1,0)   coordinate  (XIIIright)
  (XIII) +(0,-1)  coordinate  (XIIIbot)
;
\path 
(XVI) +(1,1)  coordinate (XVIrt)
      +(0,1)   coordinate (XVItop)
      +(1,0)   coordinate (XVIleft)
      +(0,-1)  coordinate (XVIbot)
      +(1,-1) coordinate (XVIrb)
;
\begin{scope}[on background layer]
\foreach \a in {2,4,6,8}{
\pgfmathparse{1+(\a-1)/7}
\edef\w{\pgfmathresult}
\draw[lightgray] (Itop)    to[out =-90+\a*5, in=90-\a*5, looseness=\w] (Ibot);
\draw[lightgray] (IVtop)   to[out =-90+\a*5, in=90-\a*5, looseness=\w] (IVbot);
\draw[lightgray] (VIItop)  to[out =-90+\a*5, in=90-\a*5, looseness=\w] (VIIbot);
\draw[lightgray] (Xtop)    to[out =-90+\a*5, in=90-\a*5, looseness=\w] (Xbot);
\draw[lightgray] (XIIItop) to[out =-90+\a*5, in=90-\a*5, looseness=\w] (XIIIbot);

\draw[lightgray] (IVtop)   to[out=-90-\a*5, in=90+\a*5, looseness=\w] (IVbot);
\draw[lightgray] (VIItop)  to[out=-90-\a*5, in=90+\a*5, looseness=\w] (VIIbot);
\draw[lightgray] (Xtop)    to[out=-90-\a*5, in=90+\a*5, looseness=\w] (Xbot);
\draw[lightgray] (XIIItop) to[out=-90-\a*5, in=90+\a*5, looseness=\w] (XIIIbot);
\draw[lightgray] (XVItop)  to[out=-90-\a*5, in=90+\a*5, looseness=\w] (XVIbot);

\draw[lightgray] (Itop)    to[out=-\a*5, in=-180+\a*5, looseness=\w] (IVtop);
\draw[lightgray] (IVtop)   to[out=-\a*5, in=-180+\a*5, looseness=\w] (VIItop);
\draw[lightgray] (VIItop)  to[out=-\a*5, in=-180+\a*5, looseness=\w] (Xtop);
\draw[lightgray] (Xtop)    to[out=-\a*5, in=-180+\a*5, looseness=\w] (XIIItop);
\draw[lightgray] (XIIItop) to[out=-\a*5, in=-180+\a*5, looseness=\w] (XVItop);

\draw[lightgray] (Ibot)    to[out=\a*5, in=180-\a*5, looseness=\w] (IVbot);
\draw[lightgray] (IVbot)   to[out=\a*5, in=180-\a*5, looseness=\w] (VIIbot);
\draw[lightgray] (VIIbot)  to[out=\a*5, in=180-\a*5, looseness=\w] (Xbot);
\draw[lightgray] (Xbot)    to[out=\a*5, in=180-\a*5, looseness=\w] (XIIIbot);
\draw[lightgray] (XIIIbot) to[out=\a*5, in=180-\a*5, looseness=\w] (XVIbot);
}
\draw[lightgray] (IVtop)   to[out=-90, in=90, looseness=0] (IVbot);
\draw[lightgray] (VIItop)  to[out=-90, in=90, looseness=0] (VIIbot);
\draw[lightgray] (Xtop)    to[out=-90, in=90, looseness=0] (Xbot);
\draw[lightgray] (XIIItop) to[out=-90, in=90, looseness=0] (XIIIbot);
\end{scope}

\draw[-, fill=black!30]
      (IVtop) to[out=-70, in=70, looseness=1.5] (IVbot) to[out=55, in=-55, looseness=1.5] (IVtop) -- cycle;
\draw[-, fill=black!30]
      (VIItop) to[out=-125, in=125, looseness=1.5] (VIIbot) to[out=110, in=-110, looseness=1.5] (VIItop) -- cycle;

\draw[-, fill=black!30]
      (Xtop) to[out=-70, in=70, looseness=1.5] (Xbot) to[out=55, in=-55, looseness=1.5] (Xtop) -- cycle;
\draw[-, fill=black!30]
      (XIIItop) to[out=-125, in=125, looseness=1.5] (XIIIbot) to[out=110, in=-110, looseness=1.5] (XIIItop) -- cycle;

\draw[dashed] (Ilefttop) -- (Itop);
\draw (Itop) -- (IVtop);
\draw[decorate,decoration={snake, segment length=1.5mm, amplitude=0.4mm}] (IVtop) -- (VIItop);
\draw (VIItop) -- (Xtop);
\draw[decorate,decoration={snake, segment length=1.5mm, amplitude=0.4mm}] (Xtop) -- (XIIItop);
\draw (XIIItop) -- (XVItop);
\draw[dashed] (XVItop) -- (XVIrt);

\draw[dashed] (Ileftbot) -- (Ibot);
\draw (Ibot) -- (IVbot);
\draw[decorate,decoration={snake, segment length=1.5mm, amplitude=0.4mm}] (IVbot) -- (VIIbot);
\draw (VIIbot) -- (Xbot);
\draw[decorate,decoration={snake, segment length=1.5mm, amplitude=0.4mm}] (Xbot) -- (XIIIbot);
\draw (XIIIbot) -- (XVIbot);
\draw[dashed] (XVIbot) -- (XVIrb);

\draw (Itop) -- (IVbot);
\draw (IVtop) -- (VIIbot);
\draw (VIItop) -- (Xbot);
\draw (Xtop) -- (XIIIbot);
\draw (XIIItop) -- (XVIbot);

\draw (Ibot) -- (IVtop);
\draw (IVbot) -- (VIItop);
\draw (VIIbot) -- (Xtop);
\draw (Xbot) -- (XIIItop);
\draw (XIIIbot) -- (XVItop);

\path[fill=white,draw=black] (Itop) circle (0.3ex);
\path[fill=white,draw=black] (IVtop) circle (0.3ex);
\path[fill=white,draw=black] (VIItop) circle (0.3ex);
\path[fill=white,draw=black] (Xtop) circle (0.3ex);
\path[fill=white,draw=black] (XIIItop) circle (0.3ex);
\path[fill=white,draw=black] (XVItop) circle (0.3ex);
\path[fill=white,draw=black] (Ibot) circle (0.3ex);
\path[fill=white,draw=black] (IVbot) circle (0.3ex);
\path[fill=white,draw=black] (VIIbot) circle (0.3ex);
\path[fill=white,draw=black] (Xbot) circle (0.3ex);
\path[fill=white,draw=black] (XIIIbot) circle (0.3ex);
\path[fill=white,draw=black] (XVIbot) circle (0.3ex);

\path (M_rho_a1)+(-0.1,0.3) coordinate (a1);
\path (M_rho_a1)+(0.3,1.1) coordinate (ma1);
\draw[->] (a1) to[out=135, in=-145, looseness=1.2] (ma1);
\path (M_rho_a2)+(0.1,0.3) coordinate (a2);
\path (M_rho_a2)+(-0.3,1.1) coordinate (ma2);
\draw[->] (a2) to[out=45, in=-35, looseness=1.2] (ma2);
\path (M_rho_b1)+(-0.1,0.3) coordinate (b1);
\path (M_rho_b1)+(0.3,1.1) coordinate (mb1);
\draw[->] (b1) to[out=135, in=-145, looseness=1.2] (mb1);
\path (M_rho_b2)+(0.1,0.3) coordinate (b2);
\path (M_rho_b2)+(-0.3,1.1) coordinate (mb2);
\draw[->] (b2) to[out=45, in=-35, looseness=1.2] (mb2);
\end{tikzpicture}
\caption{Penrose diagram of the maximal $C^2$-extension of a metric constructed as spherically symmetric solution of the Einstein-Vlasov system. The solution coincides with the Schwarzschild-deSitter spacetime in the vacuum regions and the black holes are surrounded by shells of Vlasov matter (gray shaded domains). Notably the black holes do not necessarily have the same mass. The grey lines are surfaces of constant $r$. \label{penrose_periodic}}
\end{center}
\end{figure}
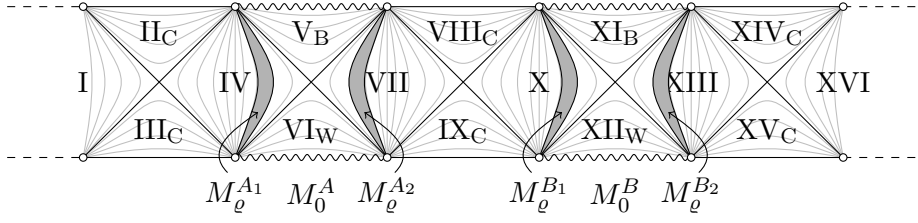

\global\long\def\theenumi{\alph{enumi}}
\begin{rem}\hfill
\begin{enumerate}
\item The black hole masses in the third class of solutions in the previous theorem can be pairwise different. Only the total mass of black hole and matter shell have to agree pairwise, cf.~condition in (iii) above.
\item Combinations of the classes (ii) and (iii) yield similar metrics on $\mathscr M_3=\mathbb R\times \mathbb R^3$ with a regular center followed by an infinite sequence of black holes.
\item The second class of solutions could also be generalized by adding matter shells around the black holes. The mass parameters then have to be adjusted.
\item When crossing the cosmological horizon or the event horizon of a black or white hole the Killing vector $\partial_t$ changes from being timelike to spacelike. This means that the maximal extended spacetime contains both static and dynamic regions that are alternating. This holds for all constructed classes.
\end{enumerate}
\end{rem}

\begin{proof}
We outline now the construction of the spacetimes given in the previous theorem. For the first two classes of spacetimes we consider solutions of the Einstein-Vlasov system with a regular center. Let $(\mu_\Lambda, \lambda_\Lambda, f_\Lambda)$ be a static solution of the spherically symmetric Einstein-Vlasov system with positive cosmological constant $\Lambda$ defined for $r\in[0,r_C)$ such that the support of the matter quantities is bounded by a radius $0 < R_{0\Lambda} < r_C$. The radius $r_C$ denotes the cosmological horizon. On $[R_{0\Lambda},r_C)$ there is vacuum on hand and the metric is given by the Schwarzschild-deSitter metric (\ref{ssdsmetric}) with the ADM mass $M$ as mass parameter. The ADM mass $M$ is then given by
\begin{equation} \label{admmass}
M = 4\pi\int_0^{R_{0\Lambda}} s^2\varrho_\Lambda(s)\mathrm ds.
\end{equation}
If $9M^2\Lambda < 1$, the polynomial $r^3-\frac{3}{\Lambda}r+\frac{6M}{\Lambda}$ has one negative zero and two positive ones. The largest zero of this polynomial is defined to be the cosmological horizon $r_C$. Moreover, $r_n$ is the negative zero, and $r_{B\Lambda}$ the smaller positive one. In terms of the ADM mass $M$ and the cosmological constant $\Lambda$ these zeros can be calculated explicitly. Note that the Buchdahl inequality for solutions with $\Lambda \neq 0$ \cite{ab} implies $r_{B\Lambda} < R_{0\Lambda}$. \par
\vspace{10pt}
Case (i): Consider Figure \ref{penrose_simple}. This spacetime can be obtained in an analogue way to the standard procedure to compactify the deSitter space as described for example in \cite{GiHa77}. In the following, this procedure is carried out in detail. The metric is given as a non-vacuum solution of the Einstein-Vlasov system for $r\in[0,r_C)$, corresponding to region I in Figure \ref{penrose_simple}, as discussed in Theorem \ref{main-thm}. In this region we have
\begin{equation} \label{metric_mulambda}
\mathrm ds^2 = -e^{2\mu(r)}\mathrm dt^2 + e^{2\lambda(r)}\mathrm dr^2 + r^2\mathrm d\vartheta^2 + r^2\sin^2(\vartheta)\mathrm d\varphi^2.
\end{equation}
In the first step we introduce coordinates $U_\mathrm{I}$, $V_\mathrm{I}$ that transform the region $\mathbb R \times [0,r_C) \times S^2$ into the left triangle (region I) in Figure \ref{penrose_simple}. The coordinates usually used to compactify the vacuum deSitter metric as for example described in \cite{GiHa77} will do. They are given by
\begin{equation} \label{1_coord_1}
U_\mathrm{I} = \sqrt{\frac{r_C-r}{r_C+r}} e^{-\frac{t}{r_C}},\quad V_{\mathrm{I}} = -\sqrt{\frac{r_C-r}{r_C+r}}e^{\frac{t}{r_C}}
\end{equation}
and can be compactified via the transformations $p_\mathrm{I}=\arctan(U_\mathrm{I})$, $q_\mathrm{I}=\arctan(V_\mathrm{I})$. The left part of Figure \ref{cons_ps} shows the transformed region $\mathbb R \times [0,r_C)$ in the $p_\mathrm{I}$, $q_\mathrm{I}$-coordinates. \par
\begin{figure}[ht]
\begin{center}
\begin{tikzpicture}
\pgfdeclarelayer{bg}
\pgfsetlayers{bg,main}

\node (n1)  at (-4.5,0)    {I};
\node (n2)  at (-1,0)      {I};
\node (n3)  at (0,0.65)    {$\mathrm{II}_\mathrm{C}$};
\node (n4)  at (0,-0.65)   {$\mathrm{III}_\mathrm{C}$};
\node (n5)  at (1,0)       {IV};
\node (n6)  at (4.5,0)     {IV};
\node (pi)  at (-5,1.5)    {${p_\mathrm{I}}$};
\node (qi)  at (-2.75,1.5) {$q_\mathrm{I}$};
\node (pc)  at (-1,1.5)    {$p_C$};
\node (qc)  at (1.25,1.5)  {$q_C$};
\node (piv) at (3,1.5)     {$p_\mathrm{IV}$};
\node (qiv) at (5.25,1.5)  {$q_\mathrm{IV}$};
\node (r0l2) at (2,-1.3)   {$\scriptstyle \tilde r = R_{0\Lambda}$};
\node (r0l1) at (-2,-1.3)  {$\scriptstyle \tilde r = R_{0\Lambda}$};

\path 
(n1) +(-0.5,1)      coordinate (n1lefttop)
     +(-0.75,1.25)  coordinate (n1lefttoptop)
     +(-0.5,-1)     coordinate (n1leftbot)
     +(-0.75,-1.25) coordinate (n1leftbotbot)
     +(0.5,0)       coordinate (n1center)
     +(1.5,1)       coordinate (n1righttop)
     +(1.75,1.25)   coordinate (n1righttoptop)
     +(1.5,-1)      coordinate (n1righttbot)
     +(1.75,-1.25)  coordinate (n1rightbotbot)
;

\path 
(n2) +(-1,0)        coordinate (n2left)
     +(0,1)         coordinate (n2lefttop)
     +(-0.25,1.25)  coordinate (n2lefttoptop)
     +(0,-1)        coordinate (n2leftbot)
     +(-0.25,-1.25) coordinate (n2leftbotbot)
     +(1,0)         coordinate (n2center)
     +(2,1)         coordinate (n2righttop)
     +(2.25,1.25)   coordinate (n2righttoptop)
     +(2,-1)        coordinate (n2rightbot)
     +(2.25,-1.25)  coordinate (n2rightbotbot)
     +(3,0)         coordinate (n2right)
;

\path 
(n6) +(0.5,1)       coordinate (n6righttop)
     +(0.75,1.25)   coordinate (n6righttoptop)
     +(0.5,-1)      coordinate (n6rightbot)
     +(0.75,-1.25)  coordinate (n6rightbotbot)
     +(-0.5,0)      coordinate (n6center)
     +(-1.5,1)      coordinate (n6lefttop)
     +(-1.75,1.25)  coordinate (n6lefttoptop)
     +(-1.5,-1)     coordinate (n6lefttbot)
     +(-1.75,-1.25) coordinate (n6leftbotbot)
;

\begin{scope}[on background layer]
\foreach \a in {2,4,6,8}{
\pgfmathparse{1+(\a-1)/7}
\edef\w{\pgfmathresult}
\draw[lightgray] (VIItop)  to[out =-90+\a*5, in=90-\a*5, looseness=\w] (VIIbot);
\draw[lightgray] (Xtop)    to[out=-90-\a*5, in=90+\a*5, looseness=\w] (Xbot);
\draw[lightgray] (VIItop)  to[out=-\a*5, in=-180+\a*5, looseness=\w] (Xtop);
\draw[lightgray] (VIIbot)  to[out=\a*5, in=180-\a*5, looseness=\w] (Xbot);
}
\foreach \a in {4,6,8}{
\pgfmathparse{1+(\a-1)/7}
\edef\w{\pgfmathresult}
\draw[lightgray] (n1lefttop)    to[out =-90+\a*5, in=90-\a*5, looseness=\w] (n1leftbot);
\draw[lightgray] (n6righttop)  to[out=-90-\a*5, in=90+\a*5, looseness=\w] (n6rightbot);
}
\foreach \a in {2,4}{
\pgfmathparse{1+(\a-1)/7}
\edef\w{\pgfmathresult}
\draw[lightgray] (Xtop)    to[out =-90+\a*5, in=90-\a*5, looseness=\w] (Xbot);
\draw[lightgray] (VIItop)  to[out=-90-\a*5, in=90+\a*5, looseness=\w] (VIIbot);
}
\draw[lightgray] (VIItop)  to[out=-90, in=90, looseness=0] (VIIbot);
\draw[lightgray] (Xtop)    to[out=-90, in=90, looseness=0] (Xbot);
\end{scope}

\draw[-, fill=black!30]
      (n1lefttop) to[out=-70, in=70, looseness=1.286] (n1leftbot) -- node[midway, above, sloped] {${\scriptstyle r=0}$} (n1lefttop) -- cycle;
\draw[-, fill=black!30]
      (n6righttop) to[out=-110, in=110, looseness=1.286] (n6rightbot) -- node[midway, below, sloped] {${\scriptstyle \tilde r=0}$} (n6righttop) -- cycle;

\draw[->] (n1leftbotbot) -- (n1righttoptop);
\draw[->] (n1rightbotbot) -- (n1lefttoptop);
\draw[->] (n2leftbotbot) -- (n2righttoptop);
\draw[->] (n2rightbotbot) -- (n2lefttoptop);
\draw[->] (n6leftbotbot) -- (n6righttoptop);
\draw[->] (n6rightbotbot) -- (n6lefttoptop);

\draw (n2lefttop) -- node[midway, above, sloped] {${\scriptstyle r=\infty, \mathscr I^+}$} (n2righttop);
\draw (n2leftbot) -- node[midway, below, sloped] {${\scriptstyle \tilde r=\infty, \mathscr I^-}$} (n2rightbot);

\draw[dashed] (n1lefttop)  to[out=-80, in=80, looseness=1.143] (n1leftbot);
\draw[dashed] (n6righttop)    to[out=-100, in=100, looseness=1.143] (n6rightbot);
\draw[dashed] (n2lefttop) -- (n2left) -- (n2leftbot);
\draw[dashed] (n2righttop) -- (n2right) -- (n2rightbot);

\draw[-, pattern=north west lines] (n1lefttop) to[out=-70, in=70, looseness=1.286] (n1leftbot) -- node[midway, below, sloped] {${\scriptstyle r=r_C}$} (n1center) -- node[midway, above, sloped] {${\scriptstyle r=r_C}$} (n1lefttop) -- cycle;
\draw[-, pattern=north west lines] (n2leftbot) to[out=120, in=-120, looseness=1.429] (n2lefttop) -- (n2center) -- (n2leftbot) -- cycle;
\draw[-, pattern=north east lines] (n2righttop) to[out=-60, in=60, looseness=1.429] (n2rightbot) -- (n2center) -- (n2righttop) -- cycle;
\draw[-, pattern=north east lines] (n6righttop) to[out=-110, in=110, looseness=1.286] (n6rightbot) -- node[midway, below, sloped] {${\scriptstyle \tilde r=r_C}$} (n6center) -- node[midway, above, sloped] {${\scriptstyle \tilde r=r_C}$} (n6righttop) -- cycle;

\path (r0l1)+(0.1,0.2) coordinate (r0l11);
\path (r0l1)+(0.57,1.3) coordinate (r0l12);
\draw[->] (r0l11) to[out=90, in=-135, looseness=1.2] (r0l12);
\path (r0l2)+(-0.1,0.2) coordinate (r0l21);
\path (r0l2)+(-0.57,1.3) coordinate (r0l22);
\draw[->] (r0l21) to[out=90, in=-45, looseness=1.2] (r0l22);

\end{tikzpicture}
\caption{Construction of the spacetime shown in Figure \ref{penrose_simple}. We use three coordinate charts to compactify the spacetime. Regions that are shaded in the same orientation are covered by two of the coordinate charts simultaneously, thus there coordinates can be changed. The gray areas are matter regions and the dashed lines correspond to $r=r_{B\Lambda}$. We distinguish between $r$ and $\tilde r$ to emphasize that there are different spacetime regions that cannot be covered by a single chart $(t,r,\vartheta,\varphi)$. All coordinates $p$ and $q$ take values in $\left[-\frac \pi 2,\frac \pi 2\right]$. \label{cons_ps}}
\end{center}
\end{figure}
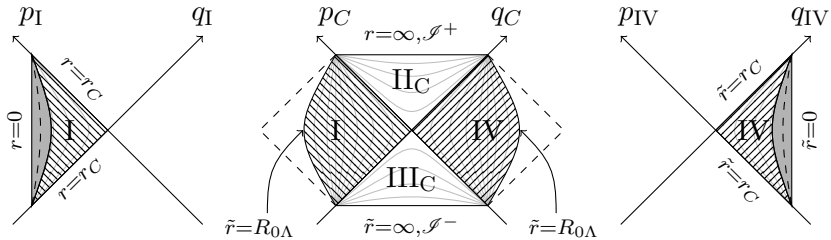
The support of the matter (i.e.~the matter distribution $f$) ends at a radius $R_{0\Lambda}$. For $r\geq R_{0\Lambda}$ the metric is merely given by the Schwarzschild-deSitter metric
\begin{equation} \label{ssdsmetric}
\mathrm ds^2 =-\left(1-\frac{r^2\Lambda}{3}-\frac{2M}{r}\right)\mathrm dt^2 + \frac{\mathrm dr^2}{1-\frac{r^2\Lambda}{3}-\frac{2M}{r}}+r^2\mathrm d\Omega^2,\quad R_{0\Lambda}\leq r < r_C.
\end{equation}
At $r=r_C$ there is a coordinate singularity of the metric that we want to pass. For this purpose we express the metric in other coordinates that do not have a singularity at $r=r_C$ being defined on the region where $r\in[R_{0\Lambda},r_c)$ (region I in the middle part of Figure \ref{cons_ps}). These coordinates are given by
\begin{equation} \label{1_coord_2}
U_C=\sqrt{\frac{(r_C-r)(r-r_n)^{\gamma-1}}{(r-r_{B\Lambda})^\gamma}}e^{-\frac{t}{2\delta_C}} >0, \quad V_C = -\sqrt{\frac{(r_C-r)(r-r_n)^{\gamma-1}}{(r-r_{B\Lambda})^\gamma}}e^{\frac{t}{2\delta_C}}<0,
\end{equation}
where $\delta_C = \frac{r_C}{\Lambda r_C^2-1} > 0$ and $\gamma = \frac{r_{B\Lambda}}{(1-\Lambda r_{B\Lambda}^2) \delta_C},$ $0<\gamma <1$\footnote{The signs of these expressions can be checked with the equality $1-\frac{r_C^2\Lambda}{3}-\frac{2M}{r_C}=0$}. They are used in the standard compactification procedure of the Schwarzschild-deSitter metric. For details, see \cite{bf86} or \cite{christina}. In the new coordinates the line element of the Schwarzschild-deSitter metric \eqref{ssdsmetric} reads
\begin{equation} \label{line_1_c_coords}
\mathrm ds^2 = -\frac{4\Lambda \delta_C^2}{3r}(r-r_n)^{2-\gamma}(r-r_{B\Lambda})^{1+\gamma}\mathrm dU_C \mathrm dV_C + r^2\mathrm d\vartheta^2 + r^2\sin^2(\vartheta)\mathrm d\varphi^2,\quad r\geq R_{0\Lambda}.
\end{equation}
Note that here $r$ is seen as a function of $U_C$ and $V_C$. The coordinates only take values in $\{(u,v)\in\mathbb R^2\,|\,u>0,v<0\}$. We extend them to $\mathbb R^2$. This extension gets beyond $r_C$. Again, the spacetime region covered by the coordinates $U_C$ and $V_C$ can be compactified using the transformation $p_C=\arctan(U_C)$, $q_C=\arctan(V_C)$. The middle part of Figure \ref{cons_ps} shows the region covered by $U_C$ and $V_C$, each taking values in $\mathbb R$, in the $p_C$, $q_C$-coordinates. The line element (\ref{line_1_c_coords}) can be extended to the whole area covered by $U_C$ and $V_C$ in an analytic way. In the region where $r\in[R_{0\Lambda},r_C)$ the coordinate charts (\ref{1_coord_1}) and (\ref{1_coord_2}) overlap and one can change coordinates (the shaded areas in the left and middle part of Figure \ref{cons_ps}). The transformation law is given by
\begin{equation}
\begin{aligned}
U_C(U_\mathrm{I}) &= \sqrt{\frac{(r_C+r)(r-r_n)^{\gamma-1}}{(r-r_{B\Lambda})^\gamma}}e^{\frac{3-2\Lambda r_C^2}{r_C}}U_I,\\
V_C(V_\mathrm{I}) &= \sqrt{\frac{(r_C+r)(r-r_n)^{\gamma-1}}{(r-r_{B\Lambda})^\gamma}}e^{-\frac{3-2\Lambda r_C^2}{r_C}}V_I.
\end{aligned}
\end{equation}
Region IV in Figure \ref{penrose_simple} corresponds to a second universe that also can be equipped with Schwarzschild coordinates $(\tilde t, \tilde r)$. We distinguish between $r$ and $\tilde r$ to emphasize that the charts $(t,r)$ and $(\tilde t, \tilde r)$ cover different regions of the spacetime. Geometrically these regions look equal. This will be different for the second class of spacetimes (\ref{class2}). In the region $\tilde r \in [R_{0\Lambda},r_C)$ (region IV in the middle part of Figure \ref{cons_ps}), in terms of the $\tilde t$, $\tilde r$-coordinates $U_C$ and $V_C$ are given by
\begin{equation} \label{c_coord_1}
U_C=-\sqrt{\frac{(r_C-\tilde r)(\tilde r-r_-)^{\gamma-1}}{(\tilde r-r_{B\Lambda})^\gamma}}e^{-\frac{\tilde t}{2\delta_C}} <0, \quad V_C = \sqrt{\frac{(r_C-\tilde r)(\tilde r-r_-)^{\gamma-1}}{(\tilde r-r_{B\Lambda})^\gamma}}e^{\frac{\tilde t}{2\delta_C}}>0.
\end{equation}
To get a compactification of the whole region IV, including $\tilde r<r_{B}$, we introduce coordinates similar to (\ref{1_coord_1}), namely
\begin{equation} \label{1_coord_3}
U_\mathrm{IV} = -\sqrt{\frac{r_C-\tilde r}{r_C+\tilde r}} e^{-\frac{\tilde t}{r_C}},\quad V_{\mathrm{IV}} = \sqrt{\frac{r_C-\tilde r}{r_C+\tilde r}}e^{\frac{\tilde t}{r_C}}
\end{equation}
covering the region characterized by $\tilde r \in[0,r_C)$. This region can again be compactified via $p=\arctan(U)$, $q=\arctan(V)$. This yields the right part of Figure \ref{cons_ps}. For $\tilde r \in [R_{0\Lambda},r_C)$ the coordinates can be changed using an analogue law as (\ref{c_coord_1}). On the spacetime region represented by the middle part of Figure \ref{cons_ps} the line element can be expressed by (\ref{line_1_c_coords}). Since both, in region I and IV the metric can be brought into the form (\ref{metric_mulambda}) via coordinate transformations also the energy densities are identical in these regions. This of course implies that in both regions the mass parameter is equal. \par
\vspace{10pt}
Case (ii): Now we come to the spacetimes characterized by Figure \ref{penrose_diagram}. For the construction of a $C^2$-extension of the metric (\ref{metric_mulambda}) at least five coordinate charts are necessary. Figure \ref{construction} illustrates this construction.
\begin{figure}[ht]
\begin{center}
\begin{tikzpicture}
\pgfdeclarelayer{bg}
\pgfsetlayers{bg,main}

\node (n1)  at (-6.5,0)    {I};
\node (n2)  at (-3,0)      {I};
\node (n3)  at (-2,0.65)   {$\mathrm{II}_\mathrm{C}$};
\node (n4)  at (-2,-0.65)  {$\mathrm{III}_\mathrm{C}$};
\node (n5)  at (-1,0)      {IV};
\node (n6)  at (2,0)       {IV};
\node (n7)  at (3,0.65)    {$\mathrm{V}_\mathrm{BH}$};
\node (n8)  at (3,-0.65)   {$\mathrm{VI}_\mathrm{WH}$};
\node (n9)  at (4,0)       {VII};
\node (pi)  at (-7,1.5)    {${p_\mathrm{I}}$};
\node (qi)  at (-4.75,1.5) {$q_\mathrm{I}$};
\node (pc)  at (-3,1.5)    {$p_C$};
\node (qc)  at (-0.75,1.5)  {$q_C$};
\node (piv) at (2,1.5)     {$p_B$};
\node (qiv) at (4.25,1.5)  {$q_B$};

\path 
(n1) +(-0.5,1)      coordinate (n1lefttop)
     +(-0.75,1.25)  coordinate (n1lefttoptop)
     +(-0.5,-1)     coordinate (n1leftbot)
     +(-0.75,-1.25) coordinate (n1leftbotbot)
     +(0.5,0)       coordinate (n1center)
     +(1.5,1)       coordinate (n1righttop)
     +(1.75,1.25)   coordinate (n1righttoptop)
     +(1.5,-1)      coordinate (n1righttbot)
     +(1.75,-1.25)  coordinate (n1rightbotbot)
;

\path 
(n2) +(-1,0)        coordinate (n2left)
     +(0,1)         coordinate (n2lefttop)
     +(-0.25,1.25)  coordinate (n2lefttoptop)
     +(0,-1)        coordinate (n2leftbot)
     +(-0.25,-1.25) coordinate (n2leftbotbot)
     +(1,0)         coordinate (n2center)
     +(2,1)         coordinate (n2righttop)
     +(2.25,1.25)   coordinate (n2righttoptop)
     +(2,-1)        coordinate (n2rightbot)
     +(2.25,-1.25)  coordinate (n2rightbotbot)
     +(3,0)         coordinate (n2right)
;

\path 
(n6) +(-1,0)        coordinate (n6left)
     +(0,1)         coordinate (n6lefttop)
     +(-0.25,1.25)  coordinate (n6lefttoptop)
     +(0,-1)        coordinate (n6leftbot)
     +(-0.25,-1.25) coordinate (n6leftbotbot)
     +(1,0)         coordinate (n6center)
     +(2,1)         coordinate (n6righttop)
     +(2.25,1.25)   coordinate (n6righttoptop)
     +(2,-1)        coordinate (n6rightbot)
     +(2.25,-1.25)  coordinate (n6rightbotbot)
     +(3,0)         coordinate (n6right)
;

\begin{scope}[on background layer]
\foreach \a in {2,4,6,8}{
\pgfmathparse{1+(\a-1)/7}
\edef\w{\pgfmathresult}
\draw[lightgray] (n2lefttop)  to[out =-90+\a*5, in=90-\a*5, looseness=\w] (n2leftbot);
\draw[lightgray] (n2righttop)  to[out =-90+\a*5, in=90-\a*5, looseness=\w] (n2rightbot);
\draw[lightgray] (n6lefttop)  to[out =-90+\a*5, in=90-\a*5, looseness=\w] (n6leftbot);
\draw[lightgray] (n6righttop)  to[out =-90+\a*5, in=90-\a*5, looseness=\w] (n6rightbot);
\draw[lightgray] (n2righttop)    to[out=-90-\a*5, in=90+\a*5, looseness=\w] (n2rightbot);
\draw[lightgray] (n6lefttop)    to[out=-90-\a*5, in=90+\a*5, looseness=\w] (n6leftbot);
\draw[lightgray] (n6righttop)    to[out=-90-\a*5, in=90+\a*5, looseness=\w] (n6rightbot);
\draw[lightgray] (n2lefttop)  to[out=-\a*5, in=-180+\a*5, looseness=\w] (n2righttop);
\draw[lightgray] (n6lefttop)  to[out=-\a*5, in=-180+\a*5, looseness=\w] (n6righttop);
\draw[lightgray] (n2leftbot)  to[out=\a*5, in=180-\a*5, looseness=\w] (n2rightbot);
\draw[lightgray] (n6leftbot)  to[out=\a*5, in=180-\a*5, looseness=\w] (n6rightbot);
}
\foreach \a in {4,6,8}{
\pgfmathparse{1+(\a-1)/7}
\edef\w{\pgfmathresult}
\draw[lightgray] (n1lefttop)  to[out=-90+\a*5, in=90-\a*5, looseness=\w] (n1leftbot);
}
\foreach \a in {2,4}{
\pgfmathparse{1+(\a-1)/7}
\edef\w{\pgfmathresult}
\draw[lightgray] (n2lefttop)  to[out=-90-\a*5, in=90+\a*5, looseness=\w] (n2leftbot);
}
\draw[lightgray] (n2lefttop)  to[out=-90, in=90, looseness=0] (n2leftbot);
\draw[lightgray] (n2righttop)    to[out=-90, in=90, looseness=0] (n2rightbot);
\draw[lightgray] (n6lefttop)  to[out=-90, in=90, looseness=0] (n6leftbot);
\draw[lightgray] (n6righttop)    to[out=-90, in=90, looseness=0] (n6rightbot);
\end{scope}

\draw[-, fill=black!30]
      (n1lefttop) to[out=-70, in=70, looseness=1.286] (n1leftbot) -- node[midway, above, sloped] {${\scriptstyle r=0}$} (n1lefttop) -- cycle;

\draw[->] (n1leftbotbot) -- (n1righttoptop);
\draw[->] (n1rightbotbot) -- (n1lefttoptop);
\draw[->] (n2leftbotbot) -- (n2righttoptop);
\draw[->] (n2rightbotbot) -- (n2lefttoptop);
\draw[->] (n6leftbotbot) -- (n6righttoptop);
\draw[->] (n6rightbotbot) -- (n6lefttoptop);

\draw (n2lefttop) -- node[midway, above, sloped] {${\scriptstyle r=\infty, \mathscr I^+}$} (n2righttop);
\draw (n2leftbot) -- node[midway, below, sloped] {${\scriptstyle r=\infty, \mathscr I^-}$} (n2rightbot);
\draw[decorate,decoration={snake, segment length=1.5mm, amplitude=0.4mm}] (n6lefttop) -- (n6righttop)
      node[midway, above, inner sep=2mm] {${\scriptstyle r=0}$};
\draw[decorate,decoration={snake, segment length=1.5mm, amplitude=0.4mm}] (n6leftbot) -- (n6rightbot)
      node[midway, below, inner sep=2mm] {${\scriptstyle r=0}$};

\draw[dashed] (n1lefttop)  to[out=-80, in=80, looseness=1.143] (n1leftbot);
\draw[dashed] (n2lefttop) -- (n2left) -- (n2leftbot);

\draw[-, pattern=north west lines] (n1lefttop) to[out=-70, in=70, looseness=1.286] (n1leftbot) -- node[midway, below, sloped] {${\scriptstyle r=r_C}$} (n1center) -- node[midway, above, sloped] {${\scriptstyle r=r_C}$} (n1lefttop) -- cycle;
\draw[-, pattern=north west lines] (n2leftbot) to[out=120, in=-120, looseness=1.429] node[midway, above, sloped] {${\scriptstyle r=R_{0\Lambda}}$} (n2lefttop) -- (n2center) -- (n2leftbot) -- cycle;
\draw[-, pattern=north east lines] (n2righttop) -- node[midway, above, sloped] {${\scriptstyle r=r_{B\Lambda}}$} (n2right) -- node[midway, below, sloped] {${\scriptstyle r=r_{B\Lambda}}$} (n2rightbot) -- (n2center) -- (n2righttop) -- cycle;
\draw[-, pattern=north east lines] (n6lefttop) -- (n6center) -- (n6leftbot) -- node[midway, below, sloped] {${\scriptstyle r=r_C}$} (n6left) --node[midway, above, sloped] {${\scriptstyle r=r_C}$} (n6lefttop) -- cycle;
\draw[-, pattern=horizontal lines] (n6righttop) -- node[midway, above, sloped] {${\scriptstyle r=r_C}$} (n6right) -- node[midway, below, sloped] {${\scriptstyle r=r_C}$} (n6rightbot) -- (n6center) -- (n6righttop) -- cycle;

\path[fill=black,draw=black] (6,0) circle (0.2ex);
\path[fill=black,draw=black] (6.2,0) circle (0.2ex);
\path[fill=black,draw=black] (6.4,0) circle (0.2ex);
\end{tikzpicture}
\caption{Construction of the spacetime shown in Figure \ref{penrose_diagram}. On regions that are shaded in equal directions two coordinates are defined and one can change between them. All coordinates $p$, $q$ take values in $\left[-\frac \pi 2,\frac \pi 2\right]$. \label{construction}}
\end{center}
\end{figure}
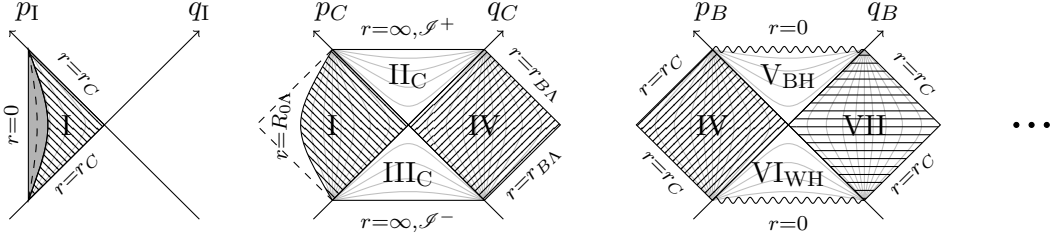
Again we begin with the region $r\in[0,r_C)$ where the metric is given by (\ref{metric_mulambda}). In the same way as described above one expresses the line element in other coordinates $p_C$, $q_C$ that avoid the singularity at $r=r_C$ and cover the region $r_{0\Lambda} < r < r_C$. The line element as given by (\ref{line_1_c_coords}) can be analytically\footnote{In matter regions the regularity of the metric is $C^2$ as provided by Theorem \ref{main-thm}, in vacuum regions the metric is analytic.} extended onto the regions I -- IV in Figure \ref{construction}. From now on the procedure differs from the one above. Regions I and IV are not supposed to be geometrically identical but region IV shall be a vacuum region thus the metric will be given by Schwarzschild-deSitter everywhere. Certainly, the line element (\ref{line_1_c_coords}) of the Schwarzschild-deSitter metric  being given in terms of the coordinates $U_C$, $V_C$ now shows a singularity at $r=r_{B\Lambda}$\footnote{By abuse of notation we use $r$ for the radius coordinate in every region of the spacetime $\mathscr M_1$.}. This coordinate singularity can be overcome by virtue of the coordinates
\begin{equation} \label{2_coord_bh}
U_B = \sqrt{\frac{(r-r_{B\Lambda})(r-r_n)^{\beta-1}}{(r_C-r)^\beta}}e^{\frac{t}{2\delta_B}}, \quad V_B = -\sqrt{\frac{(r-r_{B\Lambda})(r-r_n)^{\beta-1}}{(r_C-r)^\beta}}e^{-\frac{t}{2\delta_B}},
\end{equation}
where $\delta_B = \frac{r_{B\Lambda}}{1-\Lambda r_{B\Lambda}^2} > 0$ and $\beta = \frac{r_C}{(\Lambda r_C^2-1)\delta_B} > 1$. The coordinates are defined on the middle part of Figure \ref{construction}. This is part of the standard compactification procedure of the Schwarzschild-deSitter metric, cf.~\cite{bf86} or \cite{christina}. Alternating the coordinate charts $(U_C, V_C)$ and $(U_B, V_B)$ this procedure can be continued an arbitrary amount of times extending the spacetime to additional black hole and cosmological regions. This periodic extension stops if for $r<r_C$ the metric is not given by a vacuum solution of the Einstein equations but again by the solution (\ref{metric_mulambda}) of the Einstein-Vlasov system. There is no coordinate singularity at $r=r_{B\Lambda}$ and also a regular center at $r=0$. So a regular expression of the line element by the coordinates (\ref{1_coord_1}) is possible again, leading to region X in Figure \ref{penrose_diagram}. This region now is geometrically identical to region I in Figure \ref{penrose_diagram} (and also in Figure \ref{construction}). In the extension procedure just described the expressions for the coordinates (\ref{c_coord_1}) and (\ref{2_coord_bh}) used to pass the coordinate singularities at $r=r_{B\Lambda}$ and $r=r_C$ in the vacuum regions of the spacetime $\mathscr M_1$ depend on $\Lambda$ and $M$. So the identification of corresponding regions in the different coordinate charts, e.g.~I or IV in Figure \ref{construction}, is only possible if the parameters $\Lambda$ and $M$ are equal in all regions of $\mathscr M_1$. In terms of the notation of Figure \ref{penrose_diagram} this implies $M_1=M_2$. \par
\vspace{10pt}
Case (iii): A maximal extension of a solution to the Einstein-Vlasov system on the manifold $\mathscr M_2$ as characterized by Figure \ref{penrose_periodic}, i.e.~spacetimes in class (\ref{class3}), can be obtained in a similar way. Starting point is the region $r_{B \Lambda} < r < r_C$. On this interval the existence of a unique solution to a given ansatz for $f$ is established by Theorem \ref{theo_la_pos_bh}. The solution on hand can be understood as a Schwarzschild-deSitter spacetime with an immersed shell of Vlasov matter supported on an interval $(r_{+\Lambda},R_{0\Lambda})$. Two mass quantities are important. On the one hand one has the mass parameter $M_0$ of the black hole at the center. On the other hand $M$ that is defined to be
\begin{equation} \label{eq_p3_quasilocal}
M=M_0 + M_\varrho,\quad M_\varrho = 4\pi \int_{r_{+\Lambda}}^{R_{0\Lambda}} s^2\varrho_\Lambda(s)\mathrm ds.
\end{equation}
This quantity represents the sum of the mass of the black hole and the shell of Vlasov matter. As constructed in Theorem \ref{theo_la_pos_bh}, for $r_{B\Lambda} < r \leq r_{+\Lambda}$ the metric is given by a shifted Schwarschild-deSitter metric
\begin{equation}
\mathrm ds^2 = -C\left(1-\frac{r^2\Lambda}{3}-\frac{2M_0}{r}\right)\mathrm dt^2 + \frac{\mathrm dr^2}{C\left(1-\frac{r^2\Lambda}{3}-\frac{2M_0}{r}\right)} + r^2\mathrm d\Omega^2,\quad r_{B\Lambda} < r \leq r_{+\Lambda}
\end{equation}
with the mass $M_0$ of the black hole as mass parameter and the shift $C>0$. For $R_{0\Lambda} \leq r < r_C$ the metric is given by the Schwarzschild-deSitter metric (\ref{ssdsmetric}) with mass parameter $M$.\par
The two critical horizons, $r_{B\Lambda}$ and $r_C$ can be given explicitly as zeros of the expression $1-\frac{r^2\Lambda}{3}-\frac{2m(r)}{r}$. But it is important to note that the mass parameter $m(r)$ does not stay constant throughout the whole interval $(r_{B\Lambda},R_{0\Lambda})$. The black hole horizon $r_{B\Lambda}$ is characterized by $M_0$ and the cosmological horizon $r_C$ by $M$. This has to be kept in mind when choosing coordinates to construct an extension of the metric on $\mathscr M_2$ as illustrated in Figure \ref{constr_p}.
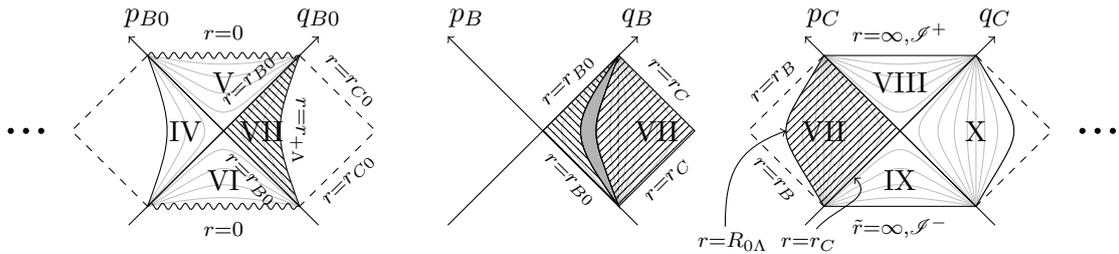
\begin{figure}[ht]
\begin{center}
\begin{tikzpicture}
\pgfdeclarelayer{bg}
\pgfsetlayers{bg,main}

\node (n1)  at (-4.7,0)    {IV};
\node (n2)  at (-4.2,0.65)   {V};
\node (n3)  at (-4.2,-0.65)  {VI};
\node (n4)  at (-3.7,0)    {VII};
\node (n5)  at (1.5,0)     {VII};
\node (n6)  at (3.7,0)       {VII};
\node (n7)  at (4.7,0.65)    {VIII};
\node (n8)  at (4.7,-0.65)   {IX};
\node (n9)  at (5.7,0)       {X};
\node (pbn) at (-5.2,1.5)  {$p_{B0}$};
\node (qbn) at (-2.95,1.5) {$q_{B0}$};
\node (pb)  at (-1,1.5)    {$p_B$};
\node (qb)  at (1.25,1.5)  {$q_B$};
\node (pc)  at (3.7,1.5)   {$p_C$};
\node (qc)  at (5.95,1.5)  {$q_C$};
\node (r0l) at (2.5,-1.47) {${\scriptstyle r=R_{0\Lambda}}$};
\node (rc2) at (3.5,-1.5)  {${\scriptstyle r=r_C}$};

\path 
(n1) +(-0.5,1)      coordinate (n1lefttop)
     +(-0.75,1.25)  coordinate (n1lefttoptop)
     +(-0.5,-1)     coordinate (n1leftbot)
     +(-0.75,-1.25) coordinate (n1leftbotbot)
     +(-1.5,0)      coordinate (n1left)
     +(2.5,0)       coordinate (n1right)
     +(0.5,0)       coordinate (n1center)
     +(1.5,1)       coordinate (n1righttop)
     +(1.75,1.25)   coordinate (n1righttoptop)
     +(1.5,-1)      coordinate (n1rightbot)
     +(1.75,-1.25)  coordinate (n1rightbotbot)
;

\path 
(n5) +(-2.5,1)      coordinate (n5lefttop)
     +(-2.75,1.25)  coordinate (n5lefttoptop)
     +(-2.5,-1)     coordinate (n5leftbot)
     +(-2.75,-1.25) coordinate (n5leftbotbot)
     +(-1.5,0)      coordinate (n5center)
     +(-0.5,1)       coordinate (n5righttop)
     +(-0.25,1.25)   coordinate (n5righttoptop)
     +(-0.5,-1)      coordinate (n5rightbot)
     +(-0.25,-1.25)  coordinate (n5rightbotbot)
     +(0.5,0)       coordinate (n5right)
;

\path 
(n6) +(-1,0)        coordinate (n6left)
     +(-0,1)        coordinate (n6lefttop)
     +(-0.25,1.25)  coordinate (n6lefttoptop)
     +(-0,-1)       coordinate (n6leftbot)
     +(-0.25,-1.25) coordinate (n6leftbotbot)
     +(1,0)         coordinate (n6center)
     +(2,1)         coordinate (n6righttop)
     +(2.25,1.25)   coordinate (n6righttoptop)
     +(2,-1)        coordinate (n6rightbot)
     +(2.25,-1.25)  coordinate (n6rightbotbot)
     +(3,0)         coordinate (n6right)
;

\begin{scope}[on background layer]
\foreach \a in {2,4,6,8}{
\pgfmathparse{1+(\a-1)/7}
\edef\w{\pgfmathresult}
\draw[lightgray] (n5righttop)  to[out =-90+\a*5, in=90-\a*5, looseness=\w] (n5rightbot);
\draw[lightgray] (n6lefttop)  to[out =-90+\a*5, in=90-\a*5, looseness=\w] (n6leftbot);
\draw[lightgray] (n5righttop)    to[out=-90-\a*5, in=90+\a*5, looseness=\w] (n5rightbot);
\draw[lightgray] (n6righttop)    to[out=-90-\a*5, in=90+\a*5, looseness=\w] (n6rightbot);
\draw[lightgray] (n1lefttop)  to[out=-\a*5, in=-180+\a*5, looseness=\w] (n1righttop);
\draw[lightgray] (n6lefttop)  to[out=-\a*5, in=-180+\a*5, looseness=\w] (n6righttop);
\draw[lightgray] (n1leftbot)  to[out=\a*5, in=180-\a*5, looseness=\w] (n1rightbot);
\draw[lightgray] (n6leftbot)  to[out=\a*5, in=180-\a*5, looseness=\w] (n6rightbot);
}
\foreach \a in {2,4,6}{
\pgfmathparse{1+(\a-1)/7}
\edef\w{\pgfmathresult}
\draw[lightgray] (n6righttop)  to[out =-90+\a*5, in=90-\a*5, looseness=\w] (n6rightbot);
\draw[lightgray] (n6lefttop)    to[out=-90-\a*5, in=90+\a*5, looseness=\w] (n6leftbot);
}
\foreach \a in {6,8}{
\pgfmathparse{1+(\a-1)/7}
\edef\w{\pgfmathresult}
\draw[lightgray] (n1lefttop)  to[out =-90+\a*5, in=90-\a*5, looseness=\w] (n1leftbot);
\draw[lightgray] (n1righttop)    to[out=-90-\a*5, in=90+\a*5, looseness=\w] (n1rightbot);
}
\draw[lightgray] (n5righttop)  to[out=-90, in=90, looseness=0] (n5rightbot);
\draw[lightgray] (n6righttop)    to[out=-90, in=90, looseness=0] (n6rightbot);
\draw[lightgray] (n6lefttop)  to[out=-90, in=90, looseness=0] (n6leftbot);
\end{scope}

\draw[-, fill=black!30]
      (n5righttop) to[out=-125, in=125, looseness=1.5] (n5rightbot) to[out=110, in=-110, looseness=1.5] (n5righttop) -- cycle;

\draw[->] (n1leftbotbot) -- (n1righttoptop);
\draw[->] (n1rightbotbot) -- (n1lefttoptop);
\draw[->] (n5leftbotbot) -- (n5righttoptop);
\draw[->] (n5rightbotbot) -- (n5lefttoptop);
\draw[->] (n6leftbotbot) -- (n6righttoptop);
\draw[->] (n6rightbotbot) -- (n6lefttoptop);

\draw[decorate,decoration={snake, segment length=1.5mm, amplitude=0.4mm}] (n1lefttop) -- (n1righttop)
      node[midway, above, inner sep=2mm] {${\scriptstyle r=0}$};
\draw[decorate,decoration={snake, segment length=1.5mm, amplitude=0.4mm}] (n1leftbot) -- (n1rightbot)
      node[midway, below, inner sep=2mm] {${\scriptstyle r=0}$};
\draw (n6lefttop) -- node[midway, above, sloped] {${\scriptstyle r=\infty, \mathscr I^+}$} (n6righttop);
\draw (n6leftbot) -- node[midway, below, sloped] {${\scriptstyle \tilde r=\infty, \mathscr I^-}$} (n6rightbot);

\draw[dashed] (n1lefttop) -- (n1left) -- (n1leftbot);
\draw[dashed] (n1righttop) -- node[midway, above, sloped] {${\scriptstyle r=r_{C0}}$} (n1right) -- node[midway, below, sloped] {${\scriptstyle r=r_{C0}}$} (n1rightbot);
\draw[dashed] (n6lefttop) -- node[midway, above, sloped] {${\scriptstyle r=r_{B}}$} (n6left) -- node[midway, below, sloped] {${\scriptstyle r=r_{B}}$} (n6leftbot);
\draw[dashed] (n6righttop) -- (n6right) -- (n6rightbot);

\draw[-, pattern=north west lines] (n1righttop) to[out=-110, in=110, looseness=1.286] node[midway, above, sloped] {${\scriptstyle r=r_{+\Lambda}}$} (n1rightbot) -- node[midway, below, sloped] {${\scriptstyle r=r_{B0}}$} (n1center) -- node[midway, above, sloped] {${\scriptstyle r=r_{B0}}$} (n1righttop) -- cycle;
\draw (n1lefttop) to[out=-70, in=70, looseness=1.286] (n1leftbot) -- (n1center) -- (n1lefttop) -- cycle;
\draw[-, pattern=north west lines] (n5center) -- node[midway, above, sloped] {${\scriptstyle r=r_{B0}}$} (n5righttop) to[out=-125, in=125, looseness=1.5] (n5rightbot)  -- node[midway, below, sloped] {${\scriptstyle r=r_{B0}}$} (n5center) -- cycle;
\draw[-, pattern=north east lines] (n5righttop) to[out=-110, in=110, looseness=1.5] (n5rightbot) -- node[midway, below, sloped] {${\scriptstyle r=r_{C}}$} (n5right) -- node[midway, above, sloped] {${\scriptstyle r=r_{C}}$} (n5righttop) -- cycle;
\draw[-, pattern=north east lines] (n6lefttop) -- (n6center) -- (n6leftbot) to[out=120, in=-120, looseness=1.71] (n6lefttop) -- cycle;
\draw (n6righttop) -- (n6center) -- (n6rightbot) to[out=60, in=-60, looseness=1.71] (n6righttop) -- cycle;

\path[fill=black,draw=black] (-7,0) circle (0.2ex);
\path[fill=black,draw=black] (-6.8,0) circle (0.2ex);
\path[fill=black,draw=black] (-6.6,0) circle (0.2ex);
\path[fill=black,draw=black] (7.1,0) circle (0.2ex);
\path[fill=black,draw=black] (7.3,0) circle (0.2ex);
\path[fill=black,draw=black] (7.5,0) circle (0.2ex);

\path (rc2)+(0.1,0.2) coordinate (rc21);
\path (rc2)+(0.6,0.88) coordinate (rc22);
\draw[->] (rc21) to[out=49, in=-45, looseness=1.2] (rc22);
\path (r0l)+(-0.1,0.2) coordinate (r0l1);
\path (r0l)+(0.7,1.47) coordinate (r0l2);
\draw[->] (r0l1) to[out=90, in=-160, looseness=1.2] (r0l2);

\end{tikzpicture}
\caption{Construction of the spacetime shown in Figure \ref{penrose_periodic}. The middle part shows a Schwarzschild-deSitter spacetime with an immersed matter shell for $r_{B\Lambda}=r_{B0} < r < r_C$. The left and the right part show the adjacent vacuum region containing several coordinate singularities. On regions that are shaded in equal directions two coordinates are defined and one can change between them. All coordinates $p$, $q$ take values in $\left[-\frac \pi 2,\frac \pi 2\right]$.\label{constr_p}}
\end{center}
\end{figure}
We distinguish between the zeros of $1-\frac{r^2\Lambda}{3}-\frac{2m(r)}{r}$ when $m(r)\equiv M_0$ and $m(r) \equiv M$ and call them $r_{B0}$, $r_{C0}$ or $r_B$, $r_C$, respectively. Note that $r_{B0}=r_{B\Lambda}$. Consider the metric on the region $r_{B0} < r < r_C$ being part of region VII in Figure \ref{penrose_periodic} or the middle part of Figure \ref{constr_p}. The metric shall be extended to the left (regions IV, $\mathrm{V}_\mathrm{B}$, $\mathrm{VI}_\mathrm{W}$) and to the right (regions $\mathrm{VIII}_\mathrm{C}$, $\mathrm{IX}_\mathrm{C}$, X) as a vacuum solution until the next matter shell appears. So the coordinate transformations have to be chosen with respect to the radii $r_{B}$ and $r_C$ belonging to the current mass parameter in the respective spacetime region. Three coordinate charts are needed to extend the metric beyond the black hole and the cosmological horizon. First we compactify the region $r_{B\Lambda}=r_{B0} < r < r_C$ using the coordinates
\begin{equation}
U_B = \sqrt{\frac{(r-r_{B0})(r-r_n)^{\beta-1}}{(r_C-r)^\beta}}e^{\frac{t}{2\delta_{B0}}},\quad V_B=-\sqrt{\frac{(r-r_{B0})(r-r_n)^{\beta-1}}{(r_C-r)^\beta}}e^{-\frac{t}{2\delta_{B0}}}.
\end{equation}
where $\delta_{B0} = \frac{r_{B0}}{1-\Lambda r_{B0}^2} > 0$ and $\beta = \frac{r_C}{(\Lambda r_C^2-1)\delta_{B0}} > 1$. These coordinates give rise to $p_B=\arctan(U_B)$ and $q_B=\arctan(V_B)$. This region is depicted in the middle part of Figure \ref{constr_p}.
The spacetimes characterized by Figure \ref{penrose_periodic} show two types of connected vacuum regions. The first type is characterized by $r\leq r_{+\Lambda}$ (inside the matter shell) and the second one by $r\geq R_{0\Lambda}$ (beyond the matter shell).
To extend the metric to the region inside the matter shell (and the black hole) one uses the coordinates
\begin{equation} \label{coords_b0}
U_{B0} = \sqrt{\frac{(r-r_{B0})(r-r_n)^{\beta_0-1}}{(r_{C0}-r)^\beta_0}}e^{\frac{t}{2\delta_{B0}}},\quad V_{B0}=-\sqrt{\frac{(r-r_{B0})(r-r_n)^{\beta_0-1}}{(r_{C0}-r)^\beta_0}}e^{-\frac{t}{2\delta_{B0}}},
\end{equation}
where $\delta_{B0} = \frac{r_{B0}}{1-\Lambda r_{B0}^2} > 0$ and $\beta_0 = \frac{r_{C0}}{(\Lambda r_{C0}^2-1)\delta_{B0}} > 1$, and the corresponding compactification $p_{B0}=\arctan(U_{B0})$, $q_{B0}=\arctan(V_{B0})$. These coordinates are valid for $0<r<r_{+\Lambda}$. The black hole horizon can be crossed using the usual arguments of the extension of the Schwarzschild-deSitter metric as for example done in \cite{GiHa77,bf86,christina}. This is illustrated in the left part of Figure \ref{constr_p}. The region beyond the matter shell (and the cosmological horizon) can be reached via the coordinates
\begin{equation} \label{coords_c3}
U_{C} = -\sqrt{\frac{(r_C-r)(r-r_n)^{\gamma-1}}{(r-r_B)^\gamma}}e^{-\frac{t}{2\delta_{C}}},\quad V_{C}=\sqrt{\frac{(r_C-r)(r-r_n)^{\gamma-1}}{(r-r_B)^\gamma}}e^{-\frac{t}{2\delta_{C}}},
\end{equation}
with $\delta_C = \frac{r_C}{\Lambda r_C^2-1} > 0$ and $\gamma = \frac{r_B}{(1-\Lambda r_B^2) \delta_C},$ $0<\gamma <1$. These coordinates extend the metric to the area $R_{0\Lambda} < r < \infty$, shown in the right part of Figure \ref{constr_p}. \par
On the connected vacuum regions the metric is given by only one expression even though vacuum extends onto several regions of $\mathscr M_2$, e.g.~regions VII, $\mathrm{VIII}_\mathrm{C}$, $\mathrm{IX}_\mathrm{C}$ and X. This implies that the coordinates $U_{B0}$, $V_{B0}$ or $U_C$, $V_C$ have to be given by the same expressions (\ref{coords_b0}) or (\ref{coords_c3}), respectively (modulo sign, cf.~\cite{GiHa77,bf86,christina}) which in turn implies that the mass parameter has to stay the same on these connected vacuum regions. For the vacuum region with $r\geq R_{0\Lambda}$ this implies $M_0^A+M_\varrho^{A_2} = M_0^B + M_\varrho^{B_1}$ (notation of Figure \ref{penrose_periodic}). On the region characterized by $r\leq r_{+\Lambda}$ this is always granted because the mass is entirely given by the black hole mass $M_0$. Finally the shift constants $C>0$ of the vacuum metric have to coincide in this region (IV and VII in Figure \ref{penrose_periodic}). They are determined by the matter shells surrounding the black hole and are equal in particular if these shells have the same shape which implies $M_\varrho^{A_1}=M_\varrho^{A_2}$.
\end{proof}


\appendix

\section{Proof that T acts as a contraction} \label{apptcontr}
In order to show that the operator $T$, defined in \eqref{eqdeft} acts as a contraction on the set $M$, defined in \eqref{eqdefm}, one has to check
\begin{enumerate}
\item $u \equiv y_0\in M$,
\item $u \in M \Rightarrow T u \in M$, and
\item $\exists a \in \; (0,1) \;\forall u,v\in M:\;\Ab{Tu-T v}_{\infty,\delta} \leq a \Ab{u-v}_{\infty,\delta}$, where $\Ab{.}_{\infty,\delta}=\sup_{r\in[0,\delta]}(.)$.
\end{enumerate}
\textrm{(i)}: Consider $u\equiv y_0$. Only the second critical condition
\begin{equation} \label{uismu0cond}
\frac{r^2\Lambda}{3}+\frac{\kappa}{r}\int_0^rs^2 G_\phi(s,u(s))\mathrm ds \leq c
\end{equation}
is relevant. We calculate
$$
\frac{r^2\Lambda}{3}+\frac{\kappa}{r}\int_0^rs^2 G_\phi(s,u(s))\mathrm ds \leq \frac{r^2\Lambda}{3} + \frac{\kappa r^2}{3} G_\phi(\delta,y_0) \leq \frac{\Lambda + \kappa G_\phi(r,y_0)}{3} \delta^2 \leq c
$$
for $\delta$ small enough. \\
\textrm{(ii)}: We have to guarantee that $y_0-1\leq (Tu)(r) \leq y_0+1$ and
$$
\frac{r^2\Lambda}{3}+\frac{\kappa}{r} \int_0^r s^2 G_\phi(s,T u(s))\mathrm ds \leq c.
$$
By choosing $\delta$ sufficiently small, one can achieve the domain of integration in $T$ to become arbitrarily small and these properties follow. \\
\textrm{(iii)}: We calculate
\begin{eqnarray*}
&&\Ab{Tu -T v}_{\infty,\delta} \\
&& = \Bigg\| \int_0^r \Bigg[\frac{\kappa/2}{1-\frac{s^2\Lambda}{3}-\frac{\kappa}{s}\int_0^s\sigma^2 G_\phi(\sigma,u(\sigma))\mathrm d\sigma} \Bigg(s(H_\phi(s,u(s))-H_\phi(s,v(s))) \\
&&\hspace{7cm}+ \frac{1}{s^2} \int_0^s \sigma^2 (G_\phi(\sigma,u(\sigma)) - G_\phi(\sigma,v(\sigma)))\mathrm d\sigma \Bigg) \\
&&\quad\quad + \Bigg(s H_\phi(s,v(s)) - \frac{2s\Lambda}{3\kappa} + \frac{1}{s^2}\int_0^s\sigma^2 G_\phi(\sigma,v(\sigma))\mathrm d\sigma\Bigg) \\
&&\quad\quad \times \Bigg(\frac{\kappa/2}{1-\frac{s^2\Lambda}{3}-\frac{\kappa}{s}\int_0^s\sigma^2 G_\phi(\sigma,u(\sigma))\mathrm d\sigma} - \frac{\kappa/2}{1-\frac{s^2\Lambda}{3}-\frac{\kappa}{s}\int_0^s\sigma^2 G_\phi(\sigma,v(\sigma))\mathrm d\sigma}\Bigg)\Bigg]\mathrm ds \Bigg\|_{\infty,\delta}.
\end{eqnarray*}
Since $G_\phi(r,u)$, $H_\phi(r,u)$, $\partial_u G_\phi(r,u)$, and $\partial_u H_\phi(r,u)$ are strictly in $u$ increasing functions, we have
\begin{align*}
&\sup_{u\in[y_0-1,y_0+1]} H_\phi(r,u) = H_\phi(r,y_0+1) =: H_\mathrm{sup}(r), \\
&\sup_{u\in[y_0-1,y_0+1]} G_\phi(r,u) = G_\phi(r,y_0+1) =: G_\mathrm{sup}(r), \\
&\sup_{u\in[y_0-1,y_0+1]} |\partial_u H_\phi(r,u)| = |\partial_u H_\phi(r,y_0+1)| =: G_\mathrm{sup}'(r),\\
&\sup_{u\in[y_0-1,y_0+1]} |\partial_u G_\phi(r,u)| = |\partial_u G_\phi(r,y_0+1)| =: H_\mathrm{sup}'(r).
\end{align*}
We can estimate the first summand in the following way:
\begin{eqnarray*}
&&\int_0^r \frac{\kappa/2}{1-\frac{s^2\Lambda}{3}-\frac{\kappa}{s}\int_0^s\sigma^2 G_\phi(\sigma,u(\sigma))\mathrm d\sigma}\\
&& \quad \times \Big(s( H_\phi(s,u(s))-H_\phi(s,v(s))) + \frac{1}{s^2} \int_0^s\sigma^2 (G_\phi(\sigma,u(\sigma)) - G_\phi(\sigma,v(\sigma)))\mathrm d\sigma \Big) \mathrm ds \\
&&\leq \frac{\kappa}{2(1-c)} \frac{\delta^2}{2} \left(H_\mathrm{sup}'(\delta) + \frac 1 3 G_\mathrm{sup}'(\delta)\right) \|u-v\|_{\infty,\delta}.
\end{eqnarray*}
Next, we consider the second summand:
\begin{eqnarray*}
&&\int_0^r \Big(s H_\phi(s,v(s)) - \frac{2s\Lambda}{3\kappa} + \frac{1}{s^2}\int_0^s\sigma^2 G_\phi(\sigma,v(\sigma))\mathrm d\sigma\Big) \\
&&\quad \times \Bigg(\frac{\kappa/2}{1-\frac{s^2\Lambda}{3}-\frac{\kappa}{s}\int_0^s\sigma^2 G_\phi(\sigma,u(\sigma))\mathrm d\sigma} - \frac{\kappa/2}{1-\frac{s^2\Lambda}{3}-\frac{\kappa}{s}\int_0^s\sigma^2 G_\phi(\sigma,v(\sigma))\mathrm d\sigma}\Bigg)\mathrm ds \\
&&\leq \int_0^r s \left(H_\mathrm{sup}(r) - \frac{2\Lambda}{3\kappa} + \frac{1}{3} G_\mathrm{sup}(r)\right) \frac{\kappa^2 s^2}{6(1-2c+c^2)} \mathrm ds \; G_\mathrm{sup}'(r)\; \|u-v\|_{\infty,\delta} \\
&&\leq \frac{\kappa^2 \delta^4}{24(1-2c+c^2)} \left(H_\mathrm{sup}(\delta) - \frac{2\Lambda}{3\kappa} + \frac{1}{3} G_\mathrm{sup}(\delta)\right) G_\mathrm{sup}'(\delta)\; \|u-v\|_{\infty,\delta}.
\end{eqnarray*}
So we get in total
\begin{multline*}
\|Tu - Tv\|_{\infty,\delta}\leq \Bigg(\frac{\kappa}{4(1-c)}\left(H_\mathrm{sup}'(\delta) + \frac 1 3 G_\mathrm{sup}'(\delta)\right)\delta^2 \\
+ \frac{\kappa^2}{24(1-2c +c^2)} \left(H_\mathrm{sup}(\delta) - \frac{2\Lambda}{3\kappa}+\frac 1 3 G_\mathrm{sup}(\delta)\right) G_\mathrm{sup}'(\delta) \delta^4 \Bigg) \|u-v\|_{\infty,\delta}.
\end{multline*}
If one actually wants to calculate $\delta$ one can make use of the estimate
\begin{align}
G_\phi(r,u) &= c_\ell r^{2\ell} \int_{\sqrt{1+L_0/r^2}}^\infty \phi\left(1-\varepsilon e^{-y}\right) \varepsilon^2 \left(\varepsilon^2-\left(1+\frac{L_0}{r^2}\right)\right)^{\ell+\frac{1}{2}}\mathrm d\varepsilon \nonumber \\
&\leq c_\ell r^{2\ell} \int_{1}^\infty \phi\left(1-\varepsilon e^{-y}\right) \varepsilon^2 \left(\varepsilon^2-1\right)^{\ell+\frac{1}{2}}\mathrm d\varepsilon
\end{align}
and the analogue one for $H_\phi$ to obtain a polynomial in $\delta$.

\section{Estimate of $|\varrho_\Lambda(r)-\varrho(r)|+|p_\Lambda(r)-p(r)|$} \label{apdetest}

The following calculation is valid for $r\in[0,\tilde r^*]$ where we can take for granted $1-\frac{2m(r)}{r} \geq \frac{1}{9}$ (Buchdahl inequality, cf.~\cite{and08}), $1- \frac{r^2\Lambda}{3} - \frac{2m_\Lambda(r)}{r} \geq \frac{1}{18}$ and $|y_\Lambda(r)-y(r)| \leq |y(R_0+\Delta R)|$, $\Delta R > 0$ where $R_0$ is defined to be the (first) zero of the background solution $y$. Since
\begin{multline} \label{apeest}
|\varrho_\Lambda(r)-\varrho(r)|+|p_\Lambda(r)-p(r)| \\\leq \left(\sup_{u\in[y_\Lambda(r),y(r)]} |\partial_u G_\phi(r,u)|+ \sup_{u\in[y_\Lambda(r),y(r)]} |\partial_u H_\phi(r,u)|\right)|y_\Lambda(r)-y(r)|
\end{multline}
 we calculate
\begin{eqnarray*}
&&|y_\Lambda(r) - y(r)| \leq \int_0^r |y'(s) - y'_\Lambda(s)| \mathrm ds \\
&& \leq \int_0^r\Bigg[ \underbrace{\frac{4\pi}{1-\frac{s^2\Lambda}{3}-\frac{2 m_\Lambda(s)}{s}}}_{\leq 72\pi}  \\
&& \quad \times \Bigg( \left|-\frac{s\Lambda}{12\pi}\right| + s|H_\phi(s,y_\Lambda(s))-H_\phi(s,y(s))| \\
&& \quad\quad\quad+ \underbrace{\frac{1}{s^2}\int_0^s \sigma^2 |G_\phi(\sigma,y_\Lambda(\sigma))-G_\phi(\sigma,y(\sigma))|\mathrm d\sigma}_{I_1} \Bigg) \\
&&\quad+ \left(s H_\phi(s,y(s)) + \frac{1}{s^2}\int_0^s \sigma^2 G_\phi(\sigma,y(\sigma))\mathrm d\sigma\right)\underbrace{\left(\frac{4\pi}{1-\frac{s^2\Lambda}{3}-\frac{2m_\Lambda(s)}{s}}-\frac{4\pi}{1-\frac{2m(s)}{s}}\right)}_{I_2} \Bigg]\mathrm ds.
\end{eqnarray*}
We estimate $I_1$ and $I_2$ separately:
\begin{eqnarray*}
I_1 &=& \int_0^r \frac{1}{s^2}\int_0^s \sigma^2 |G_\phi(\sigma,y_\Lambda(\sigma))- G_\phi(\sigma,y(\sigma))| \mathrm d\sigma \mathrm ds \\
&\leq& \int_0^r\int_0^r  |G_\phi(\sigma,y_\Lambda(\sigma))- G_\phi(\sigma,y(\sigma))| \mathrm d\sigma \mathrm ds \\
&\leq& r \int_0^r |G_\phi(\sigma,y_\Lambda(\sigma))- G_\phi(\sigma,y(\sigma))| \mathrm d\sigma,
\end{eqnarray*}
\begin{eqnarray*}
I_2 &=& \frac{4\pi}{1-\frac{s^2\Lambda}{3}-\frac{2m_\Lambda(s)}{s}}-\frac{4\pi}{1-\frac{2m(s)}{s}} \\
&\leq& 4\pi \cdot 18 \cdot 9 \cdot \left(\frac{s^2\Lambda}{3} + \frac{8\pi}{s}\int_0^s \sigma^2 |G_\phi(\sigma,y_\Lambda(\sigma))- G_\phi(\sigma,y(\sigma))|\mathrm d\sigma\right) \\
&\leq& 648\pi \left(\frac{s^2\Lambda}{3} + 8\pi s\int_0^s|G_\phi(\sigma,y_\Lambda(\sigma))-G_\phi(\sigma,y(\sigma))|\mathrm d\sigma\right).
\end{eqnarray*}
So using that $y$ is decreasing we have
\begin{eqnarray*}
&&|y_\Lambda(r) - y(r)| \\
&&\leq \Lambda \int_0^r \left(6s + 216\pi s^3 \left(H_\phi(r,y_0)+\frac 1 3 G_\phi(r,y_0)\right)\right) \mathrm ds \\
&& \qquad + 72\pi r\int_0^r|H_\phi(s,y_\Lambda(s))-H_\phi(s,y(s))|\mathrm ds \\
&& \qquad + \left(72\pi r + 5184\pi^2\frac{r^3}{3}\left(H_\phi(r,y_0)+\frac 1 3 G_\phi(r,y_0)\right)\right) \int_0^r |G_\phi(s,y_\Lambda(s))-G_\phi(s,y(s))| \mathrm ds \\
&&\leq \Lambda \left(3r^2 + 54\pi r^4 \left(H_\phi(r,y_0)+\frac 1 3 G_\phi(r,y_0)\right)\right) \\
&& \quad + \left(72\pi r + 1728\pi^2r^3\left(H_\phi(r,y_0)+\frac 1 3 G_\phi(r,y_0)\right)\right)\\
&&\quad\times  \int_0^r (|H_\phi(s,y_\Lambda(s))-H_\phi(s,y(s))|+|G_\phi(s,y_\Lambda(s))-G_\phi(s,y(s))|) \mathrm ds \\
&& \leq \Lambda C_1(r) + C_2(r) \int_0^r \left(\left|p_\Lambda(s)-p(s)\right|+\left|\varrho_\Lambda(s)-\varrho(s)\right|\right)\mathrm ds
\end{eqnarray*}
The derivatives with respect to $y$ of $G_\phi(r,y)$ and $H_\phi(r,y)$ are strictly increasing both in $r$ and $y$. And since $|y_\Lambda(r)-y(r)|\leq |y(R_0+\Delta R)|$ we can write
\begin{multline*}
 \left(\sup_{u\in[y_\Lambda(r),y(r)]} |\partial_u G_\phi(r,u)|+ \sup_{u\in[y_\Lambda(r),y(r)]} |\partial_u H_\phi(r,u)|\right) \\ \leq \left|\partial_u G_\phi(\tilde r^*,u)|_{y_0+|y(R_0+\Delta R)|}\right|+\left|\partial_u H_\phi(\tilde r^*,u)|_{y_0+|y(R_0+\Delta R)|}\right| =: C_3.
\end{multline*}
So we have obtained that equation \eqref{apeest} is of the form
\begin{equation}
\left|p_\Lambda(s)-p(s)\right|+\left|\varrho_\Lambda(s)-\varrho(s)\right| \leq C_4(r) \Lambda + C_5(r) \int_0^r \left(\left|p_\Lambda(s)-p(s)\right|+\left|\varrho_\Lambda(s)-\varrho(s)\right|\right)\mathrm ds\nonumber
\end{equation}
Note that $C_4(r)$ is strictly increasing. Gr\"onwall's inequality yields
\begin{equation} \label{defcgh}
(|\varrho_\Lambda(r)-\varrho(r)|+|p_\Lambda(r)-p(r)|) \leq C_4(r) e^{\int_0^r C_5(r) \mathrm ds} = C_4(r) \Lambda e^{rC_5(r)} =: C_{gh}(r) \Lambda.
\end{equation}
Note that $C_{gh}(r)$ is increasing when $r$ is increasing.

\vspace{0.2cm}
\textsc{H\r{a}kan Andr\'easson, Chalmers University of Technology and University of Gothenburg}\\
\texttt{hand@chalmers.de}\\
\vspace{-0.2cm}
\\
\textsc{David Fajman, University of Vienna}\\
\texttt{David.Fajman@univie.ac.at}\\
\vspace{-0.2cm}
\\
\textsc{Maximilian Thaller, University of Vienna}\\
\texttt{Maximilian.Thaller@arcor.de}

\end{document}